\documentclass[
aps,
pra,
reprint,
superscriptaddress,
nofootinbib,
twocolumn
]{revtex4-2}
\usepackage[T1]{fontenc}
\usepackage[utf8]{inputenc}
\usepackage{lmodern}
\usepackage{mathtools}
\usepackage{amssymb}
\usepackage{amsthm}
\usepackage{bm}
\usepackage{graphicx}
\usepackage{dcolumn}
\usepackage{xcolor}
\usepackage[
colorlinks=true,
citecolor=blue,
linktocpage=true
]{hyperref}
\usepackage[capitalise]{cleveref}

\newcommand{\nocontentsline}[3]{}
\newcommand{\tocless}[2]{\bgroup\let\addcontentsline=\nocontentsline#1{#2}\egroup}

\makeatletter
\renewcommand\onecolumngrid{
\do@columngrid{one}{\@ne}%
\def\set@footnotewidth{\onecolumngrid}
\def\footnoterule{\kern-6pt\hrule width 1.5in\kern6pt}%
}
\renewcommand\twocolumngrid{
        \def\footnoterule{
        \dimen@\skip\footins\divide\dimen@\thr@@
        \kern-\dimen@\hrule width.5in\kern\dimen@}
        \do@columngrid{mlt}{\tw@}
}%
\makeatother

\makeatletter
\newtheorem*{rep@theorem}{\rep@title}
\newcommand{\newreptheorem}[2]{%
\newenvironment{rep#1}[1]{%
 \def\rep@title{#2 \ref{##1}}%
 \begin{rep@theorem}}%
 {\end{rep@theorem}}}
\makeatother

\theoremstyle{plain}
\newtheorem{theorem}{Theorem}
\newtheorem{lemma}[theorem]{Lemma}
\newtheorem{proposition}[theorem]{Proposition}

\theoremstyle{definition}
\newtheorem{definition}[theorem]{Definition}
\newtheorem{assumptions}{Assumptions}

\renewcommand{\O}{\mathcal{O}}
\newcommand{\Ot}{\tilde{\mathcal{O}}}
\newcommand{\R}{\mathbb{R}}
\newcommand{\C}{\mathbb{C}}

\newcommand{\Z}{\mathbb{Z}}

\newcommand{\I}{\mathbb{I}}

\renewcommand{\i}{\mathrm{i}}
\newcommand{\Orth}{\mathrm{O}}
\newcommand{\SO}{\mathrm{SO}}
\newcommand{\SU}{\mathrm{SU}}
\newcommand{\U}{\mathrm{U}}
\newcommand{\Cl}{\mathrm{Cl}}
\newcommand{\Pauli}{\mathcal{P}}
\newcommand{\Sym}{\mathrm{Sym}}
\newcommand{\Alt}{\mathrm{Alt}}
\newcommand{\B}{\mathrm{B}}
\newcommand{\SymCl}[1]{\Cl(1)^{\otimes #1}_{\Sym}}

\DeclareMathOperator{\spn}{span}
\DeclareMathOperator*{\E}{\mathbb{E}}
\DeclareMathOperator{\V}{Var}

\DeclareMathOperator{\sgn}{sign}
\DeclareMathOperator{\tr}{tr}
\newcommand{\comb}[2]{\mathcal{C}_{#1, #2}}
\newcommand{\diags}[2]{\mathcal{D}_{#1, #2}}
\newcommand{\op}[2]{\ket{#1}\!\bra{#2}}
\newcommand{\ip}[2]{\langle #1 | #2 \rangle}
\newcommand{\ev}[3]{\langle #1 | #2 | #3 \rangle}
\newcommand{\ket}[1]{| #1 \rangle}
\newcommand{\bra}[1]{\langle #1 |}

\newcommand{\rrangle}{\rangle \! \rangle}
\newcommand{\llangle}{\langle \! \langle}
\newcommand{\kket}[1]{| #1 \rrangle}
\newcommand{\bbra}[1]{\llangle #1 |}
\newcommand{\vop}[2]{\kket{#1} \! \bbra{#2}}
\newcommand{\vip}[2]{\llangle #1 | #2 \rrangle}
\newcommand{\vev}[3]{\llangle #1 | #2 | #3 \rrangle}

\newcommand{\sn}[1]{\| #1 \|_{\mathrm{shadow}}}
\newcommand{\sns}[2]{\| #1 \|_{#2}}
\renewcommand{\l}[1]{\mathopen{}\left#1}
\renewcommand{\r}[1]{\right#1\mathclose{}}

\newcommand{\SWAP}{\mathrm{SWAP}}
\newcommand{\iSWAP}{\i\SWAP}
\newcommand{\fSim}{\mathrm{fSim}}
\newcommand{\PhXZ}{\mathrm{PhXZ}}

\begin{document}

\title{Group-theoretic error mitigation enabled by classical shadows and symmetries}

\author{Andrew Zhao}
\email{azhao@unm.edu}
\affiliation{Center for Quantum Information and Control, Department of Physics and Astronomy, University of New Mexico, Albuquerque, New Mexico 87106, USA}

\author{Akimasa Miyake}
\email{amiyake@unm.edu}
\affiliation{Center for Quantum Information and Control, Department of Physics and Astronomy, University of New Mexico, Albuquerque, New Mexico 87106, USA}

\date{\today}

\begin{abstract}
Estimating expectation values is a key subroutine in quantum algorithms. Near-term implementations face two major challenges:~a limited number of samples required to learn a large collection of observables, and the accumulation of errors in devices without quantum error correction. To address these challenges simultaneously, we develop a quantum error-mitigation strategy called \emph{symmetry-adjusted classical shadows}, by adjusting classical-shadow tomography according to how symmetries are corrupted by device errors. As a concrete example, we highlight global $\mathrm{U(1)}$ symmetry, which manifests in fermions as particle number and in spins as total magnetization, and illustrate their group-theoretic unification with respective classical-shadow protocols. We establish rigorous sampling bounds under readout errors obeying minimal assumptions, and perform numerical experiments with a more comprehensive model of gate-level errors derived from existing quantum processors. Our results reveal symmetry-adjusted classical shadows as a low-cost strategy to mitigate errors from noisy quantum experiments in the ubiquitous presence of symmetry.
\end{abstract}

\maketitle

\let\oldaddcontentsline\addcontentsline
\renewcommand{\addcontentsline}[3]{}

\section{Introduction}

Quantum computers are highly susceptible to errors at the hardware level, posing a considerable challenge to realize meaningful applications in the so-called noisy intermediate-scale quantum (NISQ) era~\cite{preskill2018quantum,bharti2022noisy}. One particularly promising and natural candidate for NISQ applications is the simulation of quantum many-body physics and chemistry~\cite{feynman1982simulating,georgescu2014quantum,mcardle2020quantum,bauer2020quantum}. In order to minimize the accumulation of errors, such algorithms prioritize low-depth circuits, for instance, variational quantum circuits~\cite{peruzzo2014variational,mcclean2016theory,yuan2019theory,cerezo2021variational}. However, in order to exhibit quantum advantage, these circuits must also be beyond the capabilities of classical simulation~\cite{osborne2006efficient,bravyi2021classical,napp2022efficient,wild2023classical}, resulting in noise levels that nonetheless corrupt the calculations.

While quantum error correction is the long-term solution, current state-of-the-art hardware is still a few orders of magnitude from achieving scalable, fault-tolerant quantum computation~\cite{fowler2012surface,kelly2015state,egan2021fault,postler2022demonstration,zhao2022realization,sundaresan2023demonstrating,google2023suppressing,sivak2023real,ni2023beating}. In the meantime, there have been considerable theoretical and experimental efforts probing the beyond-classical potential of NISQ computers~\cite{omalley2016scalable,kandala2017hardware,colless2018computation,dumitrescu2018cloud,hempel2018quantum,kandala2019error,kokail2019self,nam2020ground,arute2019quantum,arute2020hartree,harrigan2021quantum,arute2020observation,zhong2020quantum,huggins2022unbiasing,kim2023scalable,huang2022quantum,stanisic2022observing,tazhigulov2022simulating,madsen2022quantum,motta2023quantum,obrien2023purification,morvan2023phase,kim2023evidence}. Should such an application be demonstrated, quantum error mitigation (QEM) is expected to play a crucial role. Broadly speaking, QEM aims to approximately recover the output of an ideal quantum computation, given only access to noisy quantum devices and offline classical resources. We refer the reader to Refs.~\cite{endo2021hybrid,cai2022quantum} for a review of prominent concepts and strategies in QEM.

A related but separate challenge for NISQ algorithms is the need to learn many observables in a rudimentary fashion, i.e., by repeatedly running and sampling from quantum circuits. The number of repetitions required can be immense, both to suppress shot noise and to handle the measurement of noncommuting observables~\cite{wecker2015progress,gonthier2022measurements}. One particularly promising approach is that of classical shadows~\cite{huang2020predicting,paini2021estimating}. In contrast to prior measurement strategies~\cite{cotler2020quantum,bonet2019nearly,cerezo2021variational,tilly2022variational}, classical shadows are remarkably simple to implement and have been shown to exhibit optimal sample complexity in certain important scenarios~\cite{huang2020predicting,zhao2023learning}.

Classical shadows were developed primarily from the union of two themes in quantum learning theory:~linear-inversion estimators for state tomography~\cite{sugiyama2013precision,guta2020fast} (closed-form solutions that admit fast postprocessing and rigorous guarantees) and the framework of shadow tomography~\cite{aaronson2020shadow,aaronson2019gentle} (predict only a subset of observables, not the entire density matrix). The result is a simple but powerful protocol that accurately estimates a large collection of observables from relatively few samples. In terms of quantum resources, classical shadows only require the ability to measure in randomly selected bases, making the protocol particularly amenable to NISQ constraints. These desirable features have inspired a wide range of extensions and applications, 
for example:~entanglement detection~\cite{elben2020mixed}, quantum Fisher information bounds~\cite{rath2021quantum,vitale2023estimation}, learning quantum processes~\cite{levy2021classical,kunjummen2023shadow}, navigating variational landscapes~\cite{sack2022avoiding,boyd2022training}, energy-gap estimation~\cite{chan2022algorithmic}, and applications to fermions~\cite{zhao2021fermionic,wan2023matchgate,ogorman2022fermionic,low2022classical,babbush2023quantum,denzler2023learning} and bosons~\cite{gu2023efficient,becker2022classical}. For an overview of classical shadows and randomized measurement strategies, see Ref.~\cite{elben2023randomized}.

Due to their experimental friendliness and versatile prediction power, classical shadows naturally have been considered for QEM as well. For example, Refs.~\cite{seif2023shadow,hu2022logical} used classical shadows to approximately project a noisy quantum state toward a target subspace via classical postprocessing, the subspaces being either the logical subspace of an error-correcting code~\cite{mcclean2020decoding} and/or the dominant eigenvector (purification) of the noisy mixed state~\cite{koczor2021exponential,huggins2021virtual}. These shadow-based ideas circumvent some of the difficulties of performing subspace projection, at the cost of an exponential sample complexity. Meanwhile, Ref.~\cite{jnane2023quantum} intertwined classical shadows with other popular QEM strategies, with a particular focus on probabilistic error cancellation~\cite{temme2017error}. They establish rigorous estimators and performance guarantees, assuming an accurate characterization of the noisy quantum device. Finally, Refs.~\cite{chen2021robust,koh2022classical} described modifications to the classical linear-inversion step in order to mitigate errors in the randomized measurements. In particular, robust shadow estimation~\cite{chen2021robust} assumes no prior knowledge of the noise, instead implementing a separate calibration experiment that learns the necessary noise features.

In this paper, we take this latter perspective~\cite{chen2021robust,koh2022classical}, with an eye on a more comprehensive mitigation of errors beyond readout errors. We introduce a QEM protocol, which we refer to as \emph{symmetry-adjusted classical shadows}, that takes advantage of known symmetries in the quantum system of interest. For example, in simulations of chemistry, the number of electrons is typically fixed. The corruption of such symmetries by noise informs us how to undo the effects of that noise. Crucially, because randomized measurements scramble the information, the other properties of the quantum system are corrupted (and therefore can be mitigated) in the same manner. Using these insights, symmetry-adjusted classical shadows appropriately modifies the linear-inversion based on the symmetry information alone.

A notable advantage of our protocol is that we do not run any extraneous calibration experiments. This has the added benefit of inherently accounting for errors that occur throughout the full quantum circuit, rather than the randomized measurements in isolation~\cite{karalekas2020quantum,chen2021robust,koh2022classical,van2022model,arrasmith2023development}. Also, the simplicity of the protocol allows for additional QEM techniques to be straightforwardly applied in tandem. Finally, in contrast to other symmetry-based ideas~\cite{bonet2018low,mcardle2019error,cai2021quantum,jnane2023quantum}, our approach goes beyond the concept of symmetry projection, instead utilizing a unified group-theoretic understanding of classical shadows in conjunction with symmetries. We expound on this distinction in Supplementary Section~\ref{sec:symmetry_qem_discussion}, wherein we review these prior symmetry-based QEM techniques.

This paper is structured as follows. In Section~\ref{sec:summary_of_results}, we begin by establishing preliminaries and background material (Sec.~\ref{sec:background}). We then introduce our main contribution, symmetry-adjusted classical shadows, and describe its key application for mitigating local fermionic and qubit observables (Sec.~\ref{subsec:EM_shadows_summary}). We follow by highlighting additional technical results:~a modification to random Pauli measurements required to tailor its irreps for use with common symmetries, called subsystem-symmetrized Pauli shadows (Sec.~\ref{subsec:sym_pauli_summary});~a symmetry adaptation to fermionic classical shadows which reduces the quantum resources required, applicable to fermionic systems with spin symmetry (Sec.~\ref{subsec:spin-adapt_summary});~and an improved design for compiling fermionic Gaussian unitaries with lower circuit depth and fewer gates than prior art (Section~\ref{subsec:improved_circuit_summary}). Finally, we close the section with a series of numerical experiments, demonstrating the effectiveness of our error-mitigation protocol under realistic scenarios (Sec.~\ref{sec:numerics}). This includes simulations of a noise model based on existing superconducting-qubit platforms~\cite{isakov2021simulations}. In Section~\ref{sec:discussion}, we summarize our findings and discuss future prospects. In Section~\ref{sec:methods}, we illustrate the general theory of symmetry-adjusted classical shadows, and we provide further technical details regarding the applications to fermion and qubit systems with global $\U(1)$ symmetries. Details regarding the mathematical proofs and numerical simulations are provided in the \hyperlink{supplementary_material}{Supplementary Information}, and code for the latter is available at our open-source repository ({\small\url{https://github.com/zhao-andrew/symmetry-adjusted-classical-shadows}}).

\section{Results}\label{sec:summary_of_results}

\subsection{Background}\label{sec:background}

First, we provide a review of classical shadows~\cite{huang2020predicting,paini2021estimating} and robust shadow estimation~\cite{chen2021robust} necessary to understand our technical results. Readers familiar with this background material can skip to Section~\ref{subsec:EM_shadows_summary}, after familiarizing themselves with the notation that we establish below.\\

\textbf{Notation and preliminaries.} For any integer $N > 1$, we define $[N] \coloneqq \{0, \ldots, N-1\}$ (note that we index starting from $0$). We use $\i \equiv \sqrt{-1}$ for the imaginary unit.

Throughout this paper, we consider an $ n $-qubit system with Hilbert space $ \mathcal{H} \coloneqq (\C^{2})^{\otimes n} $. Its dimension is denoted by $ d \equiv 2^n $ unless otherwise specified. We often work with the space of linear operators $ \mathcal{L}(\mathcal{H}) \cong \C^{d \times d} $ as a vector space, so it will be convenient to employ the Liouville representation:~for any operator $ A \in \mathcal{L}(\mathcal{H}) $, its vectorization $\kket{A} \in \C^{d^2}$ in some orthonormal operator basis $\{B_1, \ldots, B_{d^2} : \tr(B_i^\dagger B_j) = \delta_{ij}\}$ is defined by the components $\vip{B_i}{A} \coloneqq \tr(B_i^\dagger A)$. Under this representation, superoperators are mapped to $ d^2 \times d^2 $ matrices:~any $\mathcal{E} \in \mathcal{L}(\mathcal{L}(\mathcal{H}))$ can be specified by its matrix elements $ \mathcal{E}_{ij} \coloneqq \vev{B_i}{\mathcal{E}}{B_j} = \tr(B_i^\dagger \mathcal{E}(B_j)) $. We let $\mathcal{E}$ denote both the superoperator and its matrix representation, and in a similar fashion we sometimes write $\kket{A} = A$.

For systems of qubits, the normalized Pauli operators $\Pauli(n) / \sqrt{d}$ are a convenient basis for $\mathcal{L}(\mathcal{H})$, where
\begin{equation}
    \Pauli(n) \coloneqq \{\I, X, Y, Z\}^{\otimes n}.
\end{equation}
This choice is called the Pauli transfer matrix (PTM) representation. The weight, or locality, of a Pauli operator $P \in \Pauli(n)$ is the number of its nontrivial tensor factors, denoted by $|P|$. For each $i \in [n]$, we define $W_i \in \Pauli(n)$ which acts as $W \in \{X, Y, Z\}$ on the $i$th qubit and trivially on the rest of the system.

For fermions in second quantization, a natural choice of basis is the set of Majorana operators, defined as $\{ \Gamma_{\bm{\mu}} / \sqrt{d} : \bm{\mu} \subseteq [2n] \}$ where
\begin{equation}\label{eq:majorana_def}
    \Gamma_{\bm{\mu}} \coloneqq (-\i)^{\binom{|\bm{\mu}|}{2}} \prod_{\mu \in \bm{\mu}} \gamma_{\mu}.
\end{equation}
The Hermitian generators $\{ \gamma_\mu : \mu \in [2n] \} \subset \mathcal{L}(\mathcal{H})$ obey the anticommutation relation $\gamma_\mu \gamma_\nu + \gamma_\nu \gamma_\mu = 2\delta_{\mu\nu} \I$ (we will use $\I$ to denote any identity operator whose dimension is clear from context). They are related to the fermionic creation and annihilation operators $a_p^\dagger, a_p$ via
\begin{equation}
    \gamma_{2p} = a_p + a_p^\dagger, \quad \gamma_{2p+1} = -\i(a_p - a_p^\dagger).
\end{equation}
By convention, the elements of $\bm{\mu}$ and the product in Eq.~\eqref{eq:majorana_def} are in strictly ascending order. We call $|\bm{\mu}|$ the degree of $\Gamma_{\bm{\mu}}$, or equivalently refer to them as ($|\bm{\mu}|/2$)-body operators whenever the degree is even. It is straightforward to check that Majorana operators are isomorphic to Pauli operators, in particular satisfying the orthogonality relation $\vip{\Gamma_{\bm{\mu}}}{\Gamma_{\bm{\nu}}} = d \delta_{\bm{\mu} \bm{\nu}}$.

For any unitary $U \in \U(d)$, its corresponding channel is denoted by $\mathcal{U}(\cdot) \coloneqq U (\cdot) U^\dagger$. For any $\ket{\varphi} \in \mathcal{H}$, $\kket{\varphi}$ is the vectorization of $\op{\varphi}{\varphi}$. We use tildes to indicate objects affected by quantum noise, e.g., $\widetilde{\mathcal{U}}$ denotes a noisy implementation of the $\mathcal{U}$. Hats indicate statistical estimators, e.g., $\hat{o}$ denotes an estimate for $o = \tr(O \rho)$. Asymptotic upper and lower bounds are denoted by $\O(\cdot)$ and $\Omega(\cdot)$ respectively, and $f(x) = \Theta(g(x))$ means that $f(x)$ is both $\O(g(x))$ and $\Omega(g(x))$.\\

\textbf{Classical shadows.} We summarize the method of classical shadows as formalized by Huang \emph{et al.}~\cite{huang2020predicting}, borrowing the PTM language of Chen \emph{et al.}~\cite{chen2021robust} which will make the robust extension clear later. Our task is to estimate the expectation values $\tr(O_j \rho) = \vip{O_j}{\rho}$ of a collection of $L$ observables $O_1, \ldots, O_L \in \mathcal{L}(\mathcal{H})$, ideally using as few copies of $\rho$ as possible. Classical shadows is based on a simple measurement primitive:~for each copy of $ \rho $, apply a unitary $U$ randomly drawn from a distribution of unitaries and measure in the computational basis. This produces a sample $ b \in \{0,1\}^n $ with probability $ \vev{b}{\mathcal{U}}{\rho} $. One then inverts the unitary on the outcome $\ket{b}$ in postprocessing, which amounts to storing a classical representation of $ U^\dagger \ket{b} $.

The unitary distribution determines the efficiency of this protocol with respect to the properties of interest. Throughout this paper, we assume that the distribution is a finite group equipped with the uniform probability distribution (it is straightforward to generalize to compact groups, using their Haar measures). Specifically, let $ U : G \to \U(\mathcal{H}) $ be a unitary representation of a group $G$. The measurement primitives averaged over all random unitaries and measurement outcomes implement the quantum channel
\begin{equation}\label{eq:M_channel}
    \mathcal{M} \coloneqq \E_{g \sim G} \mathcal{U}_g^\dagger \mathcal{M}_Z \mathcal{U}_g \equiv \frac{1}{|G|} \sum_{g \in G} \mathcal{U}_g^\dagger \mathcal{M}_Z \mathcal{U}_g,
\end{equation}
where
\begin{equation}
    \mathcal{M}_Z = \sum_{b \in \{0,1\}^n} \vop{b}{b}
\end{equation}
describes the effective process of computational-basis measurements. The channel $ \mathcal{U}_g $ is the random unitary acting on the target state $ \rho $, while $ \mathcal{U}_g^\dagger $ is its classically computed inversion on the measurement outcomes $ \kket{b} $. Thus in expectation we produce the state
\begin{equation}\label{eq:M_rho_def}
    \mathcal{M}\kket{\rho} = \E_{g \sim G, b \sim \mathcal{U}_g \kket{\rho}} \mathcal{U}_g^\dagger \kket{b}.
\end{equation}
If $\mathcal{M}$ is invertible (corresponding to informational completeness of the measurement primitive), then applying $ \mathcal{M}^{-1} $ to Eq.~\eqref{eq:M_rho_def} recovers the state:
\begin{equation}
    \kket{\rho} = \mathcal{M}^{-1} \mathcal{M}\kket{\rho} = \E_{g \sim G, b \sim \mathcal{U}_g \kket{\rho}} \mathcal{M}^{-1} \mathcal{U}_g^\dagger \kket{b}.
\end{equation}
The objects $ \kket{\hat{\rho}_{g,b}} \coloneqq \mathcal{M}^{-1} \mathcal{U}_g^\dagger \kket{b} $ are called the classical shadows of $ \kket{\rho} $, for which they serve as unbiased estimators. Hence by construction they can predict expectation values,
\begin{equation}
    \E_{g \sim G, b \sim \mathcal{U}_g \kket{\rho}} \vip{O_j}{\hat{\rho}_{g,b}} = \vip{O_j}{\rho},
\end{equation}
as well as nonlinear functions of $\rho$~\cite{huang2020predicting}. While $ \mathcal{M}^{-1} $ is not a physical map (it is not completely positive), it only appears as classical postprocessing. Such a computation can be accomplished, for instance, by first deriving a closed-form expression for $ \mathcal{M} $.

One systematic approach to deriving such an expression is through the representation theory of $ G $. First, note that the $ d $-dimensional unitary $ U $ is promoted to a $ d^2 $-dimensional representation $ \mathcal{U} $. Equation~\eqref{eq:M_channel} reveals that $ \mathcal{M} $ is a twirl of $ \mathcal{M}_Z $ by the group $ G $ under the action of $ \mathcal{U} $. Such objects are well studied:~assuming that the irreducible components of $ \mathcal{U} $ have no multiplicities, an application of Schur's lemma implies that~\cite{fulton2004representation}
\begin{equation}\label{eq:channel_diag}
    \mathcal{M} = \sum_{\lambda \in R_G} f_\lambda \Pi_\lambda.
\end{equation}
Note that the general expression with multiplicities can be found in Ref.~\cite[Eq.~(A6)]{chen2021robust}. Here, $ R_G $ is the set of labels $ \lambda $ for the irreducible representations (irreps) of $ G $. The superoperators $ \Pi_\lambda $ are orthogonal projectors onto the irreducible subspaces $V_\lambda \subseteq \mathcal{L}(\mathcal{H}) $. Choosing an orthonormal basis $ \{\kket{B_\lambda^j} : j = 1,\ldots,\dim V_\lambda\} $ for each subspace, we can write the projectors as
\begin{equation}\label{eq:proj_lambda}
    \Pi_\lambda = \sum_{j=1}^{\dim V_\lambda} \vop{B_\lambda^j}{B_\lambda^j}.
\end{equation}
The eigenvalues $ f_\lambda $ of $ \mathcal{M} $ can be computed using the orthogonality of projectors:
\begin{equation}\label{eq:channel_eigval}
    f_\lambda = \frac{\tr(\mathcal{M}_Z \Pi_\lambda)}{\tr(\Pi_\lambda)}.
\end{equation}
Note that $\tr(\Pi_\lambda) = \dim V_\lambda$. From this diagonalization, we immediately acquire an expression for the desired inverse:
\begin{equation}\label{eq:inverse_channel}
    \mathcal{M}^{-1} = \sum_{\lambda \in R_G} f_\lambda^{-1} \Pi_\lambda.
\end{equation}
If some $f_\lambda = 0$, then we may instead define 
$\mathcal{M}^{-1}$ as the pseudoinverse on the subspaces where $ f_\lambda $ is nonvanishing. This implies that the measurement primitive is informationally complete only within those subspaces.

To analyze the sample efficiency of this protocol, suppose we have performed $T$ experiments, yielding a collection of independent classical shadows $\hat{\rho}_1, \ldots, \hat{\rho}_T$ where each $\kket{\hat{\rho}_\ell} = \mathcal{M}^{-1} \mathcal{U}_{g_\ell}^\dagger \kket{b_\ell}$. From this data we can construct estimates
\begin{equation}\label{eq:noiseless_estimator}
    \hat{o}_j(T) = \frac{1}{T} \sum_{\ell=1}^T \vip{O_j}{\hat{\rho}_{\ell}},
\end{equation}
which by linearity converge to $\tr(O_j \rho)$. The single-shot variance of $\hat{o}_j$ can be bounded in terms of the so-called shadow norm:
\begin{equation}
\begin{split}
    \V[\hat{o}_j] &\leq \max_{\text{states } \sigma} \E_{g \sim G, b \sim \mathcal{U}_g \kket{\sigma}} \vev{O_j}{\mathcal{M}^{-1} \mathcal{U}_g^\dagger}{b}^2\\
    &\eqqcolon \sns{O_j}{\mathrm{shadow}}^2.
\end{split}
\end{equation}
This variance controls the prediction error, rigorously established via probability tail bounds. In particular, taking a number of samples
\begin{equation}
    T = \O\l( \frac{\log(L/\delta)}{\epsilon^2} \max_{1 \leq j \leq L} \sns{O_j}{\mathrm{shadow}}^2 \r)
\end{equation}
ensures that, with probability at least $1 - \delta$, each estimate exhibits at most $\epsilon$ additive error:
\begin{equation}
    |\hat{o}_j(T) - \vip{O_j}{\rho}| \leq \epsilon.
\end{equation}
Note that for simplicity we employ the mean estimator throughout this paper, which suffices whenever the ensemble is either local Cliffords or matchgates and the observables are Pauli or Majorana operators~\cite[Supplemental Material, Theorem~12]{zhao2021fermionic}. In general, a median-of-means estimator can guarantee the advertised sample complexity regardless of ensemble.

Finally, we comment on the classical computation of $\hat{o}_j$. In order to evaluate Eq.~\eqref{eq:noiseless_estimator}, one may use Eqs.~\eqref{eq:proj_lambda} and \eqref{eq:inverse_channel} to express the $\ell$th-sample estimate as
\begin{equation}\label{eq:shadow_estimate_formula}
    \vip{O_j}{\hat{\rho}_\ell} = \sum_{\lambda \in R_G} f_\lambda^{-1} \sum_{k=1}^{\dim V_\lambda} \vip{O_j}{B_\lambda^k} \vev{B_\lambda^k}{\mathcal{U}_{g_\ell}^\dagger}{b_\ell}.
\end{equation}
Thus it suffices to be able to efficiently compute the expansion coefficients $\vip{O_j}{B_\lambda^k} = \tr(O_j B_\lambda^k)$ of the observable $O_j$ in a basis of $V_\lambda$, as well as the matrix elements $\vev{B_\lambda^k}{\mathcal{U}_g^\dagger}{b} = \ev{b}{U_g (B_\lambda^k)^\dagger U_g^\dagger}{b}$. Note that this does not require explicitly representing the classical shadow $\mathcal{M}^{-1} \mathcal{U}_g^\dagger \kket{b}$;~we only need to determine the diagonal entry of the rotated operator $U_g (B_\lambda^k)^\dagger U_g^\dagger$ for a given basis state $\ket{b}$.\\

\textbf{Robust shadow estimation.} We now summarize the robust shadow estimation protocol by Chen \emph{et al.}~\cite{chen2021robust};~we note that Refs.~\cite{karalekas2020quantum,van2022model,arrasmith2023development} describe analogous ideas in the case of random single-qubit measurements. The basic premise is the fact that Schur's lemma applies to the twirl of any channel, not just $\mathcal{M}_Z$. Suppose that instead of $ \mathcal{U}_g $, the quantum computer implements a noisy channel $ \widetilde{\mathcal{U}}_g $ which obeys the following assumptions:
\begin{assumptions}[{\cite[Simplifying noise assumption \textbf{A1}]{chen2021robust}}]
\label{assumption_1}
The noise in $ \widetilde{\mathcal{U}}_g $ is gate independent, time stationary, and Markovian. Hence there exists the decomposition $ \widetilde{\mathcal{U}}_g = \mathcal{E} \mathcal{U}_g $, where $ \mathcal{E} $ is a completely positive, trace-preserving map, independent of both the ideal unitary and the experimental time.
\end{assumptions}
They also assume the ability to prepare the state $\ket{0^n}$ with sufficiently high fidelity. Given these conditions, the noisy version of the shadow channel implemented in experiment becomes
\begin{equation}
    \widetilde{\mathcal{M}} \coloneqq \E_{g \sim G} \mathcal{U}_g^\dagger \mathcal{M}_Z \widetilde{\mathcal{U}}_g = \frac{1}{|G|} \sum_{g \in G} \mathcal{U}_g^\dagger \mathcal{M}_Z \mathcal{E} \mathcal{U}_g,
\end{equation}
which is now a twirl over the composite channel $ \mathcal{M}_Z \mathcal{E} $. Although $\mathcal{E}$ is unknown, Schur's lemma implies that the eigenbasis is preserved, as we now have
\begin{equation}\label{eq:noisy_channel_diag}
    \widetilde{\mathcal{M}} = \sum_{\lambda \in R_G} \widetilde{f}_\lambda \Pi_\lambda,
\end{equation}
where the eigenvalues depend on $\mathcal{E}$,
\begin{equation}\label{eq:noisy_channel_eigval}
    \widetilde{f}_\lambda = \frac{\tr(\mathcal{M}_Z \mathcal{E} \Pi_\lambda)}{\tr(\Pi_\lambda)}.
\end{equation}
Therefore if one knows $\widetilde{f}_\lambda$, then one can perform the correct linear inversion in the presence of noise, i.e., by replacing $f_\lambda^{-1}$ with $\widetilde{f}_\lambda^{-1}$ in Eq.~\eqref{eq:shadow_estimate_formula}.

Because $ \mathcal{E} $ depends on the details of the quantum hardware, it is not possible to determine $ \widetilde{f}_\lambda $ without an \emph{a priori} accurate characterization of the noise. Absent such information, a calibration protocol is proposed to experimentally estimate the value of $ \widetilde{f}_\lambda $. This proceeds by performing the classical shadows protocol on a fiducial state $\ket{0^n}$, rather than the unknown target state $\rho$. This enables the study of errors in the random circuits $U_g$. Because $\ket{0^n}$ is known exactly, one can compare its noiseless properties against the noisy experimental data to determine a calibration factor.

Specifically, Chen \emph{et al.}~\cite{chen2021robust} construct an estimator $\mathrm{NoiseEst}_G(\lambda, g, b)$ for each sample $(U_g, b)$ of the calibration experiment, which converges to $\widetilde{f}_\lambda$ in expectation over $g$ and $b$. Although they do not prescribe a generic expression for $\mathrm{NoiseEst}_G$ (instead considering particular choices of $G$), it is straightforward to derive one following their ideas. Let $D_\lambda \in V_\lambda$ be an observable supported exclusively by a single irrep such that $\ev{0^n}{D_\lambda}{0^n} \neq 0$. Then we have
\begin{equation}
    \vev{D_\lambda}{\widetilde{\mathcal{M}}}{0^n} = \widetilde{f}_\lambda \ev{0^n}{D_\lambda}{0^n}.
\end{equation}
On the other hand, using the fact that
\begin{equation}
\begin{split}
    \vev{D_\lambda}{\widetilde{\mathcal{M}}}{0^n} &= \bbra{D_\lambda} \E_{g \sim G, b \sim \mathcal{U}_g \kket{0^n}} \mathcal{U}_g^\dagger \kket{b}\\
    &= \E_{g \sim G, b \sim \mathcal{U}_g \kket{0^n}} \ev{b}{U_g D_\lambda U_g^\dagger}{b},
\end{split}
\end{equation}
it follows that the random variable
\begin{equation}\label{eq:noiseest}
    \mathrm{NoiseEst}_G(\lambda, g, b) = \frac{\ev{b}{U_g D_\lambda U_g^\dagger}{b}}{\ev{0^n}{D_\lambda}{0^n}}
\end{equation}
obeys $\E_{g,b}\l[\mathrm{NoiseEst}_G(\lambda, g, b)\r] = \widetilde{f}_\lambda$.

One can recover the definitions for $\mathrm{NoiseEst}_G$ introduced by Chen \emph{et al.}~\cite{chen2021robust} as follows. The global Clifford group $\Cl(n)$ has two irreps:~the span of the identity operator, $V_0 = \spn\{\I\}$ (which is trivial), and its orthogonal complement $V_1 = V_0^\perp$ (the set of all traceless operators). Choosing $D_1 = d \op{0^n}{0^n} - \I$ gives
\begin{equation}
    \mathrm{NoiseEst}_{\Cl(n)}(1, U, b) = \frac{d |\ev{b}{U}{0^n}|^2 - 1}{d - 1},
\end{equation}
where $U \in \Cl(n)$.

On the other hand, the local Clifford group $\Cl(1)^{\otimes n}$ has $2^n$ irreps, labeled by all subsets $I \subseteq [n]$. Each $I$ indexes a subsystem of qubits, and each subspace $V_I$ is the span of all $n$-qubit Pauli operators which act nontrivially on exactly that subsystem. Defining
\begin{equation}
    D_I \coloneqq \prod_{i \in I} Z_i,
\end{equation}
one obtains
\begin{equation}
\begin{split}
    \mathrm{NoiseEst}_{\Cl(1)^{\otimes n}}(I, U, b) &= \frac{\ev{b}{U D_I U^\dagger}{b}}{\ev{0^n}{D_I}{0^n}}\\
    &= \prod_{i \in I} \ev{b_i}{C_i Z C_i^\dagger}{b_i}
\end{split}
\end{equation}
where now $U = \bigotimes_{i \in [n]} C_i \in \Cl(1)^{\otimes n}$.

Any QEM strategy necessarily incurs a sampling overhead dependent on the amount of noise~\cite{takagi2022fundamental,takagi2022universal,tsubouchi2022universal,quek2022exponentially}. For global Clifford shadows, Chen \emph{et al.}~\cite{chen2021robust} show that the sample complexity is augmented by a factor of $\O(F_Z(\mathcal{E})^{-2})$ for estimating observables with constant Hilbert--Schmidt norm, where $F_Z(\mathcal{E}) = 2^{-n} \sum_{b \in \{0,1\}^n} \vev{b}{\mathcal{E}}{b}$ is the average $Z$-basis fidelity of $\mathcal{E}$. Meanwhile for local Clifford shadows, they prove that product noise of the form $\mathcal{E} = \bigotimes_{i\in[n]} \mathcal{E}_i$, satisfying $\min_{i \in [n]} F_Z(\mathcal{E}_i) \geq 1 - \xi$, exhibits an overhead factor of $e^{\O(k\xi)}$ for estimating $k$-local qubit observables.

\subsection{Symmetry-adjusted classical shadows}\label{subsec:EM_shadows_summary}

\begin{figure*}
\centering
\includegraphics[scale=0.5]{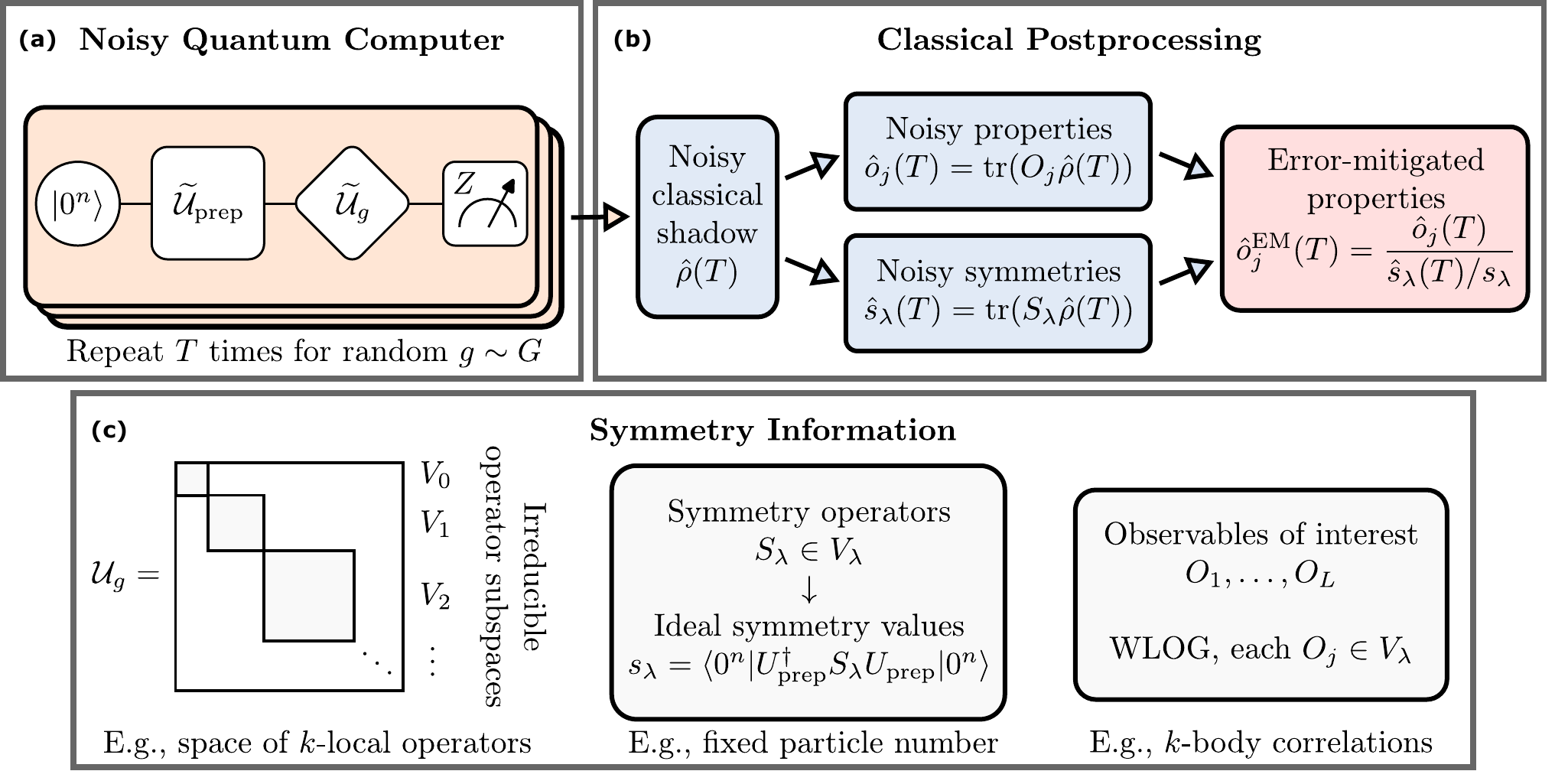}
\caption{\textbf{Schematic of the symmetry-adjusted classical shadows protocol.} \textbf{(a)} Given an ideal unitary $\mathcal{U}(\cdot) = U(\cdot)U^\dagger$, its noisy implementation is denoted by $\widetilde{\mathcal{U}}$. Assuming the target state $\rho = \mathcal{U}_{\mathrm{prep}}(\op{0^n}{0^n})$ obeys certain symmetries $S_\lambda$, \textbf{(b)} we can construct error-mitigated estimates using classical shadows produced by the noisy quantum computer. (While we depict the preparation of a pure state here, our formalism is equally valid if the target state is mixed.) In contrast to prior approaches that only address the noise in $\widetilde{\mathcal{U}}_g$, our protocol additionally incorporates the errors within $\widetilde{\mathcal{U}}_{\mathrm{prep}}$. \textbf{(c)} Our method is applicable whenever the symmetry is compatible with the irreps $V_\lambda$ of the group $G$ describing the classical shadows protocol.}
\label{fig:cartoon_flowchart}
\end{figure*}

The primary contribution of this paper, symmetry-adjusted classical shadows, is visualized in Figure~\ref{fig:cartoon_flowchart}. We describe it in detail now. Consider a classical shadows protocol over $G$ with target observables $O_1, \ldots, O_L$. Without loss of generality, let each $O_j \in V_\lambda$ for some subset of irreps $\lambda \in R' \subseteq R_G$. Suppose the experiment experiences an unknown noise channel $\mathcal{E}$ obeying Assumptions~\ref{assumption_1}.

We show that, if $\rho$ obeys symmetries which are ``compatible'' with the irreps in $R'$, then it is possible to construct an estimator which accurately predicts the ideal, noiseless observables. By compatible, we mean that there exist symmetry operators $S_\lambda \in V_\lambda$ for each $\lambda \in R'$ for which their ideal expectation values
\begin{equation}
    s_\lambda \coloneqq \tr(S_\lambda \rho)
\end{equation}
are known \emph{a priori}. In general, there is no reason to expect that a physical system has symmetries which exactly fit into the irreps of a classical-shadow measurement scheme. However, given a symmetry operator $S$, it is always possible to project it to $V_\lambda$ using the superoperator projector $\Pi_\lambda$, i.e., $S_\lambda = \Pi_\lambda(S)$.

Then, using noisy classical shadows $\hat{\rho}(T)$ of size $T$, we construct error-mitigated estimates as
\begin{equation}
    \hat{o}_j^{\mathrm{EM}}(T) \coloneqq \frac{\tr(O_j \hat{\rho}(T))}{\tr(S_\lambda \hat{\rho}(T)) / s_\lambda}.
\end{equation}
We find that the relevant noise characterization in this scenario is
\begin{equation}
    F_{Z, R'}(\mathcal{E}) \coloneqq \min_{\lambda \in R'} \frac{\tr(\mathcal{E} \mathcal{M}_Z \Pi_\lambda)}{\tr(\mathcal{M}_Z \Pi_\lambda)},
\end{equation}
which can be seen as a generalization of the noise fidelity $F_Z(\mathcal{E})$ described in Section~\ref{sec:background}. Here, $F_{Z, R'}(\mathcal{E})$ only considers how the noise channel acts within the irreducible subspaces of interest.

As two key applications, we study how symmetry-adjusted classical shadows perform in simulations of fermionic and qubit systems. For fermions, we consider $G$ corresponding to fermionic Gaussian unitaries~\cite{zhao2021fermionic} (also known as matchgate shadows~\cite{wan2023matchgate}). We establish the following performance bound for fermionic systems with particle-number symmetry, $N = \sum_{p \in [n]} a_p^\dagger a_p$.

\begin{theorem}[Fermions with particle-number symmetry, informal]\label{thm:informal_fermion_result}
    Let $\rho$ be an $n$-mode state with $\tr(N \rho) = \eta$ fermions. Under the noise model $\mathcal{E}$ satisfying Assumptions~\ref{assumption_1} and assuming $\eta = \O(n)$, matchgate shadows of size
    \begin{equation}
    T = \O( n^2 \log(n) \epsilon^{-2} F_{Z,\{2,4\}}(\mathcal{E})^{-2} )
    \end{equation}
    suffice to achieve prediction error
    \begin{equation}
    |\hat{o}_j(T) - \tr(O_j \rho)| \leq \epsilon + \O(\epsilon^2)
    \end{equation}
    with high probability, where the observables $O_j$ can be taken as all one- and two-body Majorana operators.
\end{theorem}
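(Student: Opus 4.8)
\emph{Proof strategy.} The plan is to combine the representation-theoretic structure of Eqs.~\eqref{eq:noisy_channel_diag}--\eqref{eq:noisy_channel_eigval} with the known shadow-norm bounds for the matchgate ensemble. The first step is to pin down the irreps in play: under fermionic Gaussian unitaries, $\mathcal{L}(\mathcal{H})$ decomposes into the multiplicity-free irreps $V_k = \spn\{\Gamma_{\bm\mu} : |\bm\mu| = k\}$, so every one-body Majorana observable lies in $V_2$ and every two-body one in $V_4$; hence $R' = \{2,4\}$ and the relevant noise figure of merit is $F_{Z,\{2,4\}}(\mathcal{E})$. Next I build compatible symmetry operators by projecting polynomials in $N$: since $N = \tfrac{n}{2}\I - \tfrac12\sum_p \Gamma_{\{2p,2p+1\}}$, set $S_2 = \Pi_2(N)$, and since $N^2$ has a nonvanishing $V_4$ component, set $S_4 = \Pi_4(N^2)$ (any quadratic in $N$ works, the operator being fixed up to scale). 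Using $\tr(N\rho) = \eta$ and $\tr(N^2\rho) = \eta^2$ for a fixed-number state, one evaluates $s_2$ and $s_4$ explicitly as low-degree polynomials in $\eta$ and $n$; these are nonzero for generic $\eta = \O(n)$, and one verifies that the normalized operators $S_\lambda / s_\lambda$ carry shadow norm $\O(n^2)$ in this regime, which is what keeps them from dominating the bound below.

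The conceptual core is then a direct application of Schur's lemma in the spirit of robust shadow estimation. Since $O_j, S_\lambda \in V_\lambda$ and $\widetilde{\mathcal{M}} = \sum_\lambda \widetilde{f}_\lambda \Pi_\lambda$, the noisy classical shadows---constructed with the \emph{noiseless} inverse $\mathcal{M}^{-1}$---obey $\E[\tr(O_j\hat\rho)] = (\widetilde{f}_\lambda / f_\lambda)\,\tr(O_j \widetilde\rho)$ and $\E[\tr(S_\lambda\hat\rho)] = (\widetilde{f}_\lambda/f_\lambda)\,\tr(S_\lambda\widetilde\rho)$, where $\widetilde\rho$ denotes the noisily prepared state. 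Forming the ratio $\hat o_j^{\mathrm{EM}}$ cancels the unknown factor $\widetilde{f}_\lambda/f_\lambda$ and leaves $s_\lambda\,\tr(O_j\widetilde\rho)/\tr(S_\lambda\widetilde\rho)$; expanding this around the ideal $\rho$ shows it equals $\tr(O_j\rho)$ up to the residual action of $\mathcal{E}$ on the prepared state together with the finite-sample Taylor remainder of the ratio, both of which are shown to be $\O(\epsilon^2)$ once the sample size is tuned to the target accuracy $\epsilon$.

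For the sample complexity I would run the standard finite-sample analysis of a ratio estimator. Tail bounds for the (mean, or if needed median-of-means) estimators of numerator and denominator, followed by a union bound over the $L = \binom{2n}{2}+\binom{2n}{4} = \Theta(n^4)$ observables, contribute the $\log L = \Theta(\log n)$ factor; propagating errors via $\hat o_j^{\mathrm{EM}} - \tr(O_j\rho) = \delta_a - \tr(O_j\rho)\,\delta_b + \O(\epsilon^2)$ reduces matters to controlling $\sn{O_j}^2/T$ and $\sn{S_\lambda/s_\lambda}^2/T$, each divided by $(\widetilde{f}_\lambda/f_\lambda)^2$. The bound $(\widetilde{f}_\lambda/f_\lambda)^{-2} \le F_{Z,\{2,4\}}(\mathcal{E})^{-2}$---immediate from Eq.~\eqref{eq:noisy_channel_eigval} and the definition of $F_{Z,R'}$---is exactly where the noise overhead enters. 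Finally, inserting the matchgate shadow-norm estimates $\sn{\Gamma_{\bm\mu}}^2 = \O(n^{|\bm\mu|/2})$ (so the two-body targets dominate at $\O(n^2)$) and using the triangle inequality for $\sn{\cdot}$ together with the $\O(n^2)$ bound on $\sn{S_2/s_2}^2$ and $\sn{S_4/s_4}^2$, one arrives at $T = \O(n^2 \log(n)\,\epsilon^{-2} F_{Z,\{2,4\}}(\mathcal{E})^{-2})$.

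I expect the main obstacle to be the finite-sample control of the nonlinear ratio estimator in the noisy setting: one has to establish simultaneously that (i) the denominator concentrates around a value bounded below by $F_{Z,\{2,4\}}(\mathcal{E}) > 0$, so its reciprocal is well behaved; (ii) the \emph{noisy} single-shot variances---which are not literally the noiseless shadow norms, since $\mathcal{M}^{-1}$ is applied to the outcomes of a noisily evolved state---are still bounded by the noiseless matchgate shadow norms up to constants; and (iii) the normalized symmetry operators $S_2/s_2$ and $S_4/s_4$ do not carry larger shadow norm than the $\O(n^2)$ of the two-body targets, which is the step where the particle-number regime $\eta = \O(n)$ is genuinely used. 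By contrast, the representation-theoretic cancellation that underlies the method is essentially immediate once the irreps $V_2, V_4$ and the symmetry operators are correctly matched.
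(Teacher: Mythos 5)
Your proposal traces the paper's route faithfully: identify $R' = \{2,4\}$ as the irreps housing one- and two-body Majorana observables, take $S_2 = \Pi_2(N)$ and $S_4 = \Pi_4(N^2)$ with the exact values $s_2 = \eta - n/2$ and $s_4 = \tfrac12\binom{n}{2} - \eta(n-\eta)$, use Schur's lemma to reduce $\widetilde{\mathcal M}$ to scalar multiples of $\Pi_\lambda$ so the ratio estimator cancels $\widetilde f_\lambda/f_\lambda$, and close the finite-sample argument via a Taylor expansion of $x/y$ together with the matchgate shadow-norm bound $\O(n^k)$ for $k$-body targets, the triangle-inequality bound $\sn{S_{2k}/s_{2k}}^2 = \O(n^k)$ (where the assumption $\eta = \O(n)$ keeps $s_{2k}$ of the right order), and a union bound over the $\Theta(n^4)$ observables. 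This is precisely the paper's chain of steps: Theorem~\ref{thm:main_theorem} (proved by the Taylor remainder analysis in Supplementary Section~\ref{sec:error_analysis}) instantiated with the matchgate irreps and the particle-number symmetry of Section~\ref{sec:applications_fermions}.

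One claim in your sketch is misplaced, and it reflects a misreading of the hypotheses. You define $\widetilde\rho$ as ``the noisily prepared state,'' carry it through the ratio to obtain $s_\lambda \tr(O_j\widetilde\rho)/\tr(S_\lambda\widetilde\rho)$, and assert that the deviation from $\tr(O_j\rho)$ attributable to ``the residual action of $\mathcal E$ on the prepared state'' is $\O(\epsilon^2)$. Under Assumptions~\ref{assumption_1} state preparation is noiseless; $\mathcal E$ attaches only to the randomized-measurement unitaries, so the population-level ratio equals $\tr(O_j\rho)$ \emph{exactly}, with no state-preparation residual to control. The paper's $\widetilde\rho$ denotes the effective shadow state $\mathcal M^{-1}\widetilde{\mathcal M}(\rho)$ of Eq.~\eqref{eq:noisy_rho}, not a noisily prepared physical state, and the entire $\O(\epsilon^2)$ term in the theorem is the finite-sample Taylor remainder of the nonlinear ratio estimator. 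Had state-preparation noise actually been present there would be nothing forcing its contribution to scale as $\epsilon^2$---its magnitude is a property of the device, not of the sample budget---which is exactly why the rigorous guarantee excludes it and the paper probes it only numerically via the QVM experiments.
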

The dependence on system size $n$ and prediction error $\epsilon$ matches noiseless estimation with matchgate shadows~\cite{zhao2021fermionic,wan2023matchgate}. Meanwhile, the overhead of error mitigation is $\O(F_{Z,R'}(\mathcal{E})^{-2})$, analogous to prior related results~\cite{chen2021robust,koh2022classical}. The irreps $R' = \{2, 4\}$ correspond to the Majorana degree of the $k$-body observables.

For qubit systems, we consider $G$ essentially corresponding to the local Clifford group (i.e., random Pauli measurements)~\cite{huang2020predicting,paini2021estimating}. In order to make the irreducible structure compatible with commonly encountered symmetries, we introduce a technical modification that we call subsystem-symmetrized Pauli shadows (see Section~\ref{subsec:sym_pauli_summary} for a summary). The symmetry we consider here is generated by the total longitudinal magnetization, $M = \sum_{i \in [n]} Z_i$. For error-mitigated prediction of local qubit observables, we have the following result.

\begin{theorem}[Qubits with total magnetization symmetry, informal]\label{thm:informal_qubit_result}
    Let $\rho$ be an $n$-qubit state with a fixed magnetization, $\tr(M \rho) = m$. Under the noise model $\mathcal{E}$ satisfying Assumptions~\ref{assumption_1} and assuming $m = \Theta(1)$, subsystem-symmetrized Pauli shadows of size
    \begin{equation}
    T = \O( n \log(n) \epsilon^{-2} F_{Z,\{1,2\}}(\mathcal{E})^{-2} )
    \end{equation}
    suffices to achieve prediction error
    \begin{equation}
    |\hat{o}_j(T) - \tr(O_j \rho)| \leq \epsilon + \O(\epsilon^2)
    \end{equation}
    with high probability, where the observables $O_j$ can be taken as all one- and two-local Pauli operators.
\end{theorem}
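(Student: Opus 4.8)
The plan is to instantiate the general symmetry-adjusted framework (developed in Section~\ref{sec:methods}) for the subsystem-symmetrized Pauli ensemble $G$, with the total magnetization $M = \sum_{i \in [n]} Z_i$ supplying the symmetry data in the weight-$1$ and weight-$2$ irreps. First I would record the structural facts. By construction (Section~\ref{subsec:sym_pauli_summary}) the Liouville representation of $G$ has blocks $V_k$ labelled by Pauli weight $k$ on which $\mathcal{M}$ and its noisy counterpart act as scalars, so $\mathcal{M} = \sum_k f_k \Pi_k$ as in Eq.~\eqref{eq:channel_diag} with $f_k = \tr(\mathcal{M}_Z \Pi_k)/\dim V_k$ from Eq.~\eqref{eq:channel_eigval}; in particular $f_1 = \tfrac13$, $f_2 = \tfrac19$. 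The whole point of the symmetrization over subsystems is that it merges the weight-$k$ subset-irreps into one block, forcing the noisy eigenvalue to depend only on $k$, and placing $M \in V_1$ and $S_2 \coloneqq \Pi_2(M^2) = \sum_{i \ne j} Z_i Z_j \in V_2$ into single irreps (ordinary random Pauli measurements would split these across the $\binom{n}{k}$ subset-irreps and the cancellation below would break). Since $\rho$ is supported on the $m$-eigenspace of $M$, i.e.\ $M\rho = m\rho$, both $s_1 \coloneqq \tr(M\rho) = m$ and $s_2 \coloneqq \tr(S_2 \rho) = \tr(M^2\rho) - n = m^2 - n$ are known a priori, and the hypothesis $m = \Theta(1)$ keeps $|s_1| = \Theta(1)$ and $|s_2| = \Theta(n)$ bounded away from $0$ for large $n$.

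Next, for a target Pauli $O_j$ of weight $\lambda \in \{1,2\}$ I would form $\hat o_j^{\mathrm{EM}}(T) = \tr(O_j \hat\rho(T)) / (\tr(S_\lambda \hat\rho(T))/s_\lambda)$ from $T$ noisy classical shadows reconstructed with the ideal inverse $\mathcal{M}^{-1}$. By Schur's lemma the noisy channel is $\widetilde{\mathcal{M}} = \sum_k \widetilde f_k \Pi_k$ (Eq.~\eqref{eq:noisy_channel_diag}), so $\E\kket{\hat\rho_\ell} = \mathcal{M}^{-1}\widetilde{\mathcal{M}}\kket{\rho} = \sum_k (\widetilde f_k/f_k)\Pi_k\kket{\rho}$; since $O_j, S_\lambda \in V_\lambda$ this gives $\E[\tr(O_j\hat\rho_\ell)] = F_\lambda \tr(O_j\rho)$ and $\E[\tr(S_\lambda\hat\rho_\ell)/s_\lambda] = F_\lambda$ with $F_\lambda \coloneqq \widetilde f_\lambda/f_\lambda$, so the ratio of means is exactly $\tr(O_j\rho)$ and the common factor $F_\lambda = \tr(\mathcal{E}\mathcal{M}_Z\Pi_\lambda)/\tr(\mathcal{M}_Z\Pi_\lambda) \ge F_{Z,\{1,2\}}(\mathcal{E})$ cancels. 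A Taylor expansion of the ratio about these means yields, on the event $|\delta_Y| < F_\lambda/2$,
\begin{equation}
\hat o_j^{\mathrm{EM}}(T) - \tr(O_j\rho) = \frac{\delta_X - \tr(O_j\rho)\,\delta_Y}{F_\lambda} + \O\l( \frac{\delta_X^2 + \delta_Y^2}{F_\lambda^2} \r),
\end{equation}
where $\delta_X,\delta_Y$ are the mean-zero fluctuations of numerator and denominator, with single-shot variances $v_X$ and $v_Y$; the leading terms are mean-zero with spread $\O(F_\lambda^{-1}\sqrt{v_X/T})$ and $\O(F_\lambda^{-1}|\tr(O_j\rho)|\sqrt{v_Y/T})$, and the quadratic remainder contributes an $\O((v_X+v_Y)F_\lambda^{-2}/T)$ bias.

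The heart of the proof is bounding $v_X$ and $v_Y$. The numerator is the usual Pauli shadow norm and is noise-independent: by the single-qubit structure of the ensemble, $\vev{O_j}{\mathcal{M}^{-1}\mathcal{U}_g^\dagger}{b}^2 = f_\lambda^{-2}\ev{b}{U_g O_j U_g^\dagger}{b}^2$ equals $9^\lambda$ on an event of probability $3^{-\lambda}$ and is $0$ otherwise, so $v_X \le 3^\lambda = \O(1)$. For the denominator the state-independent shadow-norm bound is useless — $S_1 = M$ and $S_2$ are sums of $\Theta(n)$ and $\Theta(n^2)$ Paulis, giving only $\O(n^2)$ and $\O(n^4)$ — so I would instead expand $\V[\tr(S_\lambda\hat\rho_\ell)]$ directly as a sum of covariances of its constituent Paulis, evaluating each through the one- and two-copy twirls over $G$ (the standard route for shadow variances, still tractable for the symmetrized ensemble). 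After summation the off-diagonal covariances assemble into quantities fixed by $\tr(M\rho) = m$ and $\tr(M^2\rho) = m^2$, and the cancellations these force — e.g.\ $\sum_{i \ne i'}\Cov_\rho(Z_i,Z_{i'}) = -\sum_i \V_\rho(Z_i) = \sum_i \langle Z_i\rangle^2 - n = \O(n)$ because $\V_\rho(M) = 0$ — collapse the variance to $\V[\tr(S_1\hat\rho_\ell)] = \O(n)$ and $\V[\tr(S_2\hat\rho_\ell)] = \O(n^2)$, rather than the $\O(n^2)$ and $\O(n^4)$ of a generic state. Dividing by $s_1^2 = \Theta(1)$ and $s_2^2 = \Theta(n^2)$ gives $v_Y = \O(n)$ in both cases. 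The unknown $\mathcal{E}$ enters only inside the twirls, where, by Assumptions~\ref{assumption_1} and Schur's lemma, its effect is confined to bounded irrep-wise scalars; the combinatorial structure of the sums is unchanged up to these factors, which I would absorb together with $F_\lambda^{-2}$ into the stated $F_{Z,\{1,2\}}(\mathcal{E})^{-2}$ overhead.

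Finally I would assemble: applying Bernstein-type concentration (or median-of-means) to numerator and denominator and a union bound over the $L = \O(n^2)$ one- and two-local Paulis plus the $\O(1)$ symmetry operators, it suffices to take $T = \O\l( (v_X + \max_j |\tr(O_j\rho)|^2 v_Y)\, F_\lambda^{-2}\,\epsilon^{-2}\log(L/\delta) \r)$, which then also makes $|\delta_Y| < F_\lambda/2$ hold with high probability; with $v_X = \O(1)$, $v_Y = \O(n)$, $|\tr(O_j\rho)| \le 1$, $F_\lambda^{-2} \le F_{Z,\{1,2\}}(\mathcal{E})^{-2}$, and $\log L = \O(\log n)$, this is $T = \O(n\log(n)\,\epsilon^{-2}\,F_{Z,\{1,2\}}(\mathcal{E})^{-2})$, and substituting back bounds the residual bias by $\O((v_X+v_Y)F_\lambda^{-2}/T) = \O(\epsilon^2)$. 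The main obstacle will be the variance step: executing the two-copy twirl for the subsystem-symmetrized ensemble with the unknown noise interleaved, and verifying that the cancellations forced by $\langle M\rangle$ and $\langle M^2\rangle$ genuinely pull the symmetry-operator variance down from $\O(n^2)$ to $\O(n)$ — without this the sample bound would degrade to $\O(n^2\log(n)\,\epsilon^{-2}F_{Z,\{1,2\}}(\mathcal{E})^{-2})$.
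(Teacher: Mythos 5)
Your plan matches the paper's proof: it instantiates Theorem~\ref{thm:main_theorem} for $G = \SymCl{n}$, identifies $S_1 = \Pi_1(M)$ and $S_2 = \Pi_2(M^2)$ with $s_1 = m$ and $s_2 = m^2 - n$, observes that the unit-spectral-norm $k$-local Paulis give $v_X = 3^k = \O(1)$, and makes the central observation that the symmetry constraint $\V_\rho(M) = 0$ collapses the $\O(n^2)$ generic variance of $\hat s_1$ to $\O(n)$ — exactly the cancellation $\sum_{i\ne i'}\Cov_\rho(Z_i,Z_{i'}) = -\sum_i\V_\rho(Z_i)$ that the paper's explicit Haar-averaging calculation in Supplementary Section~\ref{subsec:ss-pauli_variance} encodes (there the noiseless answer is $\V[\hat s_1/s_1] = 2n/m^2$). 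One small arithmetic slip: dividing $\V[\hat s_2] = \O(n^2)$ by $s_2^2 = \Theta(n^2)$ gives $\V[\hat s_2/s_2] = \O(1)$, not $\O(n)$ as you wrote; the paper confirms this, so the overall $\sigma^2 = \O(n)$ is driven solely by the $S_1$ term, and your final sample bound is unaffected. The one step you flag as unexecuted — carrying the direct covariance calculation through for $S_2$, including the noise-dependent replacements $\tr(S_{2k}\rho) \to F_{Z,2k}\tr(S_{2k}\rho)$ — is indeed where the paper expends the bulk of its technical effort (Eqs.~(S33)--(S40) of the supplement), but your sketch of the mechanism is right.
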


Note that the irreps of subsystem-symmetrized Pauli shadows are labeled by Pauli weight. The variance bound we advertise here is linear in $n$, resulting from the extensive nature of the symmetry $M$. Specifically, we show that when $m = \Theta(1)$, $\sn{M}^2 = \O(n)$ dominates the asymptotic complexity over the $k$-local Pauli observables (for which our protocol exhibits the usual $\sn{O_j}^2 = 3^k$). This is consistent with standard Pauli shadows, wherein the shadow norm of arbitrary $k$-local observables scales at most linearly with spectral norm and exponentially in $k$~\cite{huang2020predicting,paini2021estimating}.

Besides these two examples, we describe symmetry-adjusted classical shadows for a more general class of groups $G$, and we establish accompanying bounds in Theorem~\ref{thm:main_theorem} in Section~\ref{sec:methods} (proven in Supplementary Section~\ref{sec:error_analysis}). This allows for applications to other systems and unitary distributions. See Section~\ref{sec:main_theory} for the general theory, and Sections~\ref{sec:applications_fermions} and \ref{sec:applications_qubits} for the details regarding Theorems~\ref{thm:informal_fermion_result} and \ref{thm:informal_qubit_result}, respectively.

Because our protocol always runs the full noisy quantum circuit, it has the potential to mitigate a wider range of errors than those covered by Assumptions~\ref{assumption_1}, albeit without the rigorous theoretical guarantees. This is a significant feature of the method, as the preparation of $\rho$ often dominates the total circuit complexity (i.e., $U_{\mathrm{prep}}$ in Figure~\ref{fig:cartoon_flowchart}). We explore this broader mitigation potential with a series of numerical experiments below, wherein we simulate noisy Trotter circuits for systems of interacting fermions and spin-$1/2$ particles, respectively.

\subsection{Subsystem-symmetrized Pauli shadows}\label{subsec:sym_pauli_summary}

While random Pauli measurements are efficient for predicting local qubit observables, the irreducible structure of the local Clifford group $\Cl(1)^{\otimes n}$ is difficult to reconcile with common symmetries under symmetry adjustment, such as the $\U(1)$ symmetry generated by $M = \sum_{i \in [n]} Z_i$. To remedy this issue, we modify the protocol by what we call \emph{subsystem symmetrization}:~define the group
\begin{equation}
    \SymCl{n} \coloneqq \Sym(n) \times \Cl(1)^{\otimes n},
\end{equation}
which has the unitary representation $U_{(\pi, C)} = S_\pi C$ where $S_\pi$ permutes the qubits according to $\pi \in \Sym(n)$ and $C \in \Cl(1)^{\otimes n}$. The circuit for $S_\pi$ can be obtained as a sequence of $\O(n^2)$ nearest-neighbor $\SWAP$ gates in $\O(n)$ depth via an odd--even decomposition of $\pi$~\cite{habermann1972parallel}. The following theorem summarizes its group-theoretic properties relevant to classical shadows.

\begin{theorem}[Irreducible representations of the subsystem-symmetrized local Clifford group]
    The representation $\mathcal{U} : \SymCl{n} \to \U(\mathcal{L}(\mathcal{H}))$, defined by $\mathcal{U}_{(\pi, C)}(\rho) = S_\pi C \rho C^\dagger S_\pi^\dagger$, decomposes into the irreps
    \begin{equation}
    V_k = \spn\{ P \in \Pauli(n) : |P| = k \}, \quad 0 \leq k \leq n.
    \end{equation}
    Under this group, the (noiseless) expressions for $\mathcal{M}$ and $\V[\hat{o}]$ coincide with those of standard Pauli shadows.
\end{theorem}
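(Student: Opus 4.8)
The plan is to build the decomposition in two layers: first the local Clifford group, then the qubit permutations. Recall from Section~\ref{sec:background} that $\Cl(1)^{\otimes n}$ acting by conjugation decomposes $\mathcal{L}(\mathcal{H})$ into the $2^n$ irreps $V_I = \spn\{P \in \Pauli(n) : \mathrm{supp}(P) = I\}$, $I \subseteq [n]$, with $\dim V_I = 3^{|I|}$. The crucial elementary point is that this decomposition is \emph{multiplicity-free}:~viewing $\Cl(1)^{\otimes n}$ as the $n$-fold direct product $\Cl(1)\times\cdots\times\Cl(1)$, the space $V_I$ is the external tensor product that places the three-dimensional irrep $\tau$ (the action on traceless single-qubit operators) on the qubits in $I$ and the trivial irrep elsewhere, so its character at $(C_1,\dots,C_n)$ equals $\prod_{i\in I}\chi_\tau(C_i)$; since $\chi_\tau$ is non-constant, $V_I \cong V_{I'}$ forces $I = I'$.

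Next I would add the symmetric group. Conjugation by $S_\pi$ merely permutes tensor factors, hence carries $V_I$ isometrically onto $V_{\pi(I)}$;~consequently each $V_k = \bigoplus_{|I| = k} V_I$ is invariant under $\SymCl{n}$, and $\mathcal{L}(\mathcal{H}) = \bigoplus_{k=0}^n V_k$ is the full decomposition. For irreducibility of $V_k$, take a nonzero $\SymCl{n}$-invariant subspace $W \subseteq V_k$. Restricting the action to $\Cl(1)^{\otimes n}$ and invoking multiplicity-freeness, $W$ must be a direct sum $\bigoplus_{I \in \mathcal{S}} V_I$ over some nonempty family $\mathcal{S}$ of $k$-subsets;~invariance under the $S_\pi$ forces $\mathcal{S}$ to be a union of $\Sym(n)$-orbits, and since $\Sym(n)$ acts transitively on $k$-subsets we conclude $W = V_k$. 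The same restriction argument also shows the $V_k$ are pairwise non-isomorphic (the irrep-type ``$\tau$ on a $k$-subset'' appears only in $V_k$), so the no-multiplicity hypothesis behind Eq.~\eqref{eq:channel_diag} holds.

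For the second assertion I would compute the channel eigenvalues via Eq.~\eqref{eq:channel_eigval}. The map $\mathcal{M}_Z$ is the orthogonal projector onto diagonal operators, i.e.\ onto $\spn\{\prod_{i\in I}Z_i : I\subseteq[n]\}$, so $\mathcal{M}_Z$ and $\Pi_k$ are both coordinate projectors in the Pauli basis and $\tr(\mathcal{M}_Z\Pi_k)$ just counts the diagonal weight-$k$ Paulis --- exactly the $\binom{n}{k}$ operators $\prod_{i\in I}Z_i$ with $|I|=k$ --- while $\tr(\Pi_k) = \dim V_k = \binom{n}{k}3^k$. Hence $f_k = 3^{-k}$, which is precisely the standard Pauli-shadow eigenvalue $3^{-|I|}$ on each $V_I \subseteq V_k$;~therefore $\mathcal{M} = \sum_{k} 3^{-k}\Pi_k = \sum_{I\subseteq[n]} 3^{-|I|}\Pi_I$ is literally the same channel $\mathcal{M}$ --- and hence the same $\mathcal{M}^{-1}$ --- as standard random-Pauli shadows. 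For the variance, the single-shot integrand at $g = (\pi, C)$ is $\ev{b}{S_\pi C\, \mathcal{M}^{-1}(O)^\dagger C^\dagger S_\pi^\dagger}{b}^2$ with $b$ drawn from $S_\pi C\sigma C^\dagger S_\pi^\dagger$;~the substitution $b \mapsto S_\pi^\dagger b$ (a relabeling of computational-basis outcomes) cancels every $S_\pi$ and leaves exactly the standard Pauli-shadow integrand, independent of $\pi$, so averaging over $\pi$ is trivial and $\V[\hat o]$ --- hence $\sn{O}^2$ --- is unchanged.

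I expect the only real obstacle to be organizing the first step cleanly:~one must verify that the local-Clifford decomposition is multiplicity-free via the character computation above, since this is exactly what upgrades an arbitrary invariant subspace to a direct sum of the $V_I$ and thereby lets the transitivity of $\Sym(n)$ on $k$-subsets close the irreducibility argument. Everything after that is bookkeeping with coordinate projectors and a relabeling of summation indices.
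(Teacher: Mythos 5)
Your proposal is correct, and it takes a genuinely different route from the paper's. The paper's supplementary proof is computational: it explicitly evaluates the two-fold twirl $\mathcal{T}_{2,\mathcal{U}}$ (via an orbit lemma showing $G\cdot P = \pm\mathcal{B}_k$), lifts that to the one-fold twirl of the adjoint action through a tensor-network identity, and then reads the irreps off the diagonal form of the twirl via a Schur-type proposition. You instead argue structurally: restrict to the normal subgroup $\Cl(1)^{\otimes n}$, invoke multiplicity-freeness of its decomposition into the $V_I$ (which you justify cleanly by the character computation for the external tensor product of $\tau$'s and trivial reps), conclude that any $\SymCl{n}$-invariant subspace of $V_k$ must be a union of $V_I$'s, and then use transitivity of $\Sym(n)$ on $k$-subsets to force equality. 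This is a Clifford-theory/Mackey-style argument and is shorter and more conceptual; the paper's approach is more self-contained (it does not presuppose the $\Cl(1)^{\otimes n}$ irrep structure or its multiplicity-freeness) and it produces the explicit twirl $\mathcal{T}_{2,\mathcal{U}}$ as a by-product, which the paper then reuses to derive variance formulas. Your second half---computing $f_k = 3^{-k}$ by counting diagonal weight-$k$ Paulis, noting $\sum_k 3^{-k}\Pi_k = \sum_I 3^{-|I|}\Pi_I$ so $\mathcal{M}$ and $\mathcal{M}^{-1}$ are literally unchanged, and cancelling $S_\pi$ in the variance integrand by the substitution $b\mapsto S_\pi^\dagger b$---matches the paper's argument in spirit (the paper proves the same $t$-twirl reduction $\sum_b\mathcal{T}_{t,\SymCl{n}}(\op{b}{b}^{\otimes t}) = \sum_b\mathcal{T}_{t,\Cl(1)^{\otimes n}}(\op{b}{b}^{\otimes t})$) and is correct.
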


This modification therefore reduces the number of irreps from $2^n$ to $n + 1$, achieved by symmetrizing, for each $k$, over all $k$-qubit subsystems. Meanwhile, the desirable estimation properties from standard Pauli shadows are retained:~for instance, the shadow norm obeys $\sn{P}^2 = 3^k$ for $k$-local Pauli operators $P$.

The upshot is that the symmetry $M$ is now compatible with this group, thereby enabling results such as Theorem~\ref{thm:informal_qubit_result}. We describe this construction in Section~\ref{sec:applications_qubits}, with technical proofs in Supplementary Section~\ref{sec:pauli_shadows_appendix}.

\subsection{Spin-adapted matchgate shadows}\label{subsec:spin-adapt_summary}

Systems of spinful fermions often obey a spin symmetry, which allows for compressed block-diagonal representations according to the spin sectors. Such techniques are referred to as symmetry adaptation. We introduce such an adaptation of the matchgate shadows protocol wherein the random distribution is restricted to block-diagonal orthogonal transformations,
\begin{equation}
    Q = \begin{pmatrix}
    Q_\uparrow & 0\\
    0 & Q_\downarrow
    \end{pmatrix} \in \Orth(n) \oplus \Orth(n).
\end{equation}
We call this protocol \emph{spin-adapted} matchgate shadows. This restricted group remains informationally complete over operators which respect the spin sectors, thus sufficing for learning properties in systems with this symmetry. In fact, we show that the shadow norms for $k$-fermion operators under the spin-adapted protocol scale identically as in the unadapted setting. The main advantage of spin adaptation is that the block-diagonal transformation $Q = Q_\uparrow \oplus Q_\downarrow$ can be implemented as $U_{Q_\uparrow} \otimes P_{\downarrow}^s U_{Q_\downarrow}$, where $P_{\downarrow} = Z^{\otimes n/2}$ is the parity operator on the spin-down sector and $s = \delta_{-1, \det Q_{\uparrow}}$. This tensor-product unitary requires roughly half the number of gates and circuit depth compared to implementing a dense element of $\Orth(2n)$. We prove the necessary details in Supplementary Section~\ref{sec:spin_adaptation} and implement this modified protocol in our numerical experiments wherever applicable.

\subsection{Improved circuit design for fermionic Gaussian unitaries}\label{subsec:improved_circuit_summary}

Fermionic Gaussian unitaries are a broad class of free-fermion rotations, and they are ubiquitous primitives in algorithms for simulating (interacting) fermions. In the context of classical shadows, they form the basis for randomized measurements in matchgate shadows~\cite{zhao2021fermionic,wan2023matchgate,ogorman2022fermionic}. Such unitaries can be described by an orthogonal transformation $Q \in \Orth(2n)$ of the Majorana operators,
\begin{equation}
    \mathcal{U}_Q(\gamma_\mu) = U_Q \gamma_\mu U_Q^\dagger = \sum_{\nu \in [2n]} Q_{\nu\mu} \gamma_\nu
\end{equation}
for each $\mu \in [2n]$. The quantum circuits implementing these transformations take $\O(n^2)$ gates in $\O(n)$ depth~\cite{jiang2018quantum,oszmaniec2022fermion}. While this scaling is necessary in general by parameter counting, constant-factor savings can substantially improve performance in practice, especially on noisy quantum computers.

To this end, we introduce a more efficient compilation scheme for fermionic Gaussian unitaries, given an arbitrary $Q \in \Orth(2n)$. Our circuit design improves the parallelization of gates compared to prior art~\cite{jiang2018quantum,oszmaniec2022fermion}. The key idea is to observe that two Majorana modes essentially correspond to one qubit under the Jordan--Wigner transformation~\cite{jordanwigner}. Thus, the optimal approach to compiling $U_Q$ into single- and two-qubit gates involves decomposing the matrix $Q$ into elementary blocks of $4 \times 4$ transformations, rather than the $2 \times 2$ Givens rotations utilized in prior designs.

The details of this scheme are described in Supplementary Section~\ref{sec:FGU_circuit_design} and implemented in code at our open-source repository ({\small\url{https://github.com/zhao-andrew/symmetry-adjusted-classical-shadows}}). We make use of this improved design in our numerical simulations. We demonstrate the improvements in circuit size in Figure~\ref{fig:circuit_design_comparison}, with respect to a gate set native to superconducting platforms. From these results we numerically infer roughly $1/3$ reduction in depth and $1/2$ reduction in gate count over prior designs.

\begin{figure}
\centering
\includegraphics[scale=0.5]{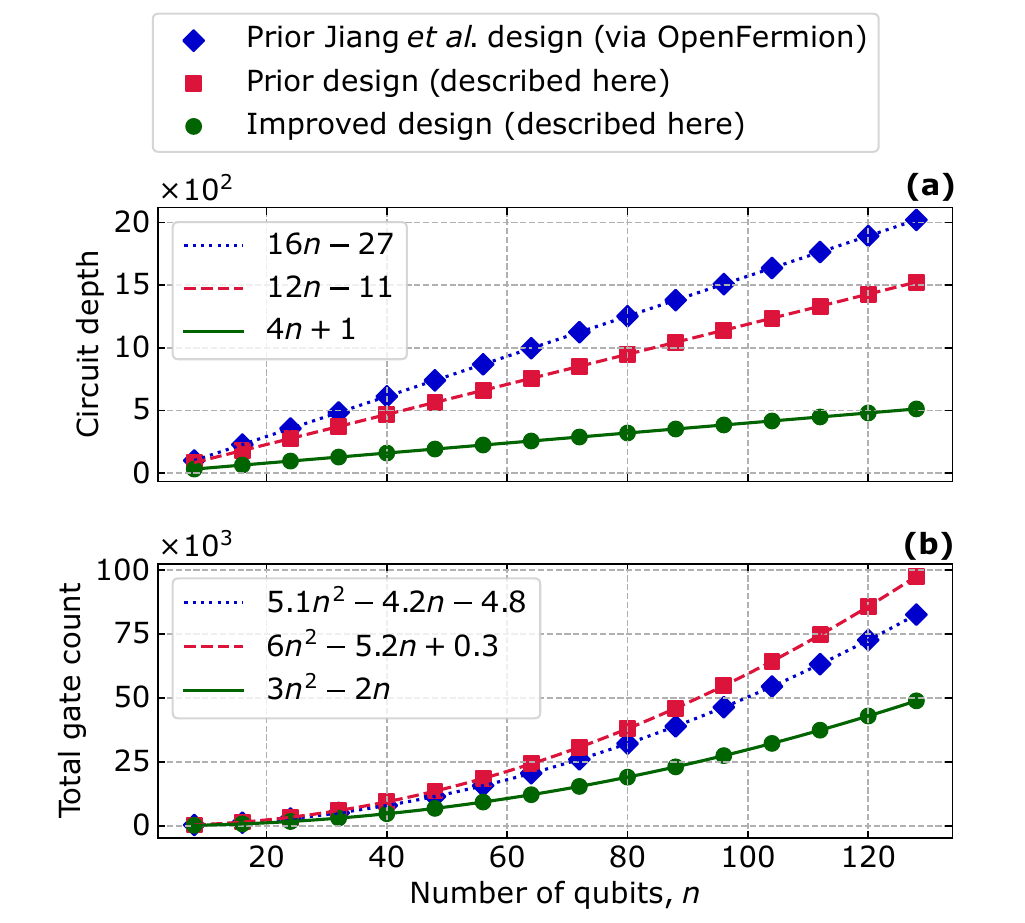}
\caption{\textbf{Resource comparison of our improved fermionic Gaussian circuit design versus prior designs.} Our circuit costs (green) are compared against those of Jiang \emph{et al.}~\cite{jiang2018quantum} (blue), implemented in OpenFermion~\cite{openfermion}, and a naive scheme described in Supplementary Section~\ref{sec:FGU_circuit_design} (red). All circuits were optimized to a superconducting gate set consisting of $\sqrt{\iSWAP}$ gates and single-qubit rotations. We compare both \textbf{(a)} circuit depth and \textbf{(b)} gate count on random inputs $Q \in \Orth(2n)$. Linear and quadratic fits are made, demonstrating a roughly $3\times$ and $2\times$ savings, respectively.}
\label{fig:circuit_design_comparison}
\end{figure}

\subsection{Numerical experiments}\label{sec:numerics}

We now demonstrate the error-mitigation capabilities of symmetry-adjusted classical shadows through numerical simulations. We focus on the task of estimating one- and two-body observables in both fermion and qubit systems which obey the global $\U(1)$ symmetries described in Section~\ref{sec:methods}.

For each type of system, we first present results when the noise models obey Assumptions~\ref{assumption_1} (readout errors). We demonstrate the successful mitigation at varying sample sizes, noise rates, and system sizes, confirming the correctness of our theory.

Next, we investigate how symmetry adjustment performs under a more comprehensive noise model based on superconducting-qubit platforms. These simulations were performed using the Quantum Virtual Machine (QVM) within the Cirq open-source software package~\cite{cirq,isakov2021simulations}. It uses existing hardware data on a native gate set (single-qubit rotations and two-qubit $\sqrt{\iSWAP}$ gates on a square lattice) to mimic the realistic performance of a noisy quantum computer. We use the calibration data provided of Google's 23-qubit Rainbow processor based on the Sycamore architecture, which was used in quantum experiments simulating quantum chemistry and strongly correlated materials~\cite{arute2020hartree,arute2020observation}. The noise model consists of depolarizing channels, two-qubit coherent errors, single-qubit idling noise, and readout errors. Error rates vary across the chip;~on the $2 \times 4$ grid that we simulated, the average single- and two-qubit Pauli error rates are ${\sim}0.15\%$ and ${\sim}1.5\%$, respectively. A precise description of the noise model can be found in Supplementary Section~\ref{subsec:noise_model_details}.

Throughout, we use the following conventions for figures. Noiseless data (blue squares) correspond to simulations of an ideal quantum computer, which experiences no noise channel and only exhibits the fundamental sampling error. Unmitigated data (black X's) are simulations of classical shadows on a noisy quantum computer, using standard postprocessing routines. The mitigated estimates (red diamonds) are instead postprocessed as symmetry-adjusted classical shadows, as described in Section~\ref{sec:main_theory}. In some experiments, we also compare against robust shadow estimation~\cite{chen2021robust} (RShadow, green crosses), which involves simulating the calibration protocol on $\ket{0^n}$ under the same noise model. Finally, the true values (teal curves) are the ground truth, against which we determine the prediction error.

Uncertainty bars represent one standard deviation of the combined sampling and postprocessing, computed by empirical bootstrapping~\cite{efron1992bootstrap}. To ease the computational load, we slightly modify the procedure by batching samples;~see Supplementary Section~\ref{subsec:bootstrap_error_bars} for details.\\

\textbf{Fermionic systems.} Our first set of numerical experiments consider the application to matchgate shadows to learn and mitigate noise in one- and two-body fermionic observables. The symmetry we consider is fixed particle number, $\tr(N \rho) = \eta$. As we show in Section~\ref{sec:applications_fermions}, this symmetry projects into the relevant irreps $R' = \{2, 4\}$ of the matchgate shadows as
\begin{align}
    S_2 &= \Pi_2(N) = -\frac{1}{2} \sum_{p \in [n]} Z_p,\\
    S_4 &= \Pi_4(N^2) = \frac{1}{2} \sum_{p < q} Z_p Z_q,
\end{align}
represented under the Jordan--Wigner transformation~\cite{jordanwigner} for simplicity. Their ideal values are
\begin{align}
    s_2 &= \tr\l( S_2 \rho \r) = \eta - \frac{n}{2},\\
    s_4 &= \tr\l( S_4 \rho \r) = \frac{1}{2} \binom{n}{2} - \eta(n - \eta).
\end{align}

\begin{figure*}
\centering
\includegraphics[scale=0.5]{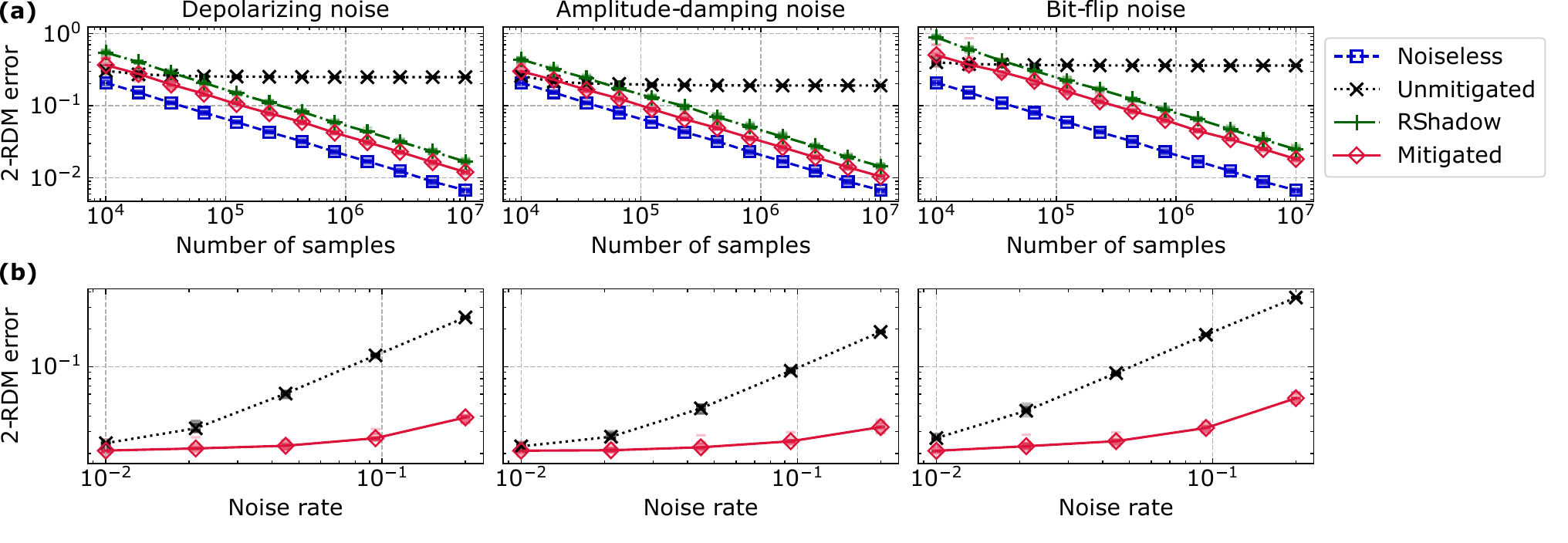}
\caption{\textbf{Estimation error of the fermionic 2-RDM reconstructed from matchgate shadows.} Error is quantified by the spectral-norm difference between the estimated and true 2-RDM. We simulate the protocol on a collection of 20 random Slater determinants on $n = 8$ modes with $\eta = 2$ fermions;~faint dashes are results for individual states, markers indicate the median error across the 20 states. \textbf{(a)} Scaling of estimation error with the total number of samples $T$, fixing the noise rate to $p = 0.2$ for all noise models. \textbf{(b)} Scaling of the estimation error with the noise rate $p$, fixing the total number of samples to $T = 10^6$.}
\label{fig:random_slater}
\end{figure*}

\textbf{Readout noise model (fermions).} First, we consider the reconstruction of the fermionic two-body reduced density matrix (2-RDM) from matchgate shadows. The 2-RDM elements of a state $\rho$ are given by
\begin{equation}
    {}^2 D_{rs}^{pq} = \tr(a_p^\dagger a_q^\dagger a_s a_r \rho), \quad p, q, r, s \in [n].
\end{equation}
In general, knowledge of the $k$-RDM allows one to calculate any $k$-body observable of the system. By anticommutation relations, there are only $\binom{n}{2}^2$ unique matrix elements, corresponding to the indices $p < q$ and $r < s$. We therefore represent ${}^2 D$ as an $\binom{n}{2} \times \binom{n}{2}$ Hermitian matrix, flattening along those index pairs. Estimates ${}^2 \hat{D}_{rs}^{pq}(T) = \tr(a_p^\dagger a_q^\dagger a_s a_r \hat{\rho}(T))$ are computed from $T$ matchgate-shadow samples. Here, our figure of merit for the prediction error is the spectral-norm difference between the reconstructed and the numerically exact 2-RDMs, $\epsilon = \| {}^2 \hat{D} - {}^2 D \|_\infty$.

We demonstrate 2-RDM reconstruction on an ensemble of 20 random Slater determinants (noninteracting-fermion states with fixed particle number). An $\eta$-fermion Slater determinant is specified by the first $\eta$ columns of an $n \times n$ unitary matrix, so we generate the random states by uniformly drawing elements of $\U(n)$. This $n \times n$ representation is then lifted to the $2n \times 2n$ fermionic Gaussian representation, which allows us to apply the random matchgate transformations $Q \in \B(2n)$ efficiently. This simulates the action of $\rho \mapsto U_Q \rho U_Q^\dagger$. The measurement is then simulated using the algorithm of Ref.~\cite[Sec.~5.1]{bravyi2012classical}. Finally, to simulate the readout noise we implement the effective noise channel on the sampled bit strings offline.

While the 2-RDM of free-fermion states can be computed from the 1-RDM using Wick's theorem, we do not employ any such tricks here;~we use Slater determinants simply to facilitate fast classical simulation. We also do not use any additional error-mitigation strategies, such as RDM positivity constraints~\cite{rubin2018application}, that could in principle be applied in tandem.

The results are presented in Figure~\ref{fig:random_slater}. We consider a small system size, $n = 8$ and $\eta = 2$, and simulate three types of single-qubit noise channels before readout:~depolarizing, amplitude damping, and bit flip. The noise rate $p$ represents the probability of such an error occurring, independently on each qubit (defined in Supplementary Section~\ref{subsec:readout_errors}). In the top row, we show how the prediction error varies with the total number of samples $T$. As expected, the noiseless estimates (corresponding to $p = 0$) converge as $\sim T^{-1/2}$, which is the standard shot-limited behavior. Then, setting $p = 0.2$, we see how the unmitigated data experiences an error floor beyond which taking additional samples does not improve the accuracy. On the other hand, the mitigated results clearly bypass this error floor and recover the shot-noise scaling with $T$, thus validating the theory of symmetry-adjusted classical shadows. Compared to the noiseless simulations, our mitigated data exhibit a constant factor increase in the sampling cost, corresponding to the $\O(F_{Z,R'}^{-2})$ overhead of error mitigation, as it appears in Theorem~\ref{thm:main_theorem}.

\begin{figure*}
\centering
\includegraphics[scale=0.5]{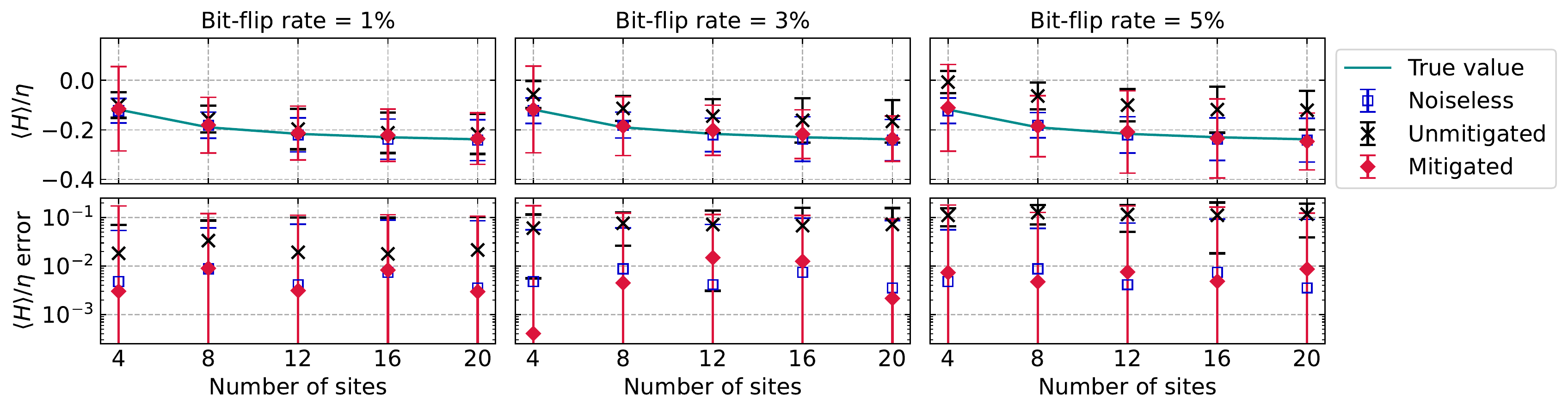}
\caption{\textbf{Estimation of the energy per electron of a 1D spinful Fermi--Hubbard model.} We consider an interaction strength $U/t = 4$ on $4 \leq L \leq 20$ sites, and we measure $\langle H \rangle / \eta$ using spin-adapted matchgate shadows with $T = 2 \times 10^6$ samples. We simulate the protocol on the ground state of the noninteracting component $J$ of the Hamiltonian, Eq.~\eqref{eq:hopping_term}, at half filling in each spin sector, which is a Slater determinant with $\eta_\sigma = L/2$ electrons per sector. These simulations involve $n' = 2L + 2$ noisy qubits experiencing readout bit-flip errors with probabilities $1\%$, $3\%$, and $5\%$. The additional qubit per spin sector is prepared in $\ket{0}$ to avoid division by zero (see Section~\ref{sec:applications_fermions}). Error bars denote one standard error of the mean.}
\label{fig:fh_energy}
\end{figure*}

For these experiments, we also compare to the performance of robust shadow estimation (RShadow) by Chen \emph{et al.}~\cite{chen2021robust}, which requires simulating the calibration procedure on $\ket{0^n}$. For a fair comparison, we allocate $T/2$ samples to the calibration step and $T/2$ samples to the estimation step, so that the total number of samples is the same. While Chen \emph{et al.}~\cite{chen2021robust} did not originally consider matchgate shadows, from our generalization in Eq.~\eqref{eq:noiseest} we can construct $\mathrm{NoiseEst}_{\B(2n)}$ by taking $D_{\lambda} = S_{2k}$, which obeys $\ev{0^n}{S_2}{0^n} = -n/2$ and $\ev{0^n}{S_4}{0^n} = n(n-1)/4$. The single-shot estimator is then
\begin{equation}\label{eq:fermion_RSE_estimator}
    \mathrm{NoiseEst}_{\B(2n)}(2k, Q, b) = \frac{\ev{b}{U_Q S_{2k} U_Q^\dagger}{b}}{\ev{0^n}{S_{2k}}{0^n}}.
\end{equation}
As expected, RShadow behaves similarly to symmetry-adjusted classical shadows in this scenario (wherein the noise obeys Assumptions~\ref{assumption_1}). However, even here we observe that our approach exhibits a constant-factor advantage in the sampling cost. We attribute the performance of RShadow to its calibration procedure, which our method avoids.

In the bottom row of Figure~\ref{fig:random_slater}, we simulate the same collection of random Slater determinants, but now varying the noise rate $p$ at a fixed shadow size $T = 10^6$. While the unmitigated errors quickly grow with increasing noise rate as expected, the mitigated estimates remain under control. Note that the mitigated errors still grow modestly because we have fixed the number of samples;~in order to achieve a constant prediction error, one would need to grow $T$ proportional to $F_{Z,R'}^{-2}$ (which is $p$-dependent). Our key takeaway is that the combination of both rows of plots indicates the ability to handle a range of common noise channels at fairly high error rates. Indeed, the growing errors seen in the bottom row can be suppressed by simply taking more samples, which is precisely what the top row demonstrates.

Next, we consider the simulation of a 1D spinful Fermi--Hubbard chain of $L = n/2$ sites (for a total of $n$ fermionic modes/qubits). Under open boundary conditions, the Hamiltonian for this model is
\begin{equation}
    H = J + V,
\end{equation}
where
\begin{align}
    J &= -t \sum_{i \in [L-1]} \sum_{\sigma \in \{\uparrow, \downarrow\}} a_{i,\sigma}^\dagger a_{i+1,\sigma} + \mathrm{h.c.}, \label{eq:hopping_term}\\
    V &= U \sum_{i \in [L]} N_{i,\uparrow} N_{i, \downarrow}, \label{eq:interaction_term}
\end{align}
are the hopping and interaction terms, respectively. The creation operators $a_{i,\sigma}^\dagger$ produce an electron at site $i$ with spin $\sigma$, and $N_{i,\sigma} = a_{i,\sigma}^\dagger a_{i,\sigma}$ is the associated occupation-number operator. We set units such that the hopping strength is $t = 1$.

For the target state, we use the ground state of the noninteracting term $J$, which is also a Slater determinant. This allows us to use the same simulation techniques as before to efficiently simulate up to $20$ sites. The number of electrons in each spin sector is $\eta_\sigma = L/2$, for a total of $\eta = \eta_{\uparrow} + \eta_{\downarrow} = n/2$ electrons. Thus the system is at half filling, which requires the use of ancilla qubits to avoid division by zero (see Section~\ref{sec:applications_fermions}). In fact, we simulate $n' = n + 2$ qubits because we append an ancilla qubit to each spin sector. This is because we furthermore employ spin-adapted matchgate shadows, as described previously in Section~\ref{subsec:spin-adapt_summary}. This modification essentially treats each spin sector independently when performing the randomized measurements, so each sector is at half filling.

The Fermi--Hubbard results are shown in Figure~\ref{fig:fh_energy}. We consider the estimation of energy per electron, $\langle H \rangle / \eta$. We set the interaction strength to $U/t = 4$ and the noise model to single-qubit bit-flip errors, with probabilities $p \in \{0.01, 0.03, 0.05\}$. The energy per electron (top) and absolute estimation error (bottom) are plotted as the system size grows, keeping the number of samples fixed to $T = 2 \times 10^6$. Again, these results serve to validate our theory, this time highlighting the performance as the system size grows. This also demonstrates the use of spin-adapted matchgate shadows and the successful use of ancillas to avoid division by zero in $\hat{o}_j^{\mathrm{EM}}$.

\begin{figure*}
\centering
\includegraphics[scale=0.5]{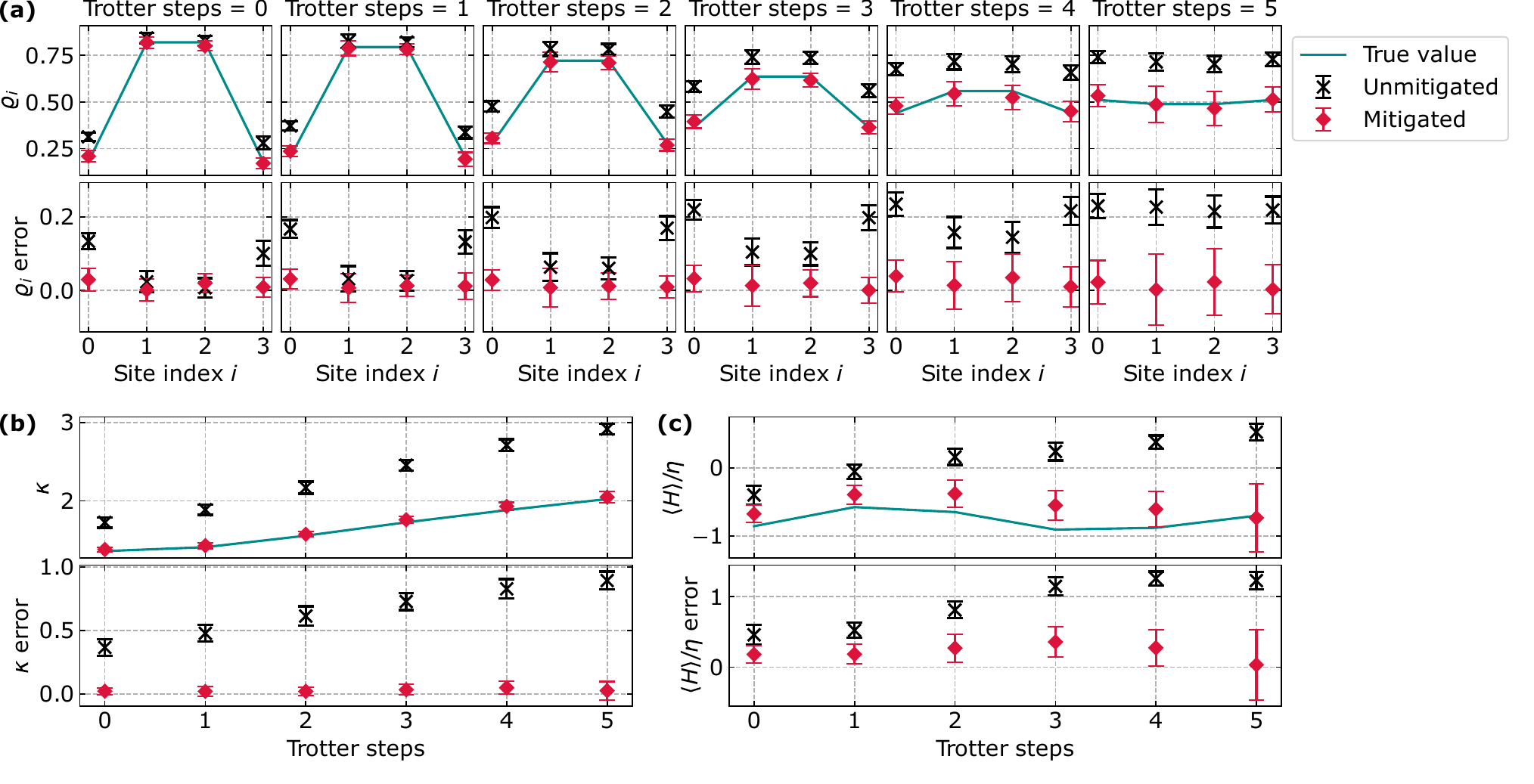}
\caption{\textbf{Prediction of local properties in the four-site 1D Fermi--Hubbard model undergoing Trotterized time evolution.} The interaction strength is set to $U/t = 4$ and the Trotter step size is $\delta t = 0.2$. The noise model is that of the Google Sycamore Rainbow processor~\cite{arute2020hartree,arute2020observation}, implemented within the Cirq QVM~\cite{isakov2021simulations,cirq}. The size of the noisy circuit grows systematically with the number of Trotter steps. We use spin-adapted matchgate shadows, taking $T = 9.6 \times 10^5$ samples. \textbf{(a)} Charge density $\varrho_i$ at each site $i \in [L]$. \textbf{(b)} Charge spread $\kappa$. \textbf{(c)} Energy per electron $\langle H \rangle / \eta$. Error bars denote one standard error of the mean.}
\label{fig:fh_qvm}
\end{figure*}

\textbf{QVM noise model (fermions).} Now we turn to the gate-level noise model simulated through the QVM~\cite{cirq,isakov2021simulations}. This model strongly violates Assumptions~\ref{assumption_1}, reflecting the fact that the state-preparation circuit $U_{\mathrm{prep}}$ is typically the dominant source of errors.

As our testbed fermionic system, we again consider the 1D spinful Fermi--Hubbard chain with open boundary conditions and interaction strength $U/t = 4$. Rather than the static problem, here we simulate Trotterized time evolution of the Hamiltonian. The number of Trotter steps provides a systematic way to increase the circuit depth (and hence the cumulative amount of noise) within the same model. Note that because we are studying the behavior of error mitigation, the ground truth of these simulations corresponds to the noiseless Trotter circuit with a finite step size (i.e., we are not comparing to the exact, non-Trotterized dynamics).

We closely follow the setup of the experiment performed in Ref.~\cite{arute2020observation} (which was in fact performed on the Sycamore processor that our noise model is based on), using code made available by the authors at Ref.~\cite{recirq}. Because simulating the full noisy circuit is exponentially expensive, we restrict to a four-site instance ($n = 2L = 8$). The initial state is the ground state in the $\eta_{\uparrow}, \eta_{\downarrow} = 1$ sector of the noninteracting Hamiltonian
\begin{equation}
\begin{split}
    H_0 = J + \sum_{i \in [L]} \sum_{\sigma \in \{\uparrow, \downarrow\}} \varepsilon_{i,\sigma} N_{i, \sigma},
\end{split}
\end{equation}
where $J$ is the hopping term defined in Eq.~\eqref{eq:hopping_term} and we set the on-site potentials to have a Gaussian form, $\varepsilon_{i,\sigma} = -\lambda_\sigma e^{-\frac{1}{2} (i + 1 - c)^2 / s^2}$. This generates a Slater determinant whose charge density
\begin{equation}
    \varrho_i = \langle N_{i,\uparrow} + N_{i,\downarrow} \rangle.
\end{equation}
has a Gaussian profile, centered around $c$ with width $s$ and magnitude $\lambda_\sigma$. We set the parameters to $c = L/2 + 1/2 = 2.5$, $s = 7/3$, and $\lambda_\sigma = 4\delta_{\sigma,\uparrow}$. This initial state is prepared by the appropriate single-particle basis rotations~\cite{wecker2015solving,kivlichan2018quantum,jiang2018quantum} on the state $\ket{1000}$ within each spin sector. Denote this unitary by $U(H_0)$. The system is then evolved by Trotterized dynamics according to $H$, with $R \in \{0, 1, \ldots, 5\}$ steps of size $\delta t = 0.2$. Let $J_{\mathrm{even}}$ (resp., $J_{\mathrm{odd}}$) be the terms in $J$ with $i$ even (resp., odd), and similarly for $V_{\mathrm{even}}, V_{\mathrm{odd}}$. One Trotter step is ordered as
\begin{equation}
    U_{\mathrm{Trot}} = e^{-\i J_{\mathrm{odd}} \delta t} e^{-\i V_{\mathrm{odd}} \delta t} e^{-\i V_{\mathrm{even}} \delta t} e^{-\i J_{\mathrm{even}} \delta t},
\end{equation}
which is then compiled into the native gate set. The full state-preparation circuit is then
\begin{equation}
    U_{\mathrm{prep}}(R) = U_{\mathrm{Trot}}^R U(H_0) X_{0,\downarrow} X_{0,\uparrow},
\end{equation}
where $X_{0,\sigma}$ places a spin-$\sigma$ electron on the first site from the vacuum (i.e., prepares $\ket{1000}$ in each spin sector). Note that $R = 0$ corresponds to only preparing the initial Slater determinant, which still has nontrivial circuit depth. Further details on the construction of these circuits can be found in Refs.~\cite{arute2020observation,recirq}.

\begin{figure*}
\includegraphics[scale=0.5]{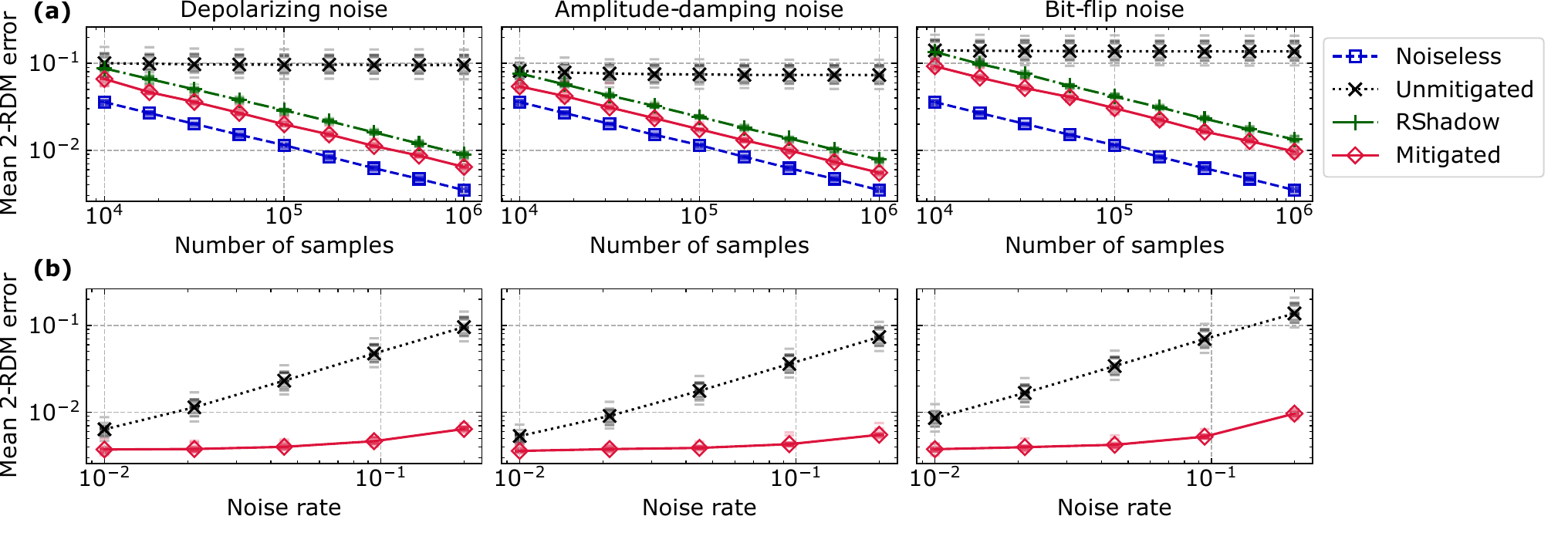}
\caption{\textbf{Average estimation error of the 2-RDM over all two-qubit subsystems, reconstructed from (subsystem-symmetrized) Pauli shadows.} Error is quantified by the spectral-norm difference between the estimated and true 2-RDM for each subsystem, taking the mean over all $\binom{n}{2}$ subsystems. We simulate the protocol on a collection of 20 random eight-qubit matrix product states, with maximum bond dimension $\chi \leq n$ and fixed $Z$ magnetization, $m = 0$. Because $m = 0$ implies $s_1 = 0$, we simulate a system of nine qubits with the ancilla prepared in $\ket{0}$. \textbf{(a)} Scaling of estimation error with the total number of samples $T$, fixing the noise rate to $p = 0.2$ for all noise models. \textbf{(b)} Scaling of the estimation error with the noise rate $p$, fixing the total number of samples to $T = 10^6$.}
\label{fig:random_mps}
\end{figure*}

One final detail of Ref.~\cite{arute2020observation} that we follow is their method of qubit assignment averaging (QAA). This technique is employed as a means of ameliorating inhomogeneities in error rates across the quantum device. QAA works by identifying a collection of different assignments for the physical qubit labels and uniformly averaging over them (keepng the Jordan--Wigner convention fixed). For example, one may vary qubit assignments by selecting a different portion of the chip, or rotating/flipping the layout. Here, we fix a $2 \times 4$ grid of qubits and perform QAA over four different orderings of those eight qubits;~see Supplementary Section~\ref{subsec:qaa} for the specific assignments chosen.

For each target state $U_{\mathrm{prep}}(R)\ket{0^n}$, we collect $T = 9.6 \times 10^5$ spin-adapted matchgate shadow samples. In Figure~\ref{fig:fh_qvm}, we plot the Trotterized time evolution of charge density throughout the chain, as well as the charge spread
\begin{equation}
    \kappa = \sum_{i \in [L]} \l| i - (L - 1)/2 \r| \varrho_i,
\end{equation}
which quantifies how the density spreads away from the center of the chain. These quantities are only one-body observables, so as an exemplary two-body observable we also estimate the energy per electron, $\langle H \rangle / \eta$.

Because Assumptions~\ref{assumption_1} no longer hold, we no longer have the guarantees of Theorem~\ref{thm:main_theorem} and we do not observe an arbitrary amount of error mitigation. We see that as the circuit size grows, so too do the prediction error and uncertainty. This behavior is a reflection of the noise assumptions being increasingly violated. Nonetheless, our results still show a substantial amount of noise reduction, and overall we maintain the qualitative features of the dynamics compared to the unmitigated protocol. In Supplementary Section~\ref{subsec:gate_dep}, we provide a quantitative estimate of how much the QVM noise model violates Assumptions~\ref{assumption_1}. There we observe a fundamental error floor which is roughly an order of magnitude below the mitigated errors actually achieved here, indicating the potential for further mitigation beyond what we have presently demonstrated.\\

\textbf{Qubit systems.} Next, we study the application of symmetry-adjusted classical shadows to subsystem-symmetrized Pauli shadows, to predict one- and two-body qubit observables in the presence of noise. We consider a fixed magnetization symmetry $\tr(M \rho) = m$, which, as we show in Section~\ref{sec:applications_qubits}, projects into the relevant irreps $R' = \{1, 2\}$ as
\begin{align}
    S_1 &= \Pi_1(M) = \sum_{i \in [n]} Z_i,\\
    S_2 &= \Pi_2(M^2) = 2 \sum_{i < j} Z_i Z_j.
\end{align}
The ideal symmetry values in this case are
\begin{align}
    s_1 &= m,\\
    s_2 &= m^2 - n.
\end{align}

\textbf{Readout noise model (qubits).} For our first demonstration, we simulate random matrix product states (MPS) with maximum bond dimension $\chi \leq n$, lying in the $m = 0$ symmetry sector of $M$. We use the definition of a random MPS from Refs.~\cite{garnerone2010typicality,garnerone2010statistical}. Numerically, we implement all MPS calculations using the open-source software ITensor~\cite{fishman2022itensor}, which can guarantee the correct symmetry sector using efficient tensor-network representations. Within such representations, it is straightforward to apply random local Clifford gates and $\SWAP$ gates, and to sample measurements in the computational basis.

\begin{figure*}
\centering
\includegraphics[scale=0.5]{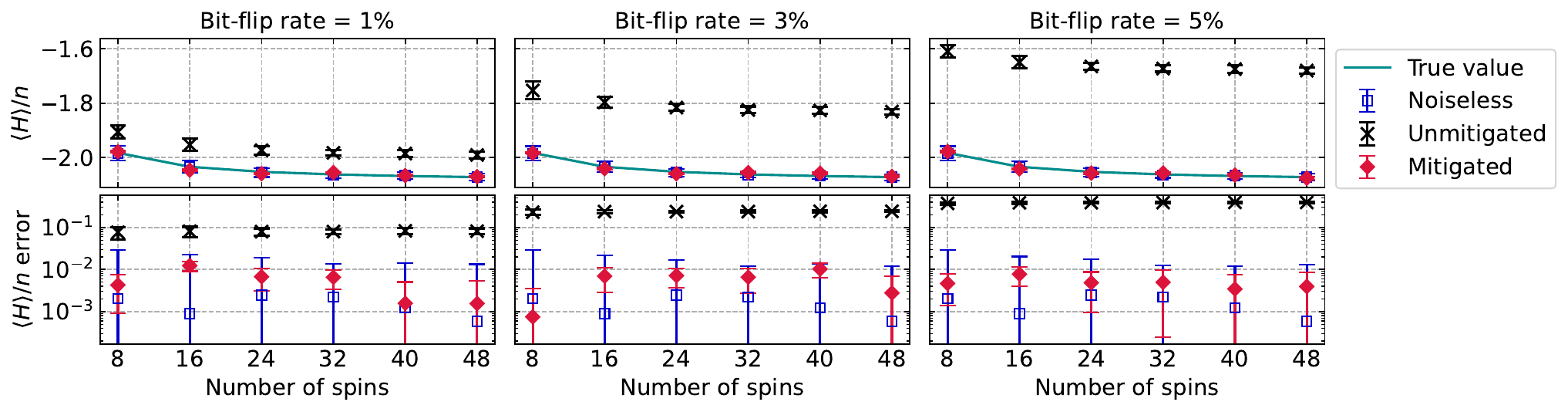}
\caption{\textbf{Estimation of energy per particle in the 1D XXZ ground state.} We measure $\langle H \rangle / n$ from $T = 10^6$ subsystem-symmetrized Pauli shadows. The anisotropy of the XXZ model is $\Delta = 1.5$. Note that, as an antiferromagnetic model, the ground state exhibits an $m = 0$ net magnetization symmetry. The noise model is single-qubit bit flip with probabilities $p = 1\%$, $3\%$, and $5\%$. Error bars denote one standard error of the mean.}
\label{fig:xxz_S2_and_energy}
\end{figure*}

Unlike fermions, qubits are not symmetrized, so their 2-RDMs
\begin{equation}
    {}^2 D_{ij} = \tr_{[n] \setminus \{ i, j \}} \rho
\end{equation}
are in general distinct between different two-qubit subsystems. Our accuracy metric here is therefore the mean 2-RDM error over all pairs of qubits:
\begin{equation}
    \epsilon = \frac{1}{\binom{n}{2}} \sum_{i<j} \| {{}^2 \hat{D}_{ij}(T)} - {{}^2 D_{ij}} \|_\infty.
\end{equation}
From subsystem-symmetrized Pauli shadows $\hat{\rho}(T)$ of size $T$, we reconstruct the qubit 2-RDMs by estimating all one- and two-local Pauli expectation values and forming the $4 \times 4$ matrices
\begin{equation}
    {}^2 \hat{D}_{ij}(T) = \frac{1}{4} \sum_{W, W' \in \Pauli(1)} \tr\l( W_i W'_j \hat{\rho}(T) \r) W \otimes W'.
\end{equation}

The results are shown in Figure~\ref{fig:random_mps}. Similar to the conclusions drawn from Figure~\ref{fig:random_slater} for the fermionic case, we observe that our theory is validated in two important parameters (number of samples and error rate). We note here that this simple demonstration also validates our subsystem-symmetrized Pauli shadows protocol and the use of ancillas in this scenario as well (recall that the random states we study here have vanishing symmetry value, $m = s_1 = 0$).

Our next set of numerical experiments are performed on the ground state of an antiferromagnetic XXZ Heisenberg chain with open boundary conditions:
\begin{equation}
    H = J \sum_{i \in [n-1]} \l( X_i X_{i+1} + Y_i Y_{i+1} + \Delta Z_i Z_{i+1} \r).
\end{equation}
Throughout, we set units such that $J = 1$ and consider an anisotropy of $\Delta = 1.5$. This Hamiltonian commutes with the symmetry operator $M$, and in particular the ground state obeys $m = 0$ (assuming the number of spins $n$ is even). We find the ground state via the density-matrix renormalization group (DMRG) algorithm~\cite{white1992density}, represented as an MPS;~therefore we can employ the same classical simulation algorithms as before. Although $m = 0$ implies a vanishing conserved quantity for the one-body subspace, $s_1 = 0$, we do not employ the ancilla technique for these simulations because we will only be interested in predicting strictly two-body observables (for which $s_2 = m^2 - n \neq 0$).

In Figure~\ref{fig:xxz_S2_and_energy} we show the mitigation of energy per spin $\langle H \rangle / n$ at different system sizes and bit-flip rates on each qubit. For these experiments, the number of samples taken is $T = 10^6$. Again, the results validate our theory for Pauli-shadow symmetry adjustment over a range of noise rates and system sizes. In particular, although we require estimating the symmetry operator $M$ which has variance $\O(n)$ (as opposed to $H/n$, which has constant variance), we see that in practice it suffices to take a number of samples constant in system size. This may indicate that our analysis of the worst-case sampling bounds for symmetry-adjusted classical shadows may be overly pessimistic in typical settings.\\

\begin{figure*}
\centering
\includegraphics[scale=0.5]{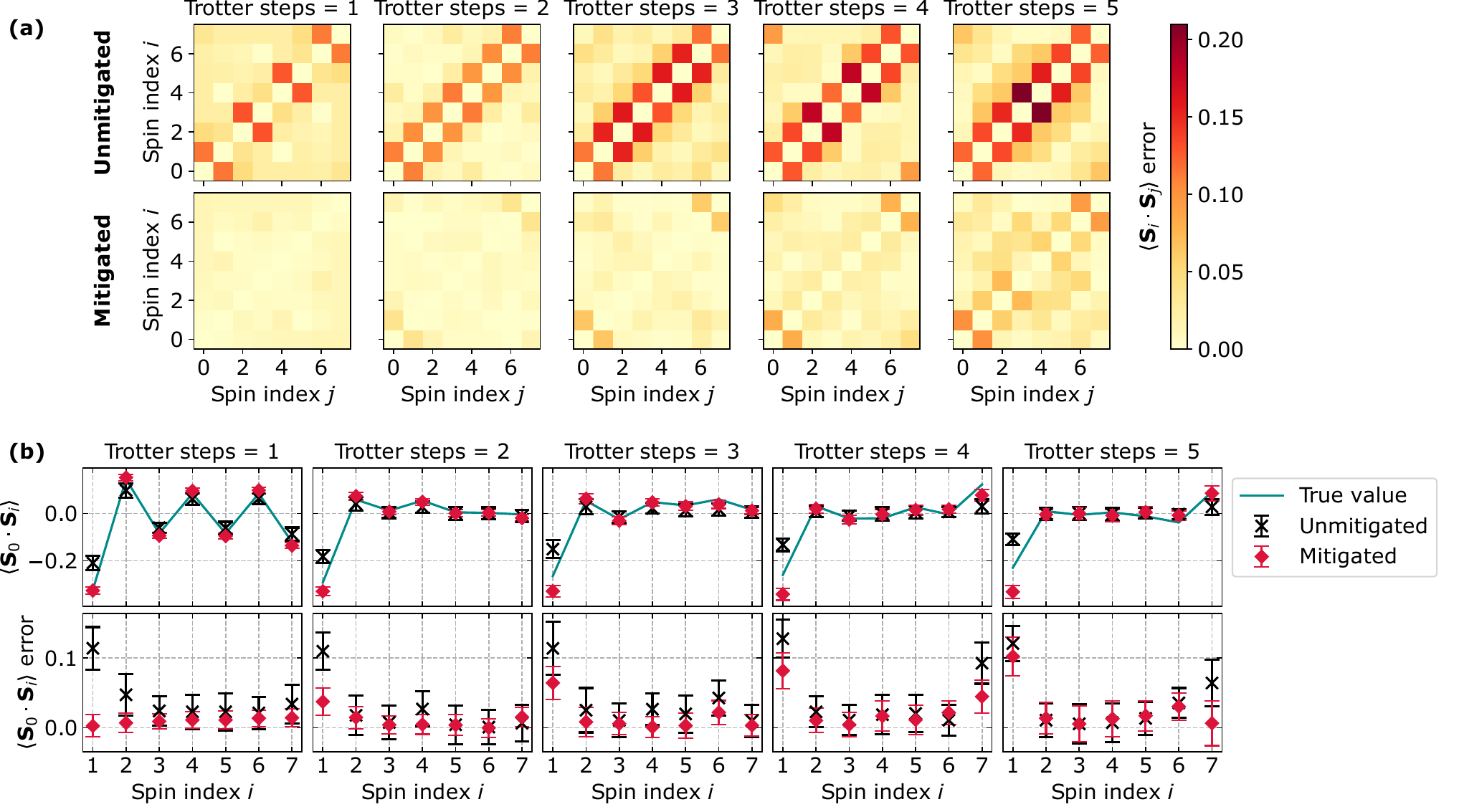}
\caption{\textbf{Prediction of spin--spin correlations between qubits in the XXZ chain undergoing Trotterized time evolution.} The anistropy in the model is $\Delta = 1.5$, and we take a Trotter step size of $\delta t = 0.2$. The noise model is the superconducting-hardware model implemented within the QVM. We take $T = 4.8 \times 10^5$ subsystem-symmetrized Pauli shadows to estimate the observables. \textbf{(a)} Heatmaps of estimation errors for $\langle \bm{S}_i \cdot \bm{S}_j \rangle$, under unmitigated versus our symmetry-enabled mitigated postprocessing. \textbf{(b)} Plots of the $\langle \bm{S}_0 \cdot \bm{S}_i \rangle$ correlation functions, to show further details of that particular row of the heatmaps. Error bars denote one standard error of the mean.}
\label{fig:xxz_qvm_corr}
\end{figure*}

\textbf{QVM noise model (qubits).} We now turn to simulations using the QVM noise model, taking the same XXZ Heisenberg spin chain ($\Delta = 1.5$ and $n = 8$) as our testbed system. Similar to our numerical experiments with the Fermi--Hubbard model, we simulate Trotter circuits of the XXZ model starting from a product state within the symmetry sector of $m = 0$. Again, we will only be interested in strictly two-local observables so we do not employ the ancilla trick here.

Our initial state is a N\'{e}el-ordered product state, $\ket{01010101} = \prod_{j \text{ odd}} X_j \ket{0^n}$. Defining $H_{\mathrm{even}}$ and $H_{\mathrm{odd}}$ as the terms in $H$ with $i$ even and odd, respectively, a single Trotter step is given by
\begin{equation}
    U_{\mathrm{Trot}} = e^{-\i H_{\mathrm{odd}} \delta t} e^{-\i H_{\mathrm{even}} \delta t},
\end{equation}
where we take the step size to be $\delta t = 0.2$. Hence, the full state-preparation circuit for $R$ steps is
\begin{equation}
    U_{\mathrm{prep}}(R) = U_{\mathrm{Trot}}^R \prod_{j \text{ odd}} X_j,
\end{equation}
which is then compiled into the native gate set. For each $R$, we collect $T = 4.8 \times 10^5$ samples using subsystem-symmetrized Pauli shadows. Because the initial state is a simple basis state, we only display results for $R \in \{1, \ldots, 5\}$ for these studies. In line with our Fermi--Hubbard simulations on the QVM, we perform QAA 
here as well, averaging over twelve different assignments of the same $2 \times 4$ qubits (see Supplementary Section~\ref{subsec:qaa} for details).

First, we consider the spin--spin correlations $\langle \bm{S}_i \cdot \bm{S}_j \rangle$, where
\begin{equation}
    \bm{S}_i = \frac{1}{2} \begin{pmatrix}
    X_i \\ Y_i \\ Z_i
    \end{pmatrix},
\end{equation}
for all qubit pairs $(i, j)$ throughout the chain. We plot the prediction errors of these correlation functions in Figure~\ref{fig:xxz_qvm_corr}, with the unmitigated data in the first row and mitigated data in the second row. We observe that, while the shallower Trotter circuits are well handled by symmetry-adjusted classical shadows, the mitigation power diminishes as the circuit grows deeper. To examine this effect closer, we plot in the bottom two rows of Figure~\ref{fig:xxz_qvm_corr} the correlation functions between the first spin and the rest of the chain. We see that the $\langle \bm{S}_0 \cdot \bm{S}_{1} \rangle$ errors are particularly dominant due to the magnitude of its true value. Although the absolute error is only marginally improved by symmetry adjustment for some of these pairs, the qualitative behavior is more faithfully recovered than in the unmitigated data (wherein the increasing circuit noise washes out the antiferromagnetic correlations).

\begin{figure*}
\centering
\includegraphics[scale=0.5]{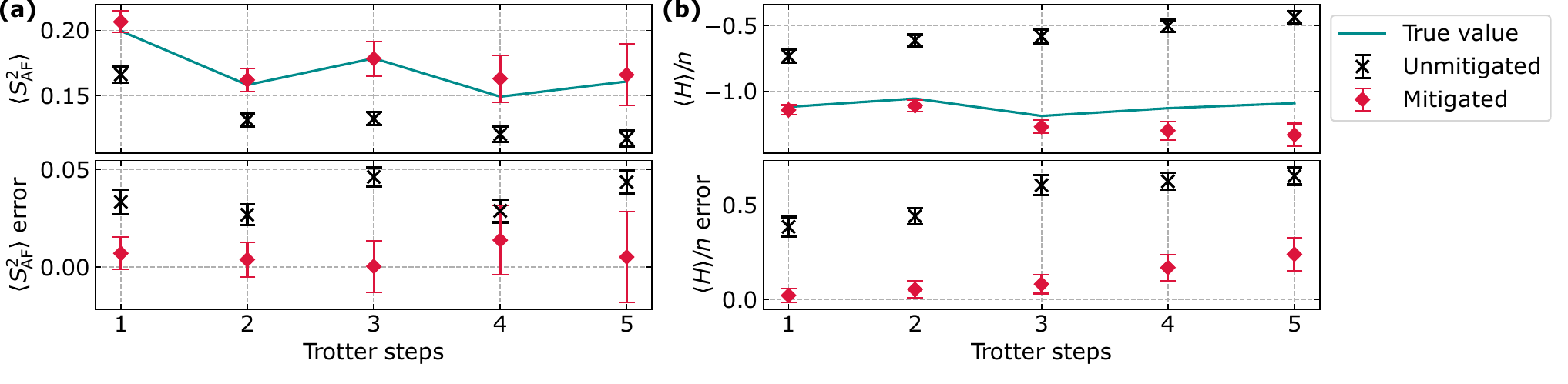}
\caption{\textbf{Prediction of macroscopic observables in the Trotterized XXZ model under the QVM noise model.} Observables are estimated with $T = 4.8 \times 10^5$ subsystem-symmetrized Pauli shadows. \textbf{(a)} N\'{e}el order parameter $\langle S_{\mathrm{AF}}^2 \rangle$. \textbf{(b)} Energy per spin $\langle H \rangle / n$. Error bars denote one standard error of the mean.}
\label{fig:xxz_qvm_props}
\end{figure*}

Next, we consider macroscopic observables in Figure~\ref{fig:xxz_qvm_props}:~the N\'{e}el order parameter
\begin{equation}
    \langle S_{\mathrm{AF}}^2 \rangle = \frac{1}{n^2} \sum_{i,j \in [n]} (-1)^{i+j} \langle \bm{S}_i \cdot \bm{S}_j \rangle,
\end{equation}
and the energy per spin $\langle H \rangle / n$. Again we see general trends similar to the other QVM simulations:~the mitigated results are in closer qualitative agreement with the true values than the unmitigated data, at the cost of larger uncertainty bars, and without arbitrary amounts of error suppression. Symmetry adjustment consistently reduces the absolute error compared to the unmitigated data, although we note that some of the energy estimates are still a few standard deviations away from the true value. This is attributed to the violation of Assumptions~\ref{assumption_1}, and we leave it an open problem of how to further ameliorate this property.

\section{Discussion}\label{sec:discussion}

In this paper, we have introduced symmetry-adjusted classical shadows, a QEM protocol applicable to quantum systems with known symmetries. Our approach builds on the highly successful classical-shadow tomography~\cite{huang2020predicting,paini2021estimating}, modifying the classically computed linear-inversion step according to symmetry information in the presence of noise. Because our strategy is performed in postprocessing on the noisy measurement data, it allows for straightforward combinations with other QEM strategies. As opposed to prior related works~\cite{karalekas2020quantum,chen2021robust,koh2022classical,van2022model,arrasmith2023development}, the main advantage of our approach is the use of the entire noisy circuit, thereby bypassing the need for calibration experiments and accounting for errors in state preparation. Meanwhile, in contrast with other symmetry-based strategies~\cite{bonet2018low,mcardle2019error,cai2021quantum,jnane2023quantum}, we require no additional quantum resources, utilize finer-grained symmetry information, and can easily take advantage of a wider range of symmetries (e.g., particle number as opposed to only parity conservation). We note that, while this work has focused on local observables (linear functions of $\rho$), classical shadows can seamlessly be used for nonlinear observable estimation as well~\cite{huang2020predicting}. Because symmetry adjustment works at the level of the the shadow channel inversion, our QEM strategy applies within that context just as well.

Overall, our findings reveal that as a low-cost scheme, symmetry-adjusted classical shadows by itself is already potent for practical error mitigation. Our analytical results guarantee the accuracy of prediction under readout noise assumptions. Even when these assumptions are violated in practice, we expect these results to still provide intuition regarding the mitigation behavior. Indeed, this expectation is validated by our numerical experiments with superconducting-qubit noise models on the Cirq QVM~\cite{cirq,isakov2021simulations}. From these simulations, we have observed substantial quantitative improvement when the cumulative circuit noise is sufficiently weak, and qualitative improvements across all experiments performed.

Along the way, we have developed a number of ancillary results that may also be of independent interest. Of note are (1)~the subsystem-symmetrized Pauli shadows, which uniformly symmetrizes the irreps of the local Clifford group among subsystems;~(2)~an improved circuit compilation scheme for fermionic Gaussian unitaries, which treats Majorana modes on a more natural footing to improve two-qubit gate parallelization;~and (3)~symmetry-adapted matchgate shadows, which uses block-diagonal transformations within spin sectors to reduce the size of the random matchgate circuits. We expect that these techniques will find broader applicability in quantum simulation beyond the scope of this paper.

A number of pertinent open questions and future directions remain. For simplicity of the protocol, and because of the examples that we focused on, we restricted attention to multiplicity-free groups. However, tools to generalize to non-multiplicity-free groups already exist, and in the context of character randomized benchmarking~\cite{helsen2019new} such an extension has been developed successfully~\cite{claes2021character}. It would therefore be useful to extend our ideas similarly, and investigate what effect (if any) multiplicities have on symmetry-adjusted classical shadows.

Regarding the protocols considered, we have focused on local observable estimation in systems with global $\U(1)$ symmetry. However, it is worth noting that the $n$-qubit Clifford group possesses only one nontrivial irrep, making it essentially compatible with any symmetry. Because its shadow norm is exponentially large for local observables, it is an unfavorable choice for typical quantum-simulation applications. One may wonder whether the desirable universality of this irrep can nonetheless be harnessed, analogous to how we constructed the subsystem-symmetrized Pauli shadows. A particularly interesting candidate for future studies would be the global $\SU(2)$/$\Cl(1)$ control, introduced in Ref.~\cite{van2022hardware}, which features both a group with high amounts of symmetry and low sample complexity for local observable estimation. Similarly, the single-fermion $\U(n)$ basis rotations used in Ref.~\cite{low2022classical} would also be a promising option to study. Alternatively, one may consider different classes of symmetries, such as local (rather than global) symmetries.

One key advantage of symmetry adjustment is its flexibility, allowing for easy integration with other error-mitigation strategies. Investigating this interplay is a clear target for future work. Particularly valuable would be other techniques to massage the circuit noise into approximately satisfying Assumptions~\ref{assumption_1}, for instance by randomized compiling~\cite{wallman2016noise}. From our usage of QAA~\cite{arute2020observation} in the numerical experiments, we have already shown heuristically that the mere choice of qubit assignments appears to have such an effect.

Indeed, the reliance on such assumptions for rigorous guarantees may be viewed as a limitation of this work. While our numerical results are encouraging, it behooves one to seek a more comprehensive error analysis applicable to a wider range of noise models. For example, while gate-dependent errors are particularly detrimental to our method, they have been closely studied in the context of randomized benchmarking~\cite{proctor2017randomized,wallman2018randomized,carignan2018randomized,merkel2021randomized}. The tools developed therein may be valuable to this setting as well. Establishing a better understanding here may also inspire extensions to surpass the limitations of the current theory. We leave such goals to future work.

\emph{Note added.}---Shortly after our manuscript appeared on the arXiv preprint server, two related works~\cite{wu2023error,brieger2023stability} subsequently appeared. The former develops a calibration estimator equivalent to our Eq.~\eqref{eq:fermion_RSE_estimator}, while the latter analytically studies the effects of gate-dependent noise on Clifford shadow protocols. The formulation and analyses of symmetry-adjusted classical shadows remain original to our manuscript.

\section{Methods}\label{sec:methods}

\subsection{Theory of symmetry-adjusted classical shadows}\label{sec:main_theory}

Here we describe the theory behind the symmetry-adjusted classical shadows estimator. This approach uses known symmetry information about the ideal, noiseless state $\rho$ that we wish to prepare (but are only able to produce a noisy version of). In this section we describe the idea for an arbitrary multiplicity-free group $G$;~in the subsequent subsections, we will provide concrete applications to the efficient estimation of local fermionic and qubit observables, respectively.

Suppose $ \rho $ is a quantum state obeying a known symmetry, corresponding to a collection of operators $S_\lambda \in V_\lambda$ for which the values $ s_\lambda \coloneqq \vip{S_\lambda}{\rho} $ are known \emph{a priori}. For example, if the system has a symmetry operator $S$ which spans multiple irreps, then we can construct $S_\lambda$ using the projectors $\Pi_\lambda$:
\begin{equation}
    \kket{S_\lambda} = \Pi_\lambda \kket{S}.
\end{equation}
By construction, $S_\lambda$ is an eigenoperator of both $\mathcal{M}$ and $\widetilde{\mathcal{M}}$:
\begin{equation}\label{eq:symmetry_eigenop}
\begin{split}
    \mathcal{M}\kket{S_\lambda} &= f_\lambda \kket{S_\lambda},\\
    \widetilde{\mathcal{M}}\kket{S_\lambda} &= \widetilde{f}_\lambda \kket{S_\lambda}.
\end{split}
\end{equation}
If one is interested in only a subset $R'
\subseteq R_G$ of the irreps, then it suffices to only know those symmetries $S_\lambda$ for which $\lambda \in R'$.

Because the ideal values of $ s_\lambda $ and $ f_\lambda $ are already known, we can use the estimated noisy expectation value of $ S_\lambda $ to build an estimate for $\widetilde{f}_\lambda$. We start with the standard postprocessing of classical shadows:~applying $\mathcal{M}^{-1}$ to the measurement outcomes of the noisy quantum experiments produces, in expectation, the effective state
\begin{equation}\label{eq:noisy_rho}
    \kket{\widetilde{\rho}} \coloneqq \mathcal{M}^{-1}\widetilde{\mathcal{M}} \kket{\rho} = \E_{g \sim G, b \sim \widetilde{\mathcal{U}}_g \kket{\rho}} \mathcal{M}^{-1} \mathcal{U}_{g}^\dagger \kket{b},
\end{equation}
which clearly differs from $\kket{\rho}$ when $\widetilde{\mathcal{M}} \neq \mathcal{M}$. Nonetheless, we can use this noisy data to estimate the value of $\vip{S_\lambda}{\widetilde{\rho}}$, which is equal to
\begin{equation}\label{eq:noisy_symmetry}
    \vip{S_\lambda}{\widetilde{\rho}} = \frac{\widetilde{f}_\lambda}{f_\lambda} s_\lambda
\end{equation}
by Eq.~\eqref{eq:symmetry_eigenop}. In fact, this relation applies to any $O \in V_\lambda$:
\begin{equation}\label{eq:EM_estimator_recovery}
    \vip{O}{\widetilde{\rho}} = \frac{\widetilde{f}_\lambda}{f_\lambda} \vip{O}{\rho}.
\end{equation}
Hence while we use Eq.~\eqref{eq:noisy_symmetry} to learn $\widetilde{f}_\lambda$ from the symmetry $S_\lambda$, this is in turn applicable to all other operators within the same irrep. This leads to the recovery of the ideal expectation values as
\begin{equation}\label{eq:corrected_observable}
    \vip{O}{\rho} = \frac{\vip{O}{\widetilde{\rho}}}{\vip{S_\lambda}{\widetilde{\rho}} / s_\lambda}.
\end{equation}

Having established the theory in expectation, we now analyze the implementation in practice. Let $T$ be the number of classical-shadow snapshots, $\kket{\hat{\rho}_\ell} = \mathcal{M}^{-1} \mathcal{U}_{g_\ell}^\dagger \kket{b_\ell}$ for $\ell = 1, \ldots, T$, obtained by sampling the noisy quantum computer. Recall that these snapshots converge to $\widetilde{\rho}$ rather than $\rho$. From their empirical average, $\hat{\rho}(T) = (1/T) \sum_{\ell=1}^T \hat{\rho}_{\ell}$, we can estimate the lefthand side of Eq.~\eqref{eq:noisy_symmetry} as
\begin{equation}
    \hat{s}_\lambda(T) \coloneqq \vip{S_\lambda}{\hat{\rho}(T)}.
\end{equation}
This in turn provides an estimate for $\widetilde{f}_\lambda$,
\begin{equation}\label{eq:f_hat_est}
    \hat{f}_\lambda(T) \coloneqq f_\lambda \frac{\hat{s}_\lambda(T)}{s_\lambda}.
\end{equation}
This can be understood as a generalization of $\mathrm{NoiseEst}_G(\lambda, g, b)$ from Eq.~\eqref{eq:noiseest}, making the replacements $D_\lambda \to S_\lambda$ and $\op{0^n}{0^n} \to \rho$. Indeed, one can view the calibration state $\ket{0^n}$ as obeying the symmetries given by its stabilizer group.

Consider the estimation of observables $O_1, \ldots, O_L$ with symmetry-adjusted classcial shadows. If any observable is supported over multiple irreps, then we can always decompose it as a linear combination of basis elements across those irreps. Thus without loss of generality we suppose that each $O_j \in V_\lambda$ for some $\lambda \in R'$. From the same noisy classical shadow $\hat{\rho}(T)$, we also have estimates for their noisy expectation values:~$\E\vip{O_j}{\hat{\rho}(T)} = \vip{O_j}{\widetilde{\rho}_j}$. Then, following Eq.~\eqref{eq:corrected_observable} we can directly construct error-mitigated estimators as
\begin{equation}\label{eq:EM_estimator}
    \hat{o}_j^{\mathrm{EM}}(T) \coloneqq \frac{\hat{o}_j(T)}{\hat{s}_\lambda(T) / s_\lambda},
\end{equation}
which converges to $\vip{O_j}{\rho}$ in the $T \to \infty$ limit (if Assumptions~\ref{assumption_1} hold). Because $\E[X/Y] \neq \E[X]/\E[Y]$ (for nontrivial random variables $X$ and $Y$), Eq.~\eqref{eq:EM_estimator} describes a biased estimator. In the following theorem, we quantify this bias by bounding the total prediction error of $\hat{o}_j^{\mathrm{EM}}(T)$. This in turn bounds the number of symmetry-adjusted classical-shadow samples $T$ required.

\begin{theorem}\label{thm:main_theorem}
Fix accuracy and confidence parameters $\epsilon, \delta \in (0, 1)$. Let $O_1, \ldots, O_L$ be a collection of observables, each supported on an irrep of $\mathcal{U} : G \to \U(\mathcal{L}(\mathcal{H}))$ as $O_j \in V_\lambda$ for $\lambda \in R' \subseteq R_G$. Let $S_\lambda \in V_\lambda$ be a symmetry operator for each $\lambda \in R'$, for which the ideal values $s_\lambda = \tr(S_\lambda \rho)$ of the target state $\rho$ are known \emph{a priori}. Suppose that each noisy unitary satisfies Assumptions~\ref{assumption_1}, $\widetilde{\mathcal{U}}_g = \mathcal{E} \mathcal{U}_g$, and define the quantities
\begin{align}
    F_{Z,R'}(\mathcal{E}) &\coloneqq \min_{\lambda \in R'} \frac{\tr(\mathcal{E} \mathcal{M}_Z \Pi_\lambda)}{\tr(\mathcal{M}_Z \Pi_\lambda)},\\
    \sigma^2 &\coloneqq \max_{1 \leq j \leq L, \lambda \in R'}\l\{ \V[\hat{o}_j], \V\l[\frac{\hat{s}_\lambda}{s_\lambda} \r] \r\}.
\end{align}
Then, a (noisy) classical shadow $\hat{\rho}(T)$ of size
\begin{equation}
    T = \O\l( \frac{\log((L + |R'|)/\delta)}{F_{Z,R'}(\mathcal{E})^2 \epsilon^2} \sigma^2 \r)
\end{equation}
can be used to construct error-mitigated estimates
\begin{equation}
    \hat{o}_j^{\mathrm{EM}}(T) \coloneqq \frac{\tr(O_j \hat{\rho}(T))}{\tr(S_\lambda \hat{\rho}(T)) / s_\lambda}
\end{equation}
which obey
\begin{equation}
    |\hat{o}_j^{\mathrm{EM}}(T) - \tr(O_j\rho)| \leq (\|O_j\|_\infty + 1) \epsilon + \O(\|O_j\|_\infty \epsilon^2)
\end{equation}
for all $1 \leq j \leq L$, with success probability at least $1 - \delta$.
\end{theorem}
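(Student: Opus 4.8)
The plan is to control the error of the ratio estimator $\hat{o}_j^{\mathrm{EM}}(T) = \hat{o}_j(T) / (\hat{s}_\lambda(T)/s_\lambda)$ by first establishing concentration for the numerator and denominator separately, then propagating those bounds through a first-order expansion of the ratio. First I would recall the two facts established earlier in the excerpt: under Assumptions~\ref{assumption_1}, $\E[\hat{o}_j(T)] = \vip{O_j}{\widetilde{\rho}} = (\widetilde{f}_\lambda/f_\lambda)\vip{O_j}{\rho}$ (Eq.~\eqref{eq:EM_estimator_recovery}), and $\E[\hat{s}_\lambda(T)/s_\lambda] = \widetilde{f}_\lambda/f_\lambda$ (Eqs.~\eqref{eq:noisy_symmetry} and \eqref{eq:f_hat_est}). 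Thus the denominator concentrates around $r_\lambda \coloneqq \widetilde{f}_\lambda/f_\lambda$, and the ideal ratio $\vip{O_j}{\widetilde{\rho}}/(r_\lambda) = \vip{O_j}{\rho}$ is exactly the target. So the only sources of error are the statistical fluctuations of the two empirical averages around their means.

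Next I would quantify $r_\lambda$ in terms of the noise fidelity. From Eq.~\eqref{eq:noisy_channel_eigval}, $\widetilde{f}_\lambda = \tr(\mathcal{M}_Z\mathcal{E}\Pi_\lambda)/\tr(\Pi_\lambda)$ and $f_\lambda = \tr(\mathcal{M}_Z\Pi_\lambda)/\tr(\Pi_\lambda)$; using the Hilbert--Schmidt self-adjointness of $\mathcal{M}_Z$ and $\Pi_\lambda$ one rewrites $\widetilde{f}_\lambda/f_\lambda = \tr(\mathcal{E}\mathcal{M}_Z\Pi_\lambda)/\tr(\mathcal{M}_Z\Pi_\lambda) \ge F_{Z,R'}(\mathcal{E})$ by definition, for each $\lambda \in R'$. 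This is what makes $F_{Z,R'}(\mathcal{E})^{-2}$ the natural overhead: the denominator of the ratio estimator sits near $r_\lambda \ge F_{Z,R'}(\mathcal{E})$, so dividing by it amplifies the numerator's fluctuations by at most $1/F_{Z,R'}(\mathcal{E})$. Then I would invoke the shadow-norm concentration machinery: with $T = \O(\sigma^2 \log((L+|R'|)/\delta) / (F_{Z,R'}(\mathcal{E})^2\epsilon^2))$ samples, a union bound over the $L$ observables and $|R'|$ symmetry operators (each via Chebyshev/median-of-means, or Bernstein given the variance bound $\sigma^2$) gives, simultaneously with probability $\ge 1-\delta$, $|\hat{o}_j(T) - \vip{O_j}{\widetilde{\rho}}| \le \|O_j\|_\infty F_{Z,R'}(\mathcal{E})\epsilon$ and $|\hat{s}_\lambda(T)/s_\lambda - r_\lambda| \le F_{Z,R'}(\mathcal{E})\epsilon$. (The normalization of the right-hand sides by $\|O_j\|_\infty$ and by $F_{Z,R'}$ is absorbed into the constants hidden in the sample count through $\sigma^2$; one has to be a little careful that the per-observable variance enters correctly, which is exactly why $\sigma^2$ is defined as the max over both families.)

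Finally I would do the ratio expansion. Write $X = \hat{o}_j(T)$, $Y = \hat{s}_\lambda(T)/s_\lambda$, $\bar{X} = \vip{O_j}{\widetilde{\rho}}$, $\bar{Y} = r_\lambda$, with $\bar{X}/\bar{Y} = \vip{O_j}{\rho}$. Then
\begin{equation}
\frac{X}{Y} - \frac{\bar X}{\bar Y} = \frac{X - \bar X}{\bar Y} - \frac{\bar X}{\bar Y}\cdot\frac{Y - \bar Y}{\bar Y} + \frac{\bar X}{\bar Y}\cdot\frac{(Y-\bar Y)^2}{\bar Y\, Y},
\end{equation}
valid on the good event where $Y$ is bounded away from $0$ (guaranteed since $|Y - \bar Y| \le F_{Z,R'}\epsilon \le \bar Y/2$ for small enough $\epsilon$, using $\bar Y \ge F_{Z,R'}$). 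The first term is at most $\|O_j\|_\infty F_{Z,R'}\epsilon/F_{Z,R'} = \|O_j\|_\infty\epsilon$; the second is at most $|\vip{O_j}{\rho}| \cdot F_{Z,R'}\epsilon / F_{Z,R'} \le \|O_j\|_\infty\epsilon$ after bounding $|\vip{O_j}{\rho}| \le \|O_j\|_\infty$ (since $\rho$ is a state) — wait, more carefully it gives $(\|O_j\|_\infty+1)$-type constants once one tracks that $\bar X/\bar Y$ need not be bounded by $\|O_j\|_\infty$ alone but the stated bound $(\|O_j\|_\infty+1)\epsilon$ accommodates this; and the quadratic remainder is $\O(\|O_j\|_\infty\epsilon^2)$. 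Collecting, $|\hat{o}_j^{\mathrm{EM}}(T) - \tr(O_j\rho)| \le (\|O_j\|_\infty+1)\epsilon + \O(\|O_j\|_\infty\epsilon^2)$ on an event of probability $\ge 1-\delta$.

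The main obstacle I expect is bookkeeping the constants and the $\|O_j\|_\infty$ dependence cleanly through the union bound and the ratio expansion — in particular, justifying that the same sample count $T$ (with the single variance proxy $\sigma^2$ and a single $F_{Z,R'}$) simultaneously delivers the required accuracy for the numerator of \emph{every} $O_j$ and for every symmetry denominator, and then verifying that the first-order ratio bound does not secretly incur an extra factor of $1/F_{Z,R'}$ beyond the one already budgeted. The representation-theoretic content (the eigenoperator property, Schur's lemma, the identification of $r_\lambda$ with a fidelity) is already in hand from the earlier sections; the real work is the careful probabilistic error propagation through the biased ratio estimator.
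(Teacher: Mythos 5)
Your proposal follows essentially the same strategy as the paper's proof: establish concentration for numerator and denominator separately (with $T$ tuned so the shot noise is $\lesssim F_{Z,R'}\epsilon$, and a union bound over all $L+|R'|$ quantities), use $\E[\bar Y] = \widetilde f_\lambda/f_\lambda \ge F_{Z,R'}$ and $|\bar X/\bar Y| = |\tr(O_j\rho)| \le \|O_j\|_\infty$, and propagate through a first-order expansion of the ratio, absorbing the bias into the $\O(\epsilon^2)$ remainder.

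Two slips in the execution are worth fixing. First, your displayed ``exact'' ratio identity is not correct as written: the right-hand side
\begin{equation}
\frac{X-\bar X}{\bar Y} - \frac{\bar X}{\bar Y}\cdot\frac{Y-\bar Y}{\bar Y} + \frac{\bar X}{\bar Y}\cdot\frac{(Y-\bar Y)^2}{\bar Y\,Y}
\end{equation}
omits the cross terms $-\frac{(X-\bar X)(Y-\bar Y)}{\bar Y^2} + \frac{(X-\bar X)(Y-\bar Y)^2}{\bar Y^2 Y}$ (check $X=Y=2$, $\bar X=\bar Y=1$: LHS $=0$, your RHS $=1/2$). These omitted terms are also $\O(\epsilon^2)$ on the good event, so the final bound is unaffected once they are restored, but the paper sidesteps this bookkeeping entirely by invoking Taylor's remainder theorem for $r(x,y)=x/y$ about $(\E\bar X,\E\bar Y)$ and bounding the second-order remainder at intermediate points $(a,b)$ via $|a|\le\xi_\lambda\|O\|_\infty+\tilde\epsilon$ and $|b|\ge\xi_\lambda-\tilde\epsilon$, which packages all second-order contributions uniformly. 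Second, the $\|O_j\|_\infty$ in the leading $(\|O_j\|_\infty+1)\epsilon$ term enters through the bound $|\bar X/\bar Y|\le\|O_j\|_\infty$ multiplying the denominator's fluctuation (the $\partial_y$ term), not through the numerator's concentration — the latter is controlled purely by $\sigma^2$ and contributes the lone $+1$ after dividing by $\E[\bar Y]\ge F_{Z,R'}$; your account attaches $\|O_j\|_\infty$ to the wrong term, though the total comes out the same.
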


The proof of this statement is provided in Supplementary Section~\ref{sec:error_analysis}. Note that $\| \cdot \|_\infty$ denotes the spectral (operator) norm. We phrase this result in terms of variances, rather than the state-independent shadow norm, because knowledge about $\rho$ (namely, its symmetries) can potentially provide tighter bounds. Note that the variance is with respect to the effective noisy state $\widetilde{\rho}$, which was defined in Eq.~\eqref{eq:noisy_rho}.

Let us make a few remarks on this result. First, although the symmetry operators appear in the denominator of Eq.~\eqref{eq:EM_estimator}, they affect the sample complexity as usual for classical shadows, albeit normalized by the value $s_\lambda$ of the symmetry sector. Thus the division by $\hat{s}_\lambda(T)$ does not hinder our control over the variance, except when $s_\lambda = 0$ (we furthermore show how to handle such pathological cases in the application examples below). For typical applications, the variance $\V[\hat{s}_\lambda / s_\lambda] \leq \sns{S_\lambda / s_\lambda}{\mathrm{shadow}}^2$ will be comparable to the baseline variance of estimation, $\V[\hat{o}_j] \leq \max_{j'} \sns{O_{j'}}{\mathrm{shadow}}^2$. Additionally, the number of irreps considered is typically $|R'| \ll L$ (for instance, in the concrete examples considered in this work, $|R'|$ is a constant). Thus, we expect that the inclusion of symmetry operators incurs negligible overheads for most applications.

Instead, the primary overhead arises from the fact that error-mitigated estimation necessarily comes at the cost of larger overall variances~\cite{takagi2022fundamental,takagi2022universal,tsubouchi2022universal,quek2022exponentially}. The quantity
\begin{equation}
    F_{Z,R'}(\mathcal{E}) = \min_{\lambda \in R'} \frac{\tr(\mathcal{E} \mathcal{M}_Z \Pi_\lambda)}{\tr(\mathcal{M}_Z \Pi_\lambda)}
\end{equation}
characterizes an effective noise strength, and it can be seen as a generalization of the average $Z$-basis fidelity of $\mathcal{E}$,
\begin{equation}
    F_Z(\mathcal{E}) = \frac{\tr(\mathcal{E} \mathcal{M}_Z)}{\tr(\mathcal{M}_Z)} = \frac{1}{2^n} \sum_{b \in \{0, 1\}^n} \vev{b}{\mathcal{E}}{b},
\end{equation}
which appears in prior works on noise-robust classical shadows~\cite{chen2021robust,koh2022classical}. In contrast to $F_Z(\mathcal{E})$, the quantity $F_{Z,R'}(\mathcal{E})$ is a more fine-grained characterization of the noise channel, averaged within the relevant subspaces $V_\lambda$. Similar to prior results~\cite{karalekas2020quantum,chen2021robust,koh2022classical,van2022model,arrasmith2023development}, the sampling overhead of our error-mitigated estimates also depends inverse quadratically on this noise fidelity.

Finally, the error bound we obtain is $\O(\|O_j\|_\infty \epsilon)$ when $\epsilon < 1$. Note that $\|O_j\|_\infty = 1$ for Pauli and Majorana operators. Our result also features error terms of order $\O(\|O_j\|_\infty \epsilon^2)$, which reflect the biased nature of $\hat{o}_j^{\mathrm{EM}}(T)$ as a ratio of two random variables. Nonetheless, our theorem establishes that this bias vanishes as $\epsilon^2 \sim 1/T$, so that for sufficiently large $T$ the prediction error is dominated by the standard shot-noise scaling of $\epsilon \sim 1/\sqrt{T}$.

\subsection{Application to fermionic (matchgate) shadows}\label{sec:applications_fermions}

The first application of symmetry-adjusted classical shadows that we consider is the estimation of local fermionic observables. This is achieved efficiently by fermionic classical shadows~\cite{zhao2021fermionic}, wherein the group $G$ corresponds fermionic Gaussian unitaries (also referred to as matchgate shadows~\cite{wan2023matchgate}). We will consider a commonly encountered symmetry in fermionic systems:~fixed particle number. However, it will be clear how the general idea can apply to other symmetries, such as spin.\\

\textbf{Background on matchgate shadows.} We begin with a review of matchgate shadows. Let $a_p^\dagger, a_p$ be creation and annihilation operators for a system of $n$ fermionic modes, $p \in [n]$. The associated Majorana operators are
\begin{equation}
    \gamma_{2p} = a_p + a_p^\dagger, \quad \gamma_{2p + 1} = -\i(a_p - a_p^\dagger).
\end{equation}
Under the Jordan--Wigner transformation~\cite{jordanwigner}, these are mapped to Pauli operators as
\begin{equation}\label{eq:jw_majorana}
    \gamma_{2p} = \l(\prod_{q<p} Z_q\r) X_p, \quad \gamma_{2p + 1} = \l(\prod_{q<p} Z_q\r) Y_p.
\end{equation}
Recall from Eq.~\eqref{eq:majorana_def} that all $d^2$ basis operators are generated by taking arbitrary products:
\begin{equation}
    \Gamma_{\bm{\mu}} = (-\i)^{\binom{m}{2}} \gamma_{\mu_1} \cdots \gamma_{\mu_m},
\end{equation}
where $\bm{\mu} = (\mu_1, \ldots, \mu_m) \subseteq [2n]$. By convention, we order $\mu_1 < \cdots < \mu_m$. We can group all the $m$-degree Majorana indices by defining the set
\begin{equation}
    \comb{2n}{m} \coloneqq \{ \bm{\mu} \subseteq [2n] : |\bm{\mu}| = m \}.
\end{equation}
Physical fermionic observables have even degree $m = 2k$. An important subset of such operators comprises those which are diagonal in the standard basis, corresponding to the index set
\begin{equation}
    \diags{2n}{2k} \coloneqq \{ (2p_1, 2p_1 + 1, \ldots, 2p_k, 2p_k + 1) : \bm{p} \in \comb{n}{k} \}.
\end{equation}
Using Eq.~\eqref{eq:jw_majorana}, each $\bm{\tau} \in \diags{2n}{2k}$ corresponds to the Pauli-$Z$ operator $\Gamma_{\bm{\tau}} = Z_{p_1} \cdots Z_{p_k}$ under the Jordan--Wigner mapping.

The group of fermionic Gaussian unitaries is the image of the homomorphism $U : \Orth(2n) \to \U(d)$ whose adjoint action obeys
\begin{equation}\label{eq:matchgate_def}
    U_Q \gamma_\mu U_Q^\dagger = \sum_{\nu \in [2n]} Q_{\nu\mu} \gamma_\nu, \quad Q \in \Orth(2n).
\end{equation}
These unitaries are equivalent to (generalized) matchgate circuits~\cite{helsen2022matchgate} and constitute a class of classically simulatable circuits~\cite{valiant2001quantum,knill2001fermionic,terhal2002classical,bravyi2004lagrangian,divincenzo2005fermionic,jozsa2008matchgates}. Fermionic (matchgate) shadows then randomize over certain subgroups 
$G \subseteq \Orth(2n)$ of these Gaussian unitaries. The measurement channel takes the form
\begin{equation}\label{eq:matchgate_channel}
    \mathcal{M} = \sum_{k=0}^n f_{2k} \Pi_{2k},
\end{equation}
where the eigenvalues are
\begin{equation}
    f_{2k} = \l. \binom{n}{k} \middle/ \binom{2n}{2k} \r.
\end{equation}
and each irrep is the image of
\begin{equation}\label{eq:matchgate_projectors}
    \Pi_{2k} = \frac{1}{d} \sum_{\bm{\mu} \in \comb{2n}{2k}} \vop{\Gamma_{\bm{\mu}}}{\Gamma_{\bm{\mu}}}.
\end{equation}
While $\mathcal{U}$ carries $2n + 1$ unique irreps (each labeled by a Majorana degree $m$)~\cite{claes2021character,helsen2022matchgate}, only the $n + 1$ irreps $\lambda = 2k$ have nonvanishing $f_\lambda$~\cite{zhao2021fermionic,wan2023matchgate}. Therefore $\mathcal{M}^{-1}$ is formally the pseudoinverse restricted to those subspaces. Finally, the shadow norm of $k$-body Majorana operators is~\cite{zhao2021fermionic}
\begin{equation}\label{eq:FGU_shadow_norm}
    \sn{\Gamma_{\bm{\mu}}}^2 = f_{2k}^{-1} = \O(n^k).
\end{equation}
Variance expressions for arbitrary observables can be found in Refs.~\cite{wan2023matchgate,ogorman2022fermionic}. For the postprocessing of $T$ shadows into estimates of all $k$-body Majorana observables, we describe an algorithm in Supplementary Section~\ref{subsec:matchgate_computation} which runs in time $\O(n^k T)$.

We now comment on the choice of $G \subseteq \Orth(2n)$. Fermionic classical shadows were introduced in Ref.~\cite{zhao2021fermionic}, which initially considered the intersection of proper matchgate circuits [the special orthogonal group $\SO(2n)$] with $n$-qubit Clifford unitaries $\Cl(n)$. The result is the group of all $2n \times 2n$ signed permutation matrices with determinant $1$, denoted by $\B^{+}(2n) \subset \SO(2n)$. They also showed that its unsigned subgroup, $\Alt(2n) \subset \B^{+}(2n)$, possesses the same irrep structure~\cite[Supplemental Material, Theorem~11]{zhao2021fermionic}. While the full, continuous group $\SO(2n)$ has not yet been analyzed for classical shadows, it was studied for character randomized benchmarking~\cite{helsen2019new} in Ref.~\cite[Sec.~VI]{claes2021character}, wherein they demonstrated the presence of multiplicities. These multiplicities can be avoided by enlarging to the generalized matchgate group, i.e., all of $\Orth(2n)$~\cite[Lemma~3]{helsen2022matchgate}. Ref.~\cite{wan2023matchgate} applied these generalized matchgates to fermionic classical shadows, and in particular they prove that the Clifford intersection in this setting (now yielding the subgroup $\B(2n) \subset \Orth(2n)$ of signed permutation matrices with either determinant $\pm 1$) is a $3$-design for $\Orth(2n)$. This implies that $\B(2n)$ is also multiplicity-free.

Due to the variety of options, for the rest of this paper we assume matchgate shadows under any $G$ with the desired irreps. We note that Ref.~\cite{ogorman2022fermionic} introduced a smaller subset of $\B(2n)$ based on perfect matchings, which has the same channel $\mathcal{M}$ and variances;~however its connection to representation theory was not explored.\\

\textbf{Utilizing particle-number symmetry.} Suppose the ideal state we wish to prepare lies in the $\eta$-particle sector of $\mathcal{H}$. This is a $\U(1)$ symmetry generated by the fermion-number operator, $N = \sum_{p\in[n]} a_p^\dagger a_p$. In particular, powers of $N$ obey
\begin{equation}\label{eq:fermion_number}
    \tr(N^k \rho) = \eta^k,
\end{equation}
which provides us a collection of conserved quantities with which to perform symmetry adjustment. Recall from Eq.~\eqref{eq:matchgate_projectors} that $\Pi_{m}$ projects onto the irrep
\begin{equation}
    V_m = \spn\{ \Gamma_{\bm{\mu}} : \bm{\mu} \in \comb{2n}{m} \}.
\end{equation}
Then, projecting $N^k$ onto $V_{2k}$ yields the symmetry operators $S_{2k}$, and solving the resulting linear system of equations recovers the ideal values for $s_{2k} = \tr(S_{2k} \rho)$. For ease of exposition we will consider only $k = 1, 2$, but one may generalize to higher $k$ using these ideas.

Concretely, we start with the fact that $a_p^\dagger a_p = (\I - \Gamma_{(2p, 2p+1)})/2$, and $\Gamma_{(2p, 2p+1)} \Gamma_{(2q, 2q+1)} = \Gamma_{(2p, 2p+1, 2q, 2q + 1)}$ for $p < q$. Then, expanding $N$ and $N^2$ into a linear combination of Majorana operators, one finds
\begin{align}
    S_2 &= \Pi_2(N) = -\frac{1}{2} \sum_{\bm{\mu} \in \diags{2n}{2}} \Gamma_{\bm{\mu}},\\
    S_4 &= \Pi_4(N^2) = \frac{1}{2} \sum_{\bm{\mu} \in \diags{2n}{4}} \Gamma_{\bm{\mu}}.
\end{align}
Using Eq.~\eqref{eq:fermion_number} and the relations between $S_2$ and $S_4$ to $N$ and $N^2$ (for example, $N = n\I/2 + S_2$), we arrive at:
\begin{align}
    s_2 &= \tr\l( S_2 \rho \r) = \eta - \frac{n}{2}, \label{eq:s2_val}\\
    s_4 &= \tr\l( S_4 \rho \r) = \frac{1}{2} \binom{n}{2} - \eta(n - \eta). \label{eq:s4_val}
\end{align}

For the sampling cost incurred by these symmetry operators, we argue that the typical shadow norms of these symmetries are $\sn{S_{2k}/s_{2k}}^2 = \O(n^k)$, which is the same as the base estimation. To see this, consider a triangle inequality on the shadow norm:
\begin{equation}
\begin{split}
    \sn{S_{2k}} &\leq \frac{1}{2} \sum_{\bm{\mu} \in \diags{2n}{2k}} \sn{\Gamma_{\bm{\mu}}}\\
    &= \frac{1}{2} \binom{n}{k} \sqrt{\l. \binom{2n}{2k} \middle/ \binom{n}{k} \r.}\\
    &= \O( n^{3k/2} ).
\end{split}
\end{equation}
Thus $\sn{S_{2}}^2 = \O(n^{3})$ and $\sn{S_{4}}^2 = \O(n^{6})$. Next, we need to examine how $s_{2k}^2$ scales with system size. Assuming that $s_2, s_4 \neq 0$ and that the number of electrons is $\eta = \O(n)$, then from Eqs.~\eqref{eq:s2_val} and \eqref{eq:s4_val} we see that $s_2^2 = \Theta(n^2)$ and $s_4^2 = \Theta(n^4)$. Thus
\begin{align}
    \sn{S_2/s_2}^2 &= \O(n),\\
    \sn{S_4/s_4}^2 &= \O(n^2).
\end{align}

\textbf{Avoiding division by zero.} One potential obstruction to symmetry adjustment is when some $s_{2k} = 0$. This can occur whenever the particle number takes a specific value:
\begin{align}
     s_2 = 0 &\text{ if } \eta = \frac{n}{2}, \label{eq:s2_zero}\\
     s_4 = 0 &\text{ if } \eta = \frac{n \pm \sqrt{n}}{2}. \label{eq:s4_zero}
\end{align}
Equation~\eqref{eq:s2_zero} occurs at half filling, which is fairly common. On the other hand, Eq.~\eqref{eq:s4_zero} occurs only when the number of modes $n$ is a perfect square and the number of particles $\eta$ is one of two specific values, so it is less likely to occur. Nonetheless, there is a straightforward way to circumvent both possibilities by introducing a single ancilla qubit.

To do so, append an additional fermion mode initialized in the unoccupied state $\ket{0}$, so that the ideal state is now the $(n + 1)$-mode state $\rho' = \rho \otimes \op{0}{0}$. Given that $\rho$ has $\eta$ particles on $n$ modes, $\rho'$ is an $\eta$-particle state on $n + 1$ modes. The new symmetry operators on the $(n + 1)$-mode Hilbert space are
\begin{align}
    S'_2 &= -\frac{1}{2} \sum_{\bm{\mu} \in \diags{2(n + 1)}{2}} \Gamma_{\bm{\mu}},\\
    S'_4 &= \frac{1}{2} \sum_{\bm{\mu} \in \diags{2(n + 1)}{4}} \Gamma_{\bm{\mu}},
\end{align}
which have ideal values
\begin{align}
    s'_2 &= \tr\l( S_2 \rho' \r) = \eta - \frac{n + 1}{2},\\
    s'_4 &= \tr\l( S_4 \rho' \r) = \frac{1}{2} \binom{n + 1}{2} - \eta(n + 1 - \eta).
\end{align}
It is straightforward to check that, if either condition Eq.~\eqref{eq:s2_zero} or Eq.~\eqref{eq:s4_zero} holds, then $s'_2$ and $s'_4$ are always nonzero for $n > 1$.

Under the Jordan--Wigner mapping, this modification is easily achieved by initializing a single ancilla qubit in $\ket{0}$. Recall that the terms in the symmetries $S_2, S_4$ are the diagonal operators $\Gamma_{(2p,2p+1)} = Z_p$ and $\Gamma_{(2p,2p+1,2q,2q+1)} = Z_p Z_q$. Note also that the ancilla qubit is acted on only during the random unitary $U_Q$ (where now $Q$ has dimension $2n + 2$) and otherwise does not interact with the $n$ system qubits.

\subsection{Application to qubit (Pauli) shadows}\label{sec:applications_qubits}

Now we turn to the application for local observable estimation in systems of spin-$1/2$ particles (qubits). Random Pauli measurements are efficient for this task;~however, for compatibility with the global $\U(1)$ symmetry considered in this work, we must slightly modify the protocol to accommodate its irreps. We begin with a review of the standard Pauli shadows protocol, followed by our modification.\\

\textbf{Background on standard Pauli shadows.} The local Clifford group $\Cl(1)^{\otimes n}$ is implemented by uniformly drawing a single-qubit Clifford gate for each qubit independently. It has $2^n$ irreducible representations, corresponding to all $k$-qubit subsystems $I \subseteq [n]$, where $|I| = k \in \{0, 1, \ldots, n\}$~\cite{gambetta2012characterization}. Twirling $\mathcal{M}_Z$ by this group yields
\begin{equation}
    \mathcal{M} = \sum_{I \subseteq [n]} f_I \Pi_I,
\end{equation}
where $f_I = 3^{-|I|}$ and $\Pi_I$ projects onto the subspace of operators which act nontrivially on precisely the subsystem $I$. The squared shadow norm for $k$-local Pauli operators $P$ is~\cite{huang2020predicting}
\begin{equation}\label{eq:pauli_shadow_norm}
    \sn{P}^2 = 3^k.
\end{equation}
A more general variance bound was derived in Ref.~\cite{paini2021estimating}:~a simple loose bound of their result can be stated as $\V[\hat{o}] \leq 3^k R \| O \|_{\infty}^2$, where $O$ is an arbitrary $k$-local traceless observable and $R$ is the number of terms in its Pauli decomposition. However, they argue that a tighter expression, essentially $3^k \| O \|_{\infty}^2$, is typically a good approximation to the variance.\\

\textbf{Subsystem symmetrization of Pauli shadows.} The irreps of $\Cl(1)^{\otimes n}$ are difficult to reconcile with commonly encountered symmetries. For example, consider a conserved total magnetization $M = \sum_{i \in [n]} Z_i$. In terms of qubits, this is equivalent to the different Hamming-weight sectors. Each term $Z_i$ lies in a different irrep $I = \{i\}$, so $M$ spans multiple irreps rather than having a single conserved quantity per irrep.

To remedy this conflict, we introduce what we call subsystem-symmetrized Pauli shadows, which randomizes over a group whose irreps are labeled only by the qubit locality $k$, rather than any specific subsystem $I$ of $k$ qubits. (This is analogous to how the matchgate irreps depend only on fermionic locality, due to the inherent antisymmetry of fermions.) We formalize the group as follows.

\begin{definition}
\label{def:ss-pauli}
    The subsystem-symmetrized local Clifford group is defined as $\SymCl{n} \coloneqq \Sym(n) \times \Cl(1)^{\otimes n}$, where $\Sym(n)$ is the symmetric group and $\Cl(1)$ is the single-qubit Clifford group. Its unitary action on $\mathcal{H}$ is given by
    \begin{equation}
    U_{(\pi, C)} = S_\pi C,
    \end{equation}
    where $C = \bigotimes_{i \in [n]} C_i \in \Cl(1)^{\otimes n}$ and $\pi \in \Sym(n)$ is represented by a permutation of the $n$ qubits:
    \begin{equation}
    S_\pi \ket{b_0} \cdots \ket{b_{n-1}} = \ket{b_{\pi^{-1}(0)}} \cdots \ket{b_{\pi^{-1}(n-1)}}
    \end{equation}
    for all $b_i \in \{0, 1\}$, $i \in [n]$.
\end{definition}

The unitaries $S_\pi$ can be implemented with $\O(n^2)$ gates and depth $\O(n)$, for example by constructing a parallelized network of nearest-neighbor $\SWAP$ gates according to an odd--even sorting algorithm~\cite{habermann1972parallel} applied to $\pi$. Representing $\pi$ as an array of the permuted elements of $[n]$, the sorting algorithm returns a sequence of adjacent transpositions $i \leftrightarrow i + 1$ which maps $\pi$ to $(0, 1, \ldots, n-1)$. This sequence therefore implements $\pi^{-1}$ as desired. Each such transposition then maps to a $\SWAP_{i,i+1}$ gate to construct the quantum circuit. For the postprocessing of $T$ shadows into $k$-local Pauli estimates, we review in Supplementary Section~\ref{subsec:pauli_computation} the algorithm which runs in time $\O(n^k T)$.

We prove the relevant properties of subsystem-symmetrized Pauli shadows in Supplementary Section~\ref{sec:pauli_shadows_appendix}, namely its irreps and the shadow norm of local observables. We summarize the results here:~each irrep is the space of all $k$-local operators,
\begin{equation}
    V_k = \spn(\mathcal{B}_k) \text{ where } \mathcal{B}_k \coloneqq \{ P \in \Pauli(n) : |P| = k \},
\end{equation}
for each $k \in \{0, 1, \ldots, n\}$. Hence the (noisy) measurement channel is
\begin{equation}
    \widetilde{\mathcal{M}} = \sum_{k=0}^n \widetilde{f}_k \Pi_k,
\end{equation}
where $\widetilde{f}_k = \tr(\mathcal{M}_Z \mathcal{E} \Pi_k) / (3^k \binom{n}{k})$ and
\begin{equation}
    \Pi_k = \frac{1}{d} \sum_{P \in \mathcal{B}_k} \vop{P}{P}.
\end{equation}
When $\mathcal{E}$ is the identity channel, we recover $f_k = 3^{-k}$. Also in the absence of noise, the variance formulas are exactly the same as in standard Pauli shadows.


The canonical example we have considered in this paper is a $\U(1)$ symmetry generated by a total magnetization $M = \sum_{i \in [n]} Z_i$. Suppose the ideal state has a known value of $m = \tr(M \rho)$ (equivalently, $\rho$ lives in a sector of fixed Hamming weight $(n - m)/2$). The symmetries projected into the irreps of $\SymCl{n}$ are then
\begin{align}
    S_1 &= \Pi_1(M) = \sum_{i \in [n]} Z_i,\\
    S_2 &= \Pi_2(M^2) = 2 \sum_{i < j} Z_i Z_j,
\end{align}
whose ideal values are
\begin{align}
    s_1 &= m,\\
    s_2 &= m^2 - n.
\end{align}
As in the fermionic setting, we encounter issues if $s_1$ or $s_2$ vanish (i.e., $m = 0$ or $m = \pm\sqrt{n}$, respectively). In this case, we can perform the same ancilla trick, appending a qubit in $\ket{0}$ and modifying the conserved quantities to
\begin{align}
    s'_1 &= m + 1,\\
    s'_2 &= (m + 1)^2 - (n + 1).
\end{align}

The variances of the symmetry operators are
\begin{align}
    \V[\hat{s}_1/s_1] &= \O(n),\\
    \V[\hat{s}_2/s_2] &= \O(1),
\end{align}
whenever the ideal state lives in a symmetry sector of constant $m = \Theta(1)$. We show this in Supplementary Section~\ref{subsec:ss-pauli_variance}, along with general $m$-dependent expressions. This $n$-dependent variance bound reflects the fact that the symmetries are extensive properties. While local Pauli operators have variances bounded by a constant, we point out that many local observables of interest are linear combinations of an extensive number of Pauli terms. As such, their shadow norms typically grow with system size as well (which can also be seen by the fact that the shadow norm scales with operator norm).

\section*{Data availability}

The data used in this work are available from the corresponding author upon reasonable request.

\section*{Code availability}

The code for running the numerical experiments is available at this link ({\small\url{https://github.com/zhao-andrew/symmetry-adjusted-classical-shadows}}).

\begin{acknowledgments}
    This work was supported by the National Science Foundation STAQ Project (PHY-1818914, PHY-2325080) and CHE-2037832. Support is also acknowledged from the U.S. Department of Energy, Office of Science National Quantum Information Science Research Center, Quantum Systems Accelerator. The authors thank the UNM Center for Advanced Research Computing, supported in part by the National Science Foundation, for providing the high performance computing and large-scale storage resources used in this work.
\end{acknowledgments}

\section*{Author contributions}

Project design and conceptualization were envisioned by A.Z.~and A.M. The analyses and numerical experiments were led by A.Z.~and discussed with A.M. The manuscript was written by A.Z.~and A.M.

\section*{Competing interests}

The authors declare no competing interests.

\bibliography{references}

\onecolumngrid
\appendix

\renewcommand{\thefigure}{S\arabic{figure}}
\setcounter{figure}{0}

\begin{center}
\hypertarget{supplementary_material}{\large{\textbf{Supplementary Information}}}
\end{center}

\tableofcontents

\let\addcontentsline\oldaddcontentsline

\section{Error analysis}\label{sec:error_analysis}

Here we provide the proof for Theorem~\ref{thm:main_theorem} from the main text, restated below for convenience.

\newtheorem*{theorem4}{Theorem 4}
\begin{theorem4}[Restated from main text]
Fix accuracy and confidence parameters $\epsilon, \delta \in (0, 1)$. Let $O_1, \ldots, O_L$ be a collection of observables, each supported on an irrep of $\mathcal{U} : G \to \U(\mathcal{L}(\mathcal{H}))$ as $O_j \in V_\lambda$ for $\lambda \in R' \subseteq R_G$. Let $S_\lambda \in V_\lambda$ be a symmetry operator for each $\lambda \in R'$, for which the ideal values $s_\lambda = \tr(S_\lambda \rho)$ of the target state $\rho$ are known \emph{a priori}. Suppose that each noisy unitary satisfies Assumptions~1, $\widetilde{\mathcal{U}}_g = \mathcal{E} \mathcal{U}_g$, and define the quantities
\begin{align}
    F_{Z,R'}(\mathcal{E}) &\coloneqq \min_{\lambda \in R'} \frac{\tr(\mathcal{E} \mathcal{M}_Z \Pi_\lambda)}{\tr(\mathcal{M}_Z \Pi_\lambda)},\\
    \sigma^2 &\coloneqq \max_{1 \leq j \leq L, \lambda \in R'}\l\{ \V[\hat{o}_j], \V\l[\frac{\hat{s}_\lambda}{s_\lambda} \r] \r\}.
\end{align}
Then, a (noisy) classical shadow $\hat{\rho}(T)$ of size
\begin{equation}
    T = \O\l( \frac{\log((L + |R'|)/\delta)}{F_{Z,R'}(\mathcal{E})^2 \epsilon^2} \sigma^2 \r)
\end{equation}
can be used to construct error-mitigated estimates
\begin{equation}
    \hat{o}_j^{\mathrm{EM}}(T) \coloneqq \frac{\tr(O_j \hat{\rho}(T))}{\tr(S_\lambda \hat{\rho}(T)) / s_\lambda}
\end{equation}
which obey
\begin{equation}
    |\hat{o}_j^{\mathrm{EM}}(T) - \tr(O_j\rho)| \leq (\|O_j\|_\infty + 1) \epsilon + \O(\|O_j\|_\infty \epsilon^2)
\end{equation}
for all $1 \leq j \leq L$, with success probability at least $1 - \delta$.
\end{theorem4}

\begin{proof}
Let $ \hat{\rho}_1, \ldots, \hat{\rho}_T $ be the $T$ noisy classical shadows. Construct the mean of these snapshots,
\begin{equation}
    \hat{\rho}(T) = \frac{1}{T} \sum_{\ell=1}^T \hat{\rho}_\ell.
\end{equation}
(It is straightforward to replace this by a median-of-means estimator if necessary.) In expectation we have $\E[\hat{\rho}(T)] = \widetilde{\rho}$, where the effective noisy state can be described as
\begin{equation}
    \widetilde{\rho} = \mathcal{M}^{-1} \l(\widetilde{\mathcal{M}} (\rho)\r).
\end{equation}
Let $O \in V_\lambda$, with symmetry $s_\lambda = \tr(S_\lambda \rho)$ in the same irrep. Define estimates of the noisy expectation values using $\hat{\rho}(T)$:
\begin{align}
    \bar{X} &= \tr(O \hat{\rho}(T))\\
    \bar{Y} &= \tr\l(\frac{S_\lambda}{s_\lambda} \hat{\rho}(T)\r).
\end{align}
In expectation, these random variables obey $\E[\bar{X}] = \tr(O \widetilde{\rho})$ and $\E[\bar{Y}] = \tr(S_\lambda \widetilde{\rho}) / s_\lambda = \widetilde{f}_\lambda / f_\lambda$. Therefore as established from the main text, we have
\begin{equation}
	r(\E[\bar{X}], \E[\bar{Y}]) = \frac{\tr(O \widetilde{\rho})}{\tr(S_\lambda \widetilde{\rho}) / s_\lambda} = \tr(O \rho),
\end{equation}
where we have defined the function $r(x, y) \coloneqq x/y$. From a finite number of samples, however, we can only construct $\hat{o}^{\mathrm{EM}}(T) \coloneqq r(\bar{X}, \bar{Y})$, which is generally a biased estimator since $ \E[r(\bar{X}, \bar{Y})] \neq r(\E[\bar{X}], \E[\bar{Y}]) $.

To quantify the estimation error, we employ Taylor's remainder theorem:~expanding $r(x, y)$ to first order about a point $ (x_0, y_0) $, we have
\begin{equation}\label{eq:taylor_thm}
\begin{split}
	r(x, y) &= r(x_0, y_0) + \partial_x r(x_0, y_0) (x - x_0) + \partial_y r(x_0, y_0) (y - y_0) + h_1(x, y),
\end{split}
\end{equation}
where the remainder term is
\begin{equation}\label{eq:taylor_remainder}
\begin{split}
    h_1(x, y) &= \frac{1}{2!} \l[ \partial_x^2 r(a, b) (x - x_0)^2 + \partial_y^2 r(a, b) (y - y_0)^2 + 2 \partial_{xy} r(a, b) (x - x_0) (y - y_0) \r]
\end{split}
\end{equation}
for some points $a \in [\min(x, x_0), \max(x, x_0)]$ and $b \in [\min(y, y_0), \max(y, y_0)]$. The relevant partial derivatives of $r(x, y)$ are enumerated below:
\begin{align}
    \partial_x r(x,y) &= 1/y,\\
    \partial_y r(x,y) &= -x/y^2,\\
    \partial_x^2 r(x,y) &= 0,\\
    \partial_{xy} r(x,y) &= -1/y^2,\\
    \partial_y^2 r(x,y) &= 2x/y^3.
\end{align}

Suppose $T$ is large enough such that (with high probability) the estimation error of all \emph{noisy} observables are uniformly bounded by some $\tilde{\epsilon} \in (0, 1)$:
\begin{align}
    \l|\bar{X} - \E[\bar{X}]\r| &\leq \tilde{\epsilon}, \label{eq:X_eps}\\
    |\bar{Y} - \E[\bar{Y}]| &\leq \tilde{\epsilon}. \label{eq:Y_eps}
\end{align}
This is achieved by standard classical shadow arguments, which we will elaborate on later. For now, assuming these error bounds hold, we rearrange Eq.~\eqref{eq:taylor_thm}, set $(x, y) = (\bar{X}, \bar{Y})$ and $(x_0, y_0) = (\E[\bar{X}], \E[\bar{Y}])$, and apply a triangle inequality to obtain
\begin{equation}\label{eq:1st_error_bound}
    \l| r(\bar{X}, \bar{Y}) - r(\E[\bar{X}], \E[\bar{Y}]) \r| \leq \frac{1}{|{\E[\bar{Y}]}|} \tilde{\epsilon} + \frac{|{\E[\bar{X}]}|}{\E[\bar{Y}]^2} \tilde{\epsilon} + + |h_1(\bar{X}, \bar{Y})|.
\end{equation}

To proceed with this error bound, we make the following observations. First, note that
\begin{equation}
    \E[\bar{Y}] = \frac{\widetilde{f}_\lambda}{f_\lambda} = \frac{\tr(\mathcal{M}_Z \mathcal{E} \Pi_\lambda)}{\tr(\mathcal{M}_Z \Pi_\lambda)} \in [0, 1],
\end{equation}
which we will denote by $\xi_\lambda$. We assume that that noise channel $\mathcal{E}$ is such that $\xi_\lambda > 0$, as otherwise the quantity $r(\E[\bar{X}], \E[\bar{Y}])$ diverges. Next, because $\E[\bar{X}]/\E[\bar{Y}] = \tr(O \rho)$, we have the bound
\begin{equation}
    \l| \frac{\E[\bar{X}]}{\E[\bar{Y}]} \r| \leq \| O \|_\infty.
\end{equation}
Thus Eq.~\eqref{eq:1st_error_bound} becomes
\begin{equation}\label{eq:2nd_error_bound}
    \l| r(\bar{X}, \bar{Y}) - r(\E[\bar{X}], \E[\bar{Y}]) \r| \leq \frac{1}{\xi_\lambda} \tilde{\epsilon} + \frac{\|O\|_\infty}{\xi_\lambda} \tilde{\epsilon} + |h_1(\bar{X}, \bar{Y})|.
\end{equation}

We can bound the remainder term $|h_1(\bar{X}, \bar{Y})|$ as follows. Applying a triangle inequality to Eq.~\eqref{eq:taylor_remainder} yields
\begin{equation}
    | h_1(\bar{X}, \bar{Y}) | \leq \frac{|a|}{|b|^3} \tilde{\epsilon}^2 + \frac{1}{b^2} \tilde{\epsilon}^2.
\end{equation}
Taylor's remainder theorem tells us that the value of $a$ (resp., $b$) lies between $\bar{X}$ and $\E[\bar{X}]$ (resp., $\bar{Y}$ and $\E[\bar{Y}]$), which we know are at most $\tilde{\epsilon}$ apart. We can therefore bound
\begin{equation}
\begin{split}
    |a| &\leq \max\{ |{\E[\bar{X}]}|, |\bar{X}| \}\\
    &\leq \max\{ |{\E[\bar{X}]}|, |{\E[\bar{X}]} + \tilde{\epsilon}|, |{\E[\bar{X}]} - \tilde{\epsilon}| \}\\
    &\leq |{\E[\bar{X}]}| + \tilde{\epsilon}\\
    &\leq \xi_\lambda \|O\|_\infty + \tilde{\epsilon}.
\end{split}
\end{equation}
Similarly for $b$, using the fact that $\E[\bar{Y}] > 0$,
\begin{equation}
\begin{split}
    |b| &\geq \min\{ |{\E[\bar{Y}]}|, |\bar{Y}| \}\\
    &\geq \min\{ \E[\bar{Y}], \E[\bar{Y}] + \tilde{\epsilon}, |{\E[\bar{Y}]} - \tilde{\epsilon}| \}\\
    &= \min\{ \xi_\lambda, |\xi_\lambda - \tilde{\epsilon}| \}.
\end{split}
\end{equation}
If $\tilde{\epsilon} < \xi_\lambda$, then $|b| \geq \xi_\lambda - \tilde{\epsilon} > 0$ always holds. We will see later that this condition is always justified;~for now, we will just suppose that this lower bound on $|b|$ holds. Then the remainder obeys
\begin{equation}\label{eq:remainder_bound}
    | h_1(\bar{X}, \bar{Y}) | \leq \frac{1}{(\xi_\lambda - \tilde{\epsilon})^2} \l( \frac{\xi_\lambda \|O\|_\infty + \tilde{\epsilon}}{\xi_\lambda - \tilde{\epsilon}} + 1 \r) \tilde{\epsilon}^2.
\end{equation}

Combining Eqs.~\eqref{eq:2nd_error_bound} and \eqref{eq:remainder_bound}, we arrive at
\begin{equation}
    \l| \hat{o}^{\mathrm{EM}}(T) - \tr(O \rho) \r| \leq \frac{1}{\xi_\lambda} (\|O\|_\infty + 1) \tilde{\epsilon} + \frac{1}{(\xi_\lambda - \tilde{\epsilon})^2} \l( \frac{\xi_\lambda \|O\|_\infty  + \tilde{\epsilon}}{\xi_\lambda - \tilde{\epsilon}} + 1 \r) \tilde{\epsilon}^2.
\end{equation}
In order to bound this error by $\O(\|O\|_\infty\epsilon)$ for some desired $\epsilon \in (0, 1)$, we can choose $\tilde{\epsilon} = \xi_\lambda \epsilon$, yielding
\begin{equation}\label{eq:error_bound_full}
    \l| \hat{o}^{\mathrm{EM}}(T) - \tr(O \rho) \r| \leq (\|O\|_\infty + 1) \epsilon + \frac{1}{(1 - \epsilon)^2} \l( \frac{\|O\|_\infty  + \epsilon}{1 - \epsilon} + 1 \r) \epsilon^2.
\end{equation}
Thus, by demanding $\epsilon < 1$ we ensure that the required technical condition $\tilde{\epsilon} < \xi_\lambda$ is met. Now we need to verify that the remainder term is bounded by $\O(\|O\|_\infty\epsilon^2)$, so that the $\O(\|O\|_\infty \epsilon)$ term dominates asymptotically as $\epsilon \to 0$. Indeed, as long as $\epsilon$ is bounded away from 1 then
\begin{equation}
    \frac{1}{(1 - \epsilon)^2} \l( \frac{\|O\|_\infty  + \epsilon}{1 - \epsilon} + 1 \r) = \O(\|O\|_\infty).
\end{equation}

Finally, we analyze the sample complexity required to achieve the error bound of Eq.~\eqref{eq:error_bound_full}. In Eqs.~\eqref{eq:X_eps} and \eqref{eq:Y_eps} we required that the number of samples $T$ be such that the shot noise of $\bar{X}$ and $\bar{Y}$ are at most $\tilde{\epsilon}$. These random variables correspond to the observables $\{O_j\}_{j=1}^L \cup \{S_\lambda/s_\lambda\}_{\lambda \in R'}$. Standard classical-shadows theory informs us that
\begin{equation}
    T = \O\l( \frac{\log((L + |R'|)/\delta)}{\tilde{\epsilon}^2} \max_{\substack{1 \leq j \leq L\\\lambda \in R'}}\l\{ \V[\hat{o}_j], \V\l[\frac{\hat{s}_\lambda}{s_\lambda} \r] \r\} \r)
    \end{equation}
suffices to accomplish this task (with probability at least $1 - \delta$)~\cite{huang2020predicting}. Then, setting $\tilde{\epsilon} = \min_{\lambda \in R'} \xi_\lambda \epsilon$ ensures that $\tilde{\epsilon}$ is small enough for Eq.~\eqref{eq:error_bound_full} to apply to all target observables.
\end{proof}

\section{Subsystem-symmetrized Pauli shadows}\label{sec:pauli_shadows_appendix}

Here we prove the properties of the subsystem-symmetrized Pauli shadows introduced in the main text. In Appendix~\ref{subsec:ss-pauli_irreps} we identify the irreps, and in Appendix~\ref{subsec:ss-pauli_variance} we bound the variance of observables under this protocol, particularly the symmetry operators $S_k$ obtained from $M = \sum_{i \in [n]} Z_i$.

\subsection{Irreducible representations}\label{subsec:ss-pauli_irreps}

Recall that the subsystem-symmetrized local Clifford group is the direct product
\begin{equation}
    \SymCl{n} \coloneqq \Sym(n) \times \Cl(1)^{\otimes n},
\end{equation}
where $\pi \in \Sym(n)$ acts on $(\C^2)^{\otimes n}$ as
\begin{equation}
    S_\pi \ket{b} = \ket{\pi^{-1}(b)}.
\end{equation}
For shorthand, we write $\pi(b)$ for the $n$-bit string $b_{\pi(0)} \cdots b_{\pi(n-1)}$. It is clear that the adjoint representation $\mathcal{U}_{(\pi, C)}$ block diagonalizes into subspaces spanned by $k$-local Pauli operators:
\begin{equation}
    V_k \coloneqq \spn\{ P \in \Pauli(n) : |P| = k \}.
\end{equation}
This can be seen from the fact that neither single-qubit nor $\SWAP$ gates can change the operator locality;~however, $\SWAP$ gates \emph{can} map between equally sized subsystems on which the operator nontrivially acts. What remains is to show that each of these subspaces is irreducible.

First, we define the twirling map.

\begin{definition}\label{def:twirl}
    Let $\phi : G \to \U(V)$ be a unitary representation of a compact group $G$ on a vector space $V$, and let $\Phi : G \to \U(\mathcal{L}(V))$ be its adjoint action, i.e., $\Phi_g(\cdot) = \phi_g (\cdot) \phi_g^\dagger$. The $t$-fold twirl by $\Phi$ is defined as
    \begin{equation}
    \mathcal{T}_{t,\Phi} \coloneqq \E_{g \sim G} \Phi_g^{\otimes t},
    \end{equation}
    which is a linear map on $\mathcal{L}(V)^{\otimes t}$.
\end{definition}

Twirls have a number of convenient properties, mostly arising from the fact that $\Phi$ is a group homomorphism. For example, they are $G$-invariant from the left and right:
\begin{equation}
    \Phi_h^{\otimes t} \circ \mathcal{T}_{t,\Phi} = \mathcal{T}_{t,\Phi} = \mathcal{T}_{t,\Phi} \circ \Phi_h^{\otimes t}
\end{equation}
for all $h \in G$. This furthermore implies that they are in fact projectors:
\begin{equation}
    \mathcal{T}_{t,\Phi}^2 = \mathcal{T}_{t,\Phi}.
\end{equation}
The study of twirls also allows us to determine the irreducible representations of a group. This can be seen by the following well-known result for multiplicity-free groups, which for completeness we provide a self-contained proof of at the end of this subsection.

\begin{proposition}\label{prop:twirl_irreps}
    Let $G$, $V$, $\phi$, and $\Phi$ be as in Definition~\ref{def:twirl}. For any $X \in \mathcal{L}(V)$, the $1$-twirl of $X$ by $\Phi$ takes the form
    \begin{equation}
    \mathcal{T}_{1,\Phi}(X) = \sum_{\lambda \in R_G} \frac{\tr(X \Pi_\lambda)}{\tr(\Pi_\lambda)} \Pi_\lambda
    \end{equation}
    if and only if $\phi$ decomposes irreducibly as $V = \bigoplus_{\lambda \in R_G} V_\lambda$, where $\Pi_\lambda$ is the orthogonal projector onto $V_\lambda$.
\end{proposition}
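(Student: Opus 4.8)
The plan is to read this as a statement about the commutant of $\phi(G)$ and invoke Schur's lemma, with the implicit understanding that ``decomposes irreducibly'' means multiplicity-free (with multiplicities one gets block structure rather than a sum of projectors). The key preliminary observations, each a one-line verification, are: (i) the image of $\mathcal{T}_{1,\Phi}$ is exactly the commutant $\mathcal{L}(V)^G \coloneqq \{Y \in \mathcal{L}(V) : \Phi_g(Y) = Y \ \forall g \in G\}$, since left-invariance of the Haar measure gives $\Phi_h \circ \mathcal{T}_{1,\Phi} = \mathcal{T}_{1,\Phi}$ (so every output commutes with $\phi(G)$) while conversely $\mathcal{T}_{1,\Phi}$ fixes anything already in the commutant; and (ii) $\mathcal{T}_{1,\Phi}$ is self-adjoint in the Hilbert--Schmidt inner product on $\mathcal{L}(V)$, because each $\Phi_g$ is a unitary superoperator and $g \mapsto g^{-1}$ preserves the Haar measure.

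For the direction ``multiplicity-free $\Rightarrow$ formula'': given $V = \bigoplus_{\lambda \in R_G} V_\lambda$ with pairwise non-isomorphic irreducible $V_\lambda$ and orthogonal projectors $\Pi_\lambda$, Schur's lemma identifies the commutant as precisely $\spn\{\Pi_\lambda : \lambda \in R_G\}$. Hence $\mathcal{T}_{1,\Phi}(X) = \sum_\lambda c_\lambda \Pi_\lambda$ for some scalars; pairing with $\Pi_\mu$ and using orthogonality of the projectors gives $\tr(\Pi_\mu \mathcal{T}_{1,\Phi}(X)) = c_\mu \tr(\Pi_\mu)$, while self-adjointness of $\mathcal{T}_{1,\Phi}$ together with $\mathcal{T}_{1,\Phi}(\Pi_\mu) = \Pi_\mu$ (each $\Pi_\mu$ lies in the commutant) gives $\tr(\Pi_\mu \mathcal{T}_{1,\Phi}(X)) = \tr(\mathcal{T}_{1,\Phi}(\Pi_\mu) X) = \tr(\Pi_\mu X)$. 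Equating yields $c_\mu = \tr(X\Pi_\mu)/\tr(\Pi_\mu)$, which is the claimed formula.

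For the converse, suppose the displayed formula holds for all $X$ with some orthogonal projectors $\{\Pi_\lambda\}_{\lambda \in R_G}$ onto subspaces partitioning $V$. Substituting $X = \Pi_\mu$ gives $\mathcal{T}_{1,\Phi}(\Pi_\mu) = \Pi_\mu$, so each $\Pi_\mu$ lies in the commutant and each $V_\mu$ is $\phi$-invariant; moreover the image of $\mathcal{T}_{1,\Phi}$, which by (i) is the full commutant, equals $\spn\{\Pi_\lambda\}$, so $\dim \mathcal{L}(V)^G = |R_G|$ (the $\Pi_\lambda$ are linearly independent, being nonzero projectors onto mutually orthogonal subspaces). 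Irreducibility of each $V_\mu$ then follows by contradiction: a proper nonzero invariant subspace $W \subsetneq V_\mu$ has $\Pi_W$ in the commutant (unitarity makes $W^\perp$ invariant too), so $\Pi_W = \sum_\lambda a_\lambda \Pi_\lambda$; testing on vectors of $V_\nu$ with $\nu \neq \mu$ forces $a_\nu = 0$, hence $\Pi_W = a_\mu \Pi_\mu$ with $a_\mu \in \{0,1\}$ by idempotency, contradicting $0 \neq W \subsetneq V_\mu$. Multiplicity-freeness follows because isomorphic components would produce off-diagonal intertwiners enlarging the commutant beyond dimension $|R_G|$.

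The routine parts are the two preliminary identities and the trace bookkeeping. The step needing the most care is the converse: one must argue that the hypothesized $\Pi_\lambda$ are forced to be the genuine irreducible projectors rather than some arbitrary resolution of the identity, which is handled above via the invariance $\mathcal{T}_{1,\Phi}(\Pi_\mu) = \Pi_\mu$ and the ensuing dimension count. I would write the forward direction in full and the converse somewhat more tersely, since the identification of the $\SymCl{n}$ irreps proceeds by computing the $1$-twirl explicitly and then invoking whichever direction is convenient.
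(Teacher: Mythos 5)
Your proof is correct and proceeds along lines recognizable from the paper's own argument, but with two genuine differences worth noting. For the forward direction, the paper determines the coefficient $c_\lambda(X)$ by the cyclic trace identity $\tr(\Pi_\lambda\,\mathcal{T}_{1,\Phi}(X))=\E_g\tr(\phi_g^\dagger\Pi_\lambda\phi_g X)=\tr(\Pi_\lambda X)$, whereas you route the same computation through self-adjointness of $\mathcal{T}_{1,\Phi}$ in the Hilbert--Schmidt inner product together with $\mathcal{T}_{1,\Phi}(\Pi_\mu)=\Pi_\mu$; these are the same fact dressed up differently. The more interesting divergence is in the converse. The paper shows each $V_\lambda$ is irreducible by a pointwise argument: it takes a proper invariant $W\subseteq V_\lambda$, a vector $\ket{x}\perp W$, and shows the twirl of the rank-one operator $\op{x}{x}$ simultaneously annihilates $W$ and acts as the scalar $\ip{x}{x}/\dim V_\lambda$ on it, forcing $W=\{0\}$ or $W=V_\lambda$. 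You instead invoke the fact that the image of $\mathcal{T}_{1,\Phi}$ is the commutant and is therefore $\spn\{\Pi_\lambda\}$, so any invariant projector $\Pi_W$ with $W\subsetneq V_\mu$ must be a linear combination of the $\Pi_\lambda$, which idempotency and support considerations collapse to $0$ or $\Pi_\mu$. Your route through the commutant dimension has the side benefit of making the multiplicity-free part of the converse explicit (isomorphic blocks would enlarge the commutant beyond $|R_G|$), a point the paper's converse leaves unaddressed. Both approaches are valid; yours is slightly more structural and arguably patches a small gap, while the paper's is more self-contained at the level of explicit vectors.
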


Our strategy for determining the irreps of $G = \SymCl{n}$ is therefore to directly compute $\mathcal{T}_{1,\Phi}$, from which we can infer the irreps from its block-diagonal structure. To use Proposition~\ref{prop:twirl_irreps}, we will take $\phi$ as the unitary channel $\mathcal{U}$, so that $V = \mathcal{L}(\mathcal{H})$ and $\Phi(\cdot) = \mathcal{U}(\cdot)\mathcal{U}^\dagger$ (note that this is a superchannel). For technical reasons, it will be easier to first compute $\mathcal{T}_{2, \mathcal{U}}$, from which the desired twirl $\mathcal{T}_{1,\Phi}$ can be evaluated. The relation between these two twirls is given by the following lemma.

\begin{lemma}\label{lem:1twirl_to_2twirl}
    Let $\mathcal{U} : G \to \U(\mathcal{L}(\mathcal{H}))$ be a unitary representation and $\Phi : G \to \U(\mathcal{L}(\mathcal{L}(\mathcal{H})))$ its adjoint representation, i.e., $\Phi_g(\mathcal{A}) = \mathcal{U}_g^\dagger \mathcal{A} \mathcal{U}_g$ for any superoperator $\mathcal{A}$. The $1$-twirl by $\Phi$ can be computed from the $2$-twirl by $\mathcal{U}$ as
    \begin{equation}\label{eq:1-twirl_from_2-twirl}
    \mathcal{T}_{1,\Phi}(\mathcal{A})(X) = \tr_2\l[ \mathcal{T}_{2,\mathcal{U}} (\mathcal{A}) (\I \otimes X) \r],
    \end{equation}
    for all $\mathcal{A} \in \mathcal{L}(\mathcal{L}(\mathcal{H}))$ and $X \in \mathcal{L}(\mathcal{H})$. Here, the domain of $\mathcal{T}_{2,\mathcal{U}}$ is understood with respect to the isomorphism $\mathcal{L}(\mathcal{L}(\mathcal{H})) \cong \mathcal{L}(\mathcal{H})^{\otimes 2}$, given by
    \begin{equation}\label{eq:2-twirl_isomorphism}
    \vop{A}{B} \cong A \otimes B^\dagger.
    \end{equation}
\end{lemma}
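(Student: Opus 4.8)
The plan is to reduce everything to a single elementary identity about the operator isomorphism~\eqref{eq:2-twirl_isomorphism} and then commute the group average past it. Concretely, I would rely on two observations. First, for \emph{any} dyadic superoperator $\vop{C}{D}$ and any $X \in \mathcal{L}(\mathcal{H})$, one has $\vop{C}{D}(X) = \tr(D^\dagger X)\,C = \tr_2\bigl[ (C \otimes D^\dagger)(\I \otimes X) \bigr]$, since $(C\otimes D^\dagger)(\I\otimes X) = C \otimes (D^\dagger X)$ and tracing out the second factor produces the scalar $\tr(D^\dagger X)$ times $C$. By linearity in $\mathcal{A}$ (the dyads $\vop{C}{D}$, with $C,D$ ranging over an operator basis of $\mathcal{L}(\mathcal{H})$, span $\mathcal{L}(\mathcal{L}(\mathcal{H}))$), this gives $\mathcal{B}(X) = \tr_2\bigl[ \widetilde{\mathcal{B}}(\I\otimes X) \bigr]$ for every superoperator $\mathcal{B}$, where $\widetilde{\mathcal{B}}$ denotes $\mathcal{B}$ viewed inside $\mathcal{L}(\mathcal{H})^{\otimes 2}$ via~\eqref{eq:2-twirl_isomorphism}. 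Second, I would check that the twirl $\mathcal{T}_{1,\Phi}$ on $\mathcal{L}(\mathcal{L}(\mathcal{H}))$ corresponds under this isomorphism exactly to $\mathcal{T}_{2,\mathcal{U}}$ on $\mathcal{L}(\mathcal{H})^{\otimes 2}$; combining the two observations with $\mathcal{B} = \mathcal{T}_{1,\Phi}(\mathcal{A})$ immediately yields Eq.~\eqref{eq:1-twirl_from_2-twirl}.

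For the second observation I would compute $\Phi_g$ on a dyad directly. Using $\vop{A}{B}(X) = \tr(B^\dagger X)\,A$ together with the adjoint channel $\mathcal{U}_g^\dagger(Y) = U_g^\dagger Y U_g$ and cyclicity of the trace, one finds $\Phi_g(\vop{A}{B})(X) = \mathcal{U}_g^\dagger\bigl( \tr(B^\dagger U_g X U_g^\dagger)\,A \bigr) = \tr\bigl( (U_g^\dagger B U_g)^\dagger X \bigr)\, U_g^\dagger A U_g$, i.e. $\Phi_g(\vop{A}{B}) = \vop{\mathcal{U}_g^\dagger(A)}{\mathcal{U}_g^\dagger(B)}$. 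Under~\eqref{eq:2-twirl_isomorphism} this reads $\mathcal{U}_g^\dagger(A)\otimes\mathcal{U}_g^\dagger(B^\dagger) = (\mathcal{U}_g^\dagger)^{\otimes 2}(A\otimes B^\dagger)$, so $\mathcal{T}_{1,\Phi}$ corresponds to $\E_{g}(\mathcal{U}_g^\dagger)^{\otimes 2}$. Since the (uniform/Haar) average over $G$ is invariant under $g \mapsto g^{-1}$ and $\mathcal{U}_{g^{-1}} = \mathcal{U}_g^\dagger$, this equals $\E_{g}\mathcal{U}_g^{\otimes 2} = \mathcal{T}_{2,\mathcal{U}}$, as claimed. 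Putting the pieces together: $\mathcal{T}_{1,\Phi}(\mathcal{A})(X) = \tr_2\bigl[ \widetilde{\mathcal{T}_{1,\Phi}(\mathcal{A})}(\I\otimes X) \bigr] = \tr_2\bigl[ \mathcal{T}_{2,\mathcal{U}}(\mathcal{A})(\I\otimes X) \bigr]$.

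I do not expect a genuine obstacle here; the lemma is essentially bookkeeping and the proof is short once organized as above. The only points that require care are tracking the Hermitian conjugates through the isomorphism $\vop{C}{D}\cong C\otimes D^\dagger$ (so that $\Phi_g$ intertwines with $(\mathcal{U}_g^\dagger)^{\otimes 2}$ rather than acquiring a spurious transpose), and explicitly invoking the left/right (equivalently $g\leftrightarrow g^{-1}$) invariance of the twirl to reconcile the $\mathcal{U}_g^\dagger$ appearing in $\Phi_g$ with the $\mathcal{U}_g$ convention in $\mathcal{T}_{2,\mathcal{U}}$ from Definition~\ref{def:twirl}. As a consistency check I would verify the special case $\mathcal{A} = \mathcal{M}_Z$: the right-hand side then reproduces the familiar Schur-type diagonal form of $\mathcal{M}$ in Eqs.~\eqref{eq:M_channel} and~\eqref{eq:channel_diag}, confirming that the normalization (no stray dimension factors from the partial trace) is correct.
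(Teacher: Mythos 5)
Your proof is correct and follows essentially the same route as the paper's: expand $\mathcal{A}$ in an operator-dyad basis, evaluate $\Phi_g$ on a dyad (equivalently, $\mathcal{U}_g^\dagger(\cdot)\mathcal{U}_g$), and repackage the resulting expression as a partial trace over the second tensor factor. The one place you are slightly more careful than the paper's write-up is in explicitly invoking the invariance of the uniform measure under $g\mapsto g^{-1}$ to identify $\E_{g}(\mathcal{U}_g^\dagger)^{\otimes 2}$ with $\mathcal{T}_{2,\mathcal{U}}=\E_{g}\mathcal{U}_g^{\otimes 2}$ as in Definition~\ref{def:twirl}; the paper makes this identification silently in its last step.
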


\begin{proof}
    Write $\mathcal{A} = \sum_{i,j \in [d^2]} \mathcal{A}_{ij} \vop{B_i}{B_j}$, where $\mathcal{A}_{ij} \in \C$ and $\{B_i\}_{i \in [d^2]}$ is an orthonormal operator basis. By a direct calculation:
    \begin{align}
    \mathcal{T}_{1,\Phi}(\mathcal{A})(X) &= \E_{g \sim G} \Phi_g(\mathcal{A})(X) \notag\\
    &= \E_{g \sim G} \mathcal{U}_g^\dagger \mathcal{A} \mathcal{U}_g(X) = \E_{g \sim G} \mathcal{U}_g^\dagger \mathcal{A}(U_g X U_g^\dagger) \notag\\
    &= \E_{g \sim G} \mathcal{U}_g^\dagger \sum_{i,j \in [d^2]} \mathcal{A}_{ij} \kket{B_i} \vip{B_j}{U_g X U_g^\dagger} \notag\\
    &= \E_{g \sim G} \sum_{i,j \in [d^2]} \mathcal{A}_{ij} U_g^\dagger B_i U_g \tr(B_j^\dagger U_g X U_g^\dagger) \notag\\
    &= \E_{g \sim G} \tr_2\l[ \sum_{i,j \in [d^2]} \mathcal{A}_{ij} ( U_g^\dagger B_i U_g )   \otimes ( U_g^\dagger B_j^\dagger U_g X ) \r] \notag\\
    &= \tr_2\l[ \sum_{i,j \in [d^2]} \mathcal{A}_{ij} \mathcal{T}_{2,\mathcal{U}}(B_i \otimes B_j^\dagger)(\I \otimes X) \r] \notag\\
    &= \tr_2\l[ \mathcal{T}_{2,\mathcal{U}} \l( \sum_{i,j \in [d^2]} \mathcal{A}_{ij} \vop{B_i}{B_j} \r) (\I \otimes X) \r] \notag\\
    &= \tr_2\l[ \mathcal{T}_{2,\mathcal{U}} (\mathcal{A}) (\I \otimes X) \r].
    \end{align}
\end{proof}

Before we can compute $\mathcal{T}_{2, \mathcal{U}}$ for the subsystem-symmetrized local Clifford group, we will need a small result about the group orbit of a $k$-local Pauli operator $P$ under the action of $\mathcal{U}$. The orbit is defined as
\begin{equation}
    G \cdot P \coloneqq \{ \mathcal{U}_g(P) : g \in G \}.
\end{equation}
This will help us determine how the twirl acts on Pauli operators, which as an basis is used to compute the matrix elements of $\mathcal{T}_{2, \mathcal{U}}$. To this end, we define an orthonormal basis of $k$-local Pauli operators,
\begin{equation}
    \mathcal{B}_k \coloneqq \{ P \in \Pauli(n) / \sqrt{d} : |P| = k \},
\end{equation}
which contains $|\mathcal{B}_k| = 3^k \binom{n}{k}$ elements.

\begin{lemma}\label{lem:sspauli_orbit}
   Let $G = \SymCl{n}$. The orbit $G \cdot P$ of any $P \in \mathcal{B}_k$ is equal to $\pm \mathcal{B}_k$, i.e., the set of all signed $k$-local Pauli operators.
\end{lemma}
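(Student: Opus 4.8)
The plan is to establish the set equality $G \cdot P = \pm\mathcal{B}_k$ by verifying the two inclusions separately, for any fixed $P \in \mathcal{B}_k$; throughout I take $1 \le k \le n$, the case $k = 0$ being trivial (there $G \cdot P = \mathcal{B}_0 = \{\I/\sqrt{d}\}$). Write a group element as $(\pi, C)$ with $C = \bigotimes_{i \in [n]} C_i \in \Cl(1)^{\otimes n}$, so that $\mathcal{U}_{(\pi, C)}(P) = S_\pi C P C^\dagger S_\pi^\dagger$, and note that it suffices to understand how the two factors $C$ and $S_\pi$ act on tensor products of single-qubit Paulis.

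For the inclusion $G \cdot P \subseteq \pm\mathcal{B}_k$, I would observe that each factor of the action sends a normalized Pauli to a signed normalized Pauli and preserves the weight. Writing $P = \bigotimes_i W_i$ with $W_i \in \{\I, X, Y, Z\}$, conjugation by $C$ factorizes as $C P C^\dagger = \bigotimes_i C_i W_i C_i^\dagger$, and each factor is $\I$ when $W_i = \I$ or $\pm$ a single-qubit Pauli when $W_i \neq \I$, since $\Cl(1)$ normalizes the single-qubit Pauli group; thus the weight is unchanged and only an overall sign can be produced. Conjugation by $S_\pi$ merely permutes the tensor factors, carrying a Pauli supported on a subsystem $I$ to one supported on its image under $\pi$, with the same single-qubit Paulis, the same weight, and no sign. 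Composing, $\mathcal{U}_{(\pi,C)}(P) \in \pm\mathcal{B}_k$.

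For the reverse inclusion $\pm\mathcal{B}_k \subseteq G \cdot P$, fix an arbitrary target $Q \in \pm\mathcal{B}_k$, supported on a subsystem $I'$ with $|I'| = k$, and write $P$ for a Pauli supported on $I$ with $|I| = k$. Since $|I| = |I'|$, choose $\pi \in \Sym(n)$ so that conjugation by $S_\pi$ maps operators supported on $I$ to operators supported on $I'$; it then suffices to find $C$ with $C P C^\dagger = S_\pi^\dagger Q S_\pi$, which is a signed weight-$k$ Pauli supported on $I$, say $c\bigotimes_{i \in I}\tilde{W}_i$ with $c \in \{+1,-1\}$ and $\tilde W_i \in \{X,Y,Z\}$. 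Using the transitivity of the single-qubit Clifford action on the six signed Paulis $\{\pm X, \pm Y, \pm Z\}$, pick for each $i \in I$ a gate $C_i \in \Cl(1)$ with $C_i W_i C_i^\dagger = \pm \tilde{W}_i$, choosing the signs on the $k$ sites to multiply to $c$ (for instance, put the sign on one designated site of $I$ and take the rest sign-free — here $k \ge 1$ is used), and set $C_i = \I$ for $i \notin I$. Then $\mathcal{U}_{(\pi, C)}(P) = S_\pi C P C^\dagger S_\pi^\dagger = Q$.

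The only point requiring care is the sign bookkeeping in the second inclusion: one must check that $\Cl(1)$ can realize a prescribed target single-qubit Pauli \emph{together with} either sign. This follows from the transitivity of the single-qubit Clifford action on $\{\pm X, \pm Y, \pm Z\}$ — the quotient $\Cl(1)/\Pauli(1)$ supplies all permutations of $X, Y, Z$, while conjugation by a Pauli (e.g.\ $Z X Z^\dagger = -X$) supplies the sign flip. Everything else is routine bookkeeping about the action of local Cliffords and qubit permutations on tensor-product Paulis, so I do not anticipate any genuine obstacle.
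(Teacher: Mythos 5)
Your proof is correct and takes essentially the same approach as the paper: both compute the $\Cl(1)^{\otimes n}$-orbit factorwise using transitivity of the single-qubit Clifford action on $\{\pm X, \pm Y, \pm Z\}$, and then use the $\Sym(n)$-action to move the support set around. Your version is a bit more careful about organizing the argument as two inclusions, the sign bookkeeping across tensor factors, and the $k=0$ edge case (where the orbit is $\mathcal{B}_0$, not $\pm\mathcal{B}_0$), but the substance is the same.
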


\begin{proof}
    Let the nontrivial support of $P$ be $I \subseteq [n]$, $|I| = k$. For each (normalized) Pauli matrix acting on subsystem $I$, its orbit by all single-qubit Clifford gates is $\pm \{X, Y, Z\} / \sqrt{2}$. Meanwhile, the trivial factors $\I/\sqrt{2}$ acting on $[n] \setminus I$ are invariant to any unitary transformation. Therefore $\Cl(1)^{\otimes n} \cdot P$ is the set of all normalized Pauli operators acting nontrivially only on the qubits in $I$ (with both signs $\pm 1$).
    
    Then, conjugation by $S_\pi$ for arbitrary $\pi \in \Sym(n)$ permutes the $k$ nontrivial factors of $P$ among the $n$ qubits. The orbit over all permutations yields all possible $\binom{n}{k}$ supports. Taking the direct product of both these Clifford- and symmetric-group actions therefore yields all $k$-local Pauli operators, with prefactors $\pm 1$.
\end{proof}

We are now ready to compute the $2$-fold twirl by $\mathcal{U}$. We comment that the high-level proof structure of this lemma is inspired by that of Ref.~\cite[Section~IV A 1]{wan2023matchgate}.

\begin{lemma}\label{lem:2-twirl_pauli}
    Let $\mathcal{U} : G \to \U(\mathcal{L}(\mathcal{H}))$ be the unitary representation of $G = \SymCl{n}$, defined by $\mathcal{U}_{(\pi, C)}(\rho) = S_\pi C \rho C^\dagger S_\pi^\dagger$. Its $2$-fold twirl is the projector
    \begin{equation}\label{eq:2-twirl-sspauli}
    \mathcal{T}_{2,\mathcal{U}} = \sum_{k=0}^n \vop{\Sigma_k^{(2)}}{\Sigma_k^{(2)}},
    \end{equation}
    where $\kket{\Sigma_k^{(2)}} \in \mathcal{L}(\mathcal{H})^{\otimes 2}$ is defined as
    \begin{equation}\label{eq:2-twirl-basis}
    \kket{\Sigma_k^{(2)}} = \frac{1}{\sqrt{3^k \binom{n}{k}}} \sum_{P \in \mathcal{B}_k} \kket{P} \kket{P}.
    \end{equation}
\end{lemma}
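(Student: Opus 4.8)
The plan is to compute $\mathcal{T}_{2,\mathcal{U}}$ as an orthogonal projector and read off its range. Since $\mathcal{U}_{(\pi,C)} = \mathcal{S}_\pi \mathcal{C}$, the twirl factorizes as $\mathcal{T}_{2,\mathcal{U}} = \E_{\pi,C}\mathcal{S}_\pi^{\otimes 2}\mathcal{C}^{\otimes 2} = (\E_\pi \mathcal{S}_\pi^{\otimes 2})(\E_C \mathcal{C}^{\otimes 2})$ into a product of two twirls, each an orthogonal projector (onto the $\Sym(n)$-fixed and the $\Cl(1)^{\otimes n}$-fixed subspace of $\mathcal{L}(\mathcal{H})^{\otimes 2}$, respectively). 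I would check that these two projectors commute — using that $\mathcal{S}_\pi^{-1}\mathcal{C}\mathcal{S}_\pi$ again lies in $\Cl(1)^{\otimes n}$ and re-indexing the average over $C$ — so that their product is the orthogonal projector onto the intersection of their ranges. Hence it suffices to identify the $\Cl(1)^{\otimes n}$-fixed subspace and the $\Sym(n)$-fixed subspace of $\mathcal{L}(\mathcal{H})^{\otimes 2}$ and intersect them. (This high-level structure mirrors the analogous matchgate computation referenced in the lemma statement.)

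For $\E_C \mathcal{C}^{\otimes 2}$, I would use that $\Cl(1)^{\otimes n}$ is itself a tensor product: after reorganizing $\mathcal{L}(\mathcal{H})^{\otimes 2} \cong \bigotimes_{i \in [n]} \mathcal{L}(\C^2)^{\otimes 2}$, the twirl factorizes into a single-qubit twirl on each tensor factor, so it suffices to compute $\E_{C_i \sim \Cl(1)}\mathcal{C}_i^{\otimes 2}$ on $\mathcal{L}(\C^2)^{\otimes 2}$. Writing $\mathcal{C}_i(\bar W) = \epsilon_{C_i}(W)\,\overline{W^{C_i}}$ for $W \in \{X,Y,Z\}$ with a random sign $\epsilon_{C_i}(W) \in \{\pm 1\}$, the key fact — which I would extract from the orbit structure in Lemma~\ref{lem:sspauli_orbit}, i.e.\ that $\Cl(1)$ acts on $\{X,Y,Z\}$ as the full octahedral rotation group of signed permutations — is that the sign averages vanish: $\E_{C_i}[\epsilon_{C_i}(W)\epsilon_{C_i}(W')] = 0$ when $W \neq W'$ and $\E_{C_i}[\epsilon_{C_i}(W)] = 0$, while the identity factor is fixed. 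This forces every cross term $\bar W \otimes \bar W'$ with $W \neq W'$ and every identity-mixed term to average to zero, leaving a two-dimensional fixed subspace spanned orthonormally by $\kket{\bar\I}\kket{\bar\I}$ and $\frac{1}{\sqrt 3}\sum_{W \in \{X,Y,Z\}}\kket{\bar W}\kket{\bar W}$. (Alternatively, since $\Cl(1)$ is a unitary $2$-design, this reduces to the Haar-$\U(2)$ twirl together with $\sum_{P \in \Pauli(1)} P \otimes P = 2\,\SWAP$.) Tensoring over the $n$ qubits, the fixed subspace of $\E_C \mathcal{C}^{\otimes 2}$ acquires the orthonormal basis $\{\kket{\Sigma_I}\}_{I \subseteq [n]}$ with $\kket{\Sigma_I} = 3^{-|I|/2}\sum_{P : \mathrm{supp}(P) = I}\kket{\bar P}\kket{\bar P}$.

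It then remains to impose invariance under $\E_\pi \mathcal{S}_\pi^{\otimes 2}$. Since conjugation by $S_\pi$ sends a Pauli to a Pauli of the same weight with support relabeled by $\pi$ and introduces no sign, one has $\mathcal{S}_\pi^{\otimes 2}\kket{\Sigma_I} = \kket{\Sigma_{\pi(I)}}$, so $\Sym(n)$ permutes the $\kket{\Sigma_I}$ transitively within each fixed subset-size $k = |I|$. Hence within $\spn\{\kket{\Sigma_I} : |I| = k\}$ the $\Sym(n)$-fixed vectors form the one-dimensional span of $\sum_{|I|=k}\kket{\Sigma_I}$; normalizing gives precisely $\kket{\Sigma_k^{(2)}} = (3^k\binom{n}{k})^{-1/2}\sum_{P \in \mathcal{B}_k}\kket{\bar P}\kket{\bar P}$. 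These are mutually orthonormal since different $k$ are supported on disjoint sets of $\kket{\Sigma_I}$'s, so the intersection of the two fixed subspaces has this orthonormal basis and $\mathcal{T}_{2,\mathcal{U}} = \sum_{k=0}^n \vop{\Sigma_k^{(2)}}{\Sigma_k^{(2)}}$, as claimed.

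The part I expect to be the main obstacle is the single-qubit sign-averaging fact: it is not enough that each individual sign $\epsilon_{C_i}(W)$ averages to zero; one genuinely needs that for $W \neq W'$ the product $\epsilon_{C_i}(W)\epsilon_{C_i}(W')$ is uniform on $\{\pm 1\}$, which relies on the precise image of $\Cl(1)$ in $\SO(3)$ — one checks it on the $24$ group elements by grouping them according to the induced permutation of $\{X,Y,Z\}$, whose four compatible sign patterns cancel in pairs, or else one appeals to the $2$-design property. The remaining bookkeeping (the regrouping $\mathcal{L}(\mathcal{H})^{\otimes 2} \cong \bigotimes_i \mathcal{L}(\C^2)^{\otimes 2}$ with its normalization constants, the commutation of the two sub-twirls, and the transitivity argument for $\Sym(n)$ on $k$-subsets) is routine.
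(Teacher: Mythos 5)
Your proposal is correct and takes a genuinely different route from the paper's proof, which proceeds by directly computing matrix elements $\bbra{P}\bbra{P}\mathcal{T}_{2,\mathcal{U}}\kket{P'}\kket{P'}$ using $G$-invariance: first showing off-diagonal entries $\mathcal{T}_{2,\mathcal{U}}\kket{P}\kket{Q}$ with $P\neq Q$ vanish (finding a single-qubit Pauli anticommuting with one and commuting with the other), then showing entries depend only on $|P|$ (finding a group element mapping $P\mapsto P'$), and finally fixing the overall coefficients to $1$ by appealing to the transitivity result of Lemma~\ref{lem:sspauli_orbit}. Your approach instead factorizes $\mathcal{T}_{2,\mathcal{U}}=(\E_\pi\mathcal{S}_\pi^{\otimes 2})(\E_C\mathcal{C}^{\otimes 2})$ — valid because the uniform measure on $\SymCl{n}$ is a product measure over $(\pi,C)$, and the two averaged projectors commute since $\Sym(n)$ normalizes $\Cl(1)^{\otimes n}$ under conjugation — and identifies the range as the intersection of the two fixed subspaces. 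This is more modular: it isolates the $\Cl(1)^{\otimes n}$-fixed subspace $\spn\{\kket{\Sigma_I}\}_{I\subseteq[n]}$ (which is exactly the fixed subspace for standard Pauli shadows, computable per qubit via the $2$-design or sign-averaging argument you give) and then shows that $\Sym(n)$-symmetrization collapses these $2^n$ vectors to $n+1$ by transitivity on $k$-subsets. The paper's argument is more elementary and entirely self-contained; yours makes the relationship to the unsymmetrized local Clifford group structurally explicit and transfers cleanly to the higher moments $\mathcal{T}_{t,\mathcal{U}}$. One stylistic point worth adding if you write this up: the paper's Definition~\ref{def:ss-pauli} calls $\SymCl{n}$ a direct product, but the multiplication law under which $U_{(\pi,C)}=S_\pi C$ is a homomorphism is actually that of a semidirect product $\Sym(n)\ltimes\Cl(1)^{\otimes n}$; this does not affect either proof (the Haar measure still factors over $(\pi,C)$), but it is the normality of $\Cl(1)^{\otimes n}$ that underlies your commutation argument, so it deserves to be stated.
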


\begin{proof}
    First, we will establish that for any two basis Pauli operators $P \neq Q$, we have $\mathcal{T}_{2,\mathcal{U}} \kket{P}\kket{Q} = 0$. Thus we only need to consider basis elements of $\mathcal{L}(\mathcal{H})^{\otimes 2}$ of the form $\kket{P}\kket{P}$. Next, we will show that $\mathcal{T}_{2,\mathcal{U}} \kket{P}\kket{P} = \mathcal{T}_{2,\mathcal{U}} \kket{P'}\kket{P'}$ whenever $|P| = |P'|$. Finally, using these two properties we can derive Eq.~\eqref{eq:2-twirl-sspauli}.

    Fix the basis of Pauli operators such that $P, Q \in \bigcup_{0 \leq k \leq n} \mathcal{B}_k$. If $P \neq Q$, then there exists at least one qubit $i \in [n]$ on which $P$ and $Q$ act as a different Pauli matrix. Hence there always exists some $W \in \Pauli(1)$ which anticommutes with one and commutes with the other, e.g., $\mathcal{W}_i \kket{P} = W_i P W_i^\dagger = -P$ and $\mathcal{W}_i \kket{Q} = Q$. Note that $\mathcal{W}_i$ is equal to $\mathcal{U}_{(e, W_i)}$ where $e \in \Sym(n)$ is the identity permutation. Thus using the property that $\mathcal{T}_{2,\mathcal{U}} \circ \mathcal{W}_i^{\otimes 2} = \mathcal{T}_{2,\mathcal{U}}$, we have
    \begin{equation}\label{eq:PQ_vanish}
    \begin{split}
    \mathcal{T}_{2,\mathcal{U}} \kket{P}\kket{Q} &= \mathcal{T}_{2,\mathcal{U}} \mathcal{W}_i^{\otimes 2} \kket{P}\kket{Q}\\
    &= -\mathcal{T}_{2,\mathcal{U}} \kket{P}\kket{Q},
    \end{split}
    \end{equation}
    implying that $\mathcal{T}_{2,\mathcal{U}} \kket{P}\kket{Q} = 0$.

    Now let $P, P'$ be $k$-local Pauli operators for any $k$. If they act nontrivially on different subsets $I, I' \subseteq [n]$ of qubits, then let $\pi \in \Sym(n)$ be a permutation that maps $I$ to $I'$. Given this permutation, if they act as different Pauli matrices on their new shared support $I'$, then furthermore let $C_i \in \Cl(1)$ for $i \in I'$ be Clifford gates that map each one to the other. Writing $C = \bigotimes_{i \in I'} C_i \otimes \I^{\otimes(n-k)}$, this transformation acts as $\mathcal{U}_{(\pi, C)}^{\otimes 2} \kket{P}\kket{P} = \kket{P'}\kket{P'}$, which implies that
    \begin{equation}\label{eq:PP_same}
    \begin{split}
    \mathcal{T}_{2,\mathcal{U}} \kket{P}\kket{P} &= \mathcal{T}_{2,\mathcal{U}} \mathcal{U}_{(\pi, C)}^{\otimes 2} \kket{P}\kket{P}\\
    &= \mathcal{T}_{2,\mathcal{U}} \kket{P'}\kket{P'}.
    \end{split}
    \end{equation}

    We are now ready to derive Eq.~\eqref{eq:2-twirl-sspauli}. As established by Eq.~\eqref{eq:PQ_vanish}, we only need to expand the 2-fold twirl in the basis of $\kket{P}\kket{P}$:
    \begin{equation}
    \begin{split}
    \mathcal{T}_{2,\mathcal{U}} &= \sum_{P, P' \in \Pauli(n)} \bbra{P}\bbra{P} \mathcal{T}_{2,\mathcal{U}} \kket{P'} \kket{P'} \kket{P} \vop{P}{P'} \bbra{P'}\\
    &= \sum_{k=0}^n \sum_{P, P' \in \mathcal{B}_k} \bbra{P}\bbra{P} \mathcal{T}_{2,\mathcal{U}} \kket{P'} \kket{P'} \kket{P} \vop{P}{P'} \bbra{P'},
    \end{split}
    \end{equation}
    where the second simplification is due to the fact that $\mathcal{U}_{(\pi, C)}$ preserves Pauli locality, hence $\bbra{P}\bbra{P} \mathcal{T}_{2,\mathcal{U}} \kket{P'} \kket{P'} = 0$ whenever $|P| \neq |P'|$. Now we invoke Eq.~\eqref{eq:PP_same}, which implies that $\bbra{P}\bbra{P} \mathcal{T}_{2,\mathcal{U}} \kket{P'} \kket{P'} = c_k'$ for all $P, P' \in \mathcal{B}_k$ (i.e., the matrix element does not depend on the particular choice of $P, P'$). Hence
    \begin{equation}
    \begin{split}
    \mathcal{T}_{2,\mathcal{U}} &= \sum_{k=0}^n c_k' \sum_{P, P' \in \mathcal{B}_k} \kket{P}\kket{P} \bbra{P'}\bbra{P'}\\
    &= \sum_{k=0}^n c_k \vop{\Sigma_k^{(2)}}{\Sigma_k^{(2)}},
    \end{split}
    \end{equation}
    where we have rescaled $c_k = c_k' 3^k \binom{n}{k}$ to account for the normalization of $\kket{\Sigma_k^{(2)}}$.

    Finally, we show that all $c_k = 1$ by proving that $\mathcal{T}_{2,\mathcal{U}} \kket{\Sigma_k^{(2)}} = \kket{\Sigma_k^{(2)}}$.  Expand the expression:
    \begin{equation}\label{eq:T2_basis}
    \mathcal{T}_{2,\mathcal{U}} \kket{\Sigma_k^{(2)}} = \frac{1}{|G|} \sum_{g \in G} \frac{1}{\sqrt{|\mathcal{B}_k|}} \sum_{P \in \mathcal{B}_k} \mathcal{U}_{g}\kket{P} \otimes \mathcal{U}_{g}\kket{P}.
    \end{equation}
    We first compute the average over the group for some fixed $P$. By Lemma~\ref{lem:sspauli_orbit}, we know that the orbit $G \cdot P = \{ \mathcal{U}_g\kket{P} : g \in G \} = \pm\mathcal{B}_k$. Thus
    \begin{equation}
    \sum_{g \in G} \mathcal{U}_{g}\kket{P} \otimes \mathcal{U}_{g}\kket{P} = 2 \frac{|G|}{|G \cdot P|} \sum_{Q \in \mathcal{B}_k} (\pm 1)^2 \kket{Q} \kket{Q},
    \end{equation}
    where the factor of 2 is due to the fact that for each $Q \in \mathcal{B}_k$, both $\pm Q \in G \cdot P$, and the factor of $|G|/|G \cdot P|$ takes care of double counting when summing over all elements of $G$. Noting that $|G \cdot P| = 2|\mathcal{B}_k|$, we can plug this result into Eq.~\eqref{eq:T2_basis} to find that
    \begin{equation}
    \begin{split}
    \mathcal{T}_{2,\mathcal{U}} \kket{\Sigma_k^{(2)}} &= \frac{1}{|\mathcal{B}_k|^{3/2}} \sum_{P \in \mathcal{B}_k} \sum_{Q \in \mathcal{B}_k} \kket{Q}\kket{Q}\\
    &= \frac{1}{\sqrt{|\mathcal{B}_k|}} \sum_{Q \in \mathcal{B}_k} \kket{Q}\kket{Q}\\
    &= \kket{\Sigma_k^{(2)}},
    \end{split}
    \end{equation}
    as desired.
\end{proof}

We are now ready to prove the main result of this section:~the irreps of $\SymCl{n}$ are labeled by the Pauli weights $k \in \{0, 1, \ldots, n\}$. The proof structure is as follows:~from the expression for $\mathcal{T}_{2, \mathcal{U}}$ from Lemma~\ref{lem:2-twirl_pauli}, we can compute $\mathcal{T}_{1, \Phi}$ by using Lemma~\ref{lem:1twirl_to_2twirl}. Then by examining $\mathcal{T}_{1, \Phi}$, we use Proposition~\ref{prop:twirl_irreps} to infer the irreps.

\begin{theorem}
    The representation $\mathcal{U} : \SymCl{n} \to \U(\mathcal{L}(\mathcal{H}))$, defined by $\mathcal{U}_{(\pi, C)}(\rho) = S_\pi C \rho C^\dagger S_\pi^\dagger$, decomposes into the irreps
    \begin{equation}
    V_k = \spn(\mathcal{B}_k), \quad k \in \{0, 1, \ldots, n\}.
    \end{equation}
\end{theorem}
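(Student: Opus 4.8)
The plan is to chain the three results just proved. I would apply Proposition~\ref{prop:twirl_irreps} with $V = \mathcal{L}(\mathcal{H})$ and $\phi = \mathcal{U}$ the channel representation $\mathcal{U}_{(\pi,C)}(\rho) = S_\pi C \rho C^\dagger S_\pi^\dagger$; then it suffices to show the adjoint one-twirl has the canonical form $\mathcal{T}_{1,\Phi}(\mathcal{A}) = \sum_{k=0}^{n} \bigl(\tr(\Pi_k\mathcal{A})/\tr\Pi_k\bigr)\Pi_k$, where $\Pi_k = \sum_{P\in\mathcal{B}_k}\vop{P}{P}$ is the orthogonal projector onto $V_k = \spn(\mathcal{B}_k)$. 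Since finite groups are compact and $\sum_{k=0}^n\Pi_k = \I$ (the Pauli weights exhaust $\{0,\dots,n\}$, so the $\Pi_k$ form a complete orthogonal system), Proposition~\ref{prop:twirl_irreps} then forces the $V_k$ to be exactly the irreducible subspaces. One should also note in passing that $\Phi$ in Definition~\ref{def:twirl} is $\mathcal{U}_g(\cdot)\mathcal{U}_g^\dagger$ whereas Lemma~\ref{lem:1twirl_to_2twirl} uses $\mathcal{U}_g^\dagger(\cdot)\mathcal{U}_g$; the two one-twirls coincide after the substitution $g\mapsto g^{-1}$, which is a bijection of $G$.

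The computation has two moves. First, transport the explicit two-twirl of Lemma~\ref{lem:2-twirl_pauli} through the isomorphism $\mathcal{L}(\mathcal{L}(\mathcal{H}))\cong\mathcal{L}(\mathcal{H})^{\otimes 2}$, $\vop{A}{B}\cong A\otimes B^\dagger$, of Lemma~\ref{lem:1twirl_to_2twirl}. Because Pauli operators are Hermitian, $\kket{\Sigma_k^{(2)}} = \bigl(3^k\binom{n}{k}\bigr)^{-1/2}\sum_{P\in\mathcal{B}_k}\kket{P}\kket{P}$ corresponds to the superoperator $(\tr\Pi_k)^{-1/2}\Pi_k$, using $\tr\Pi_k = \dim V_k = |\mathcal{B}_k| = 3^k\binom{n}{k}$. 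Hence $\mathcal{T}_{2,\mathcal{U}} = \sum_k\vop{\Sigma_k^{(2)}}{\Sigma_k^{(2)}}$ sends a superoperator $\mathcal{A}$ to $\sum_k\bigl(\tr(\Pi_k\mathcal{A})/\tr\Pi_k\bigr)\Pi_k$: one factor $(\tr\Pi_k)^{-1/2}$ produces the Hilbert--Schmidt overlap $\tr(\Pi_k\mathcal{A})/\sqrt{\tr\Pi_k}$ and the other renormalizes $\kket{\Sigma_k^{(2)}}$ back to $\Pi_k$. Re-expressed as an operator on $\mathcal{H}\otimes\mathcal{H}$ this reads $\mathcal{T}_{2,\mathcal{U}}(\mathcal{A}) = \sum_k \bigl(\tr(\Pi_k\mathcal{A})/\tr\Pi_k\bigr)\sum_{P\in\mathcal{B}_k}P\otimes P$.

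Second, feed this into the partial-trace identity~\eqref{eq:1-twirl_from_2-twirl}: for every $X\in\mathcal{L}(\mathcal{H})$,
\[
\mathcal{T}_{1,\Phi}(\mathcal{A})(X) = \tr_2\!\Bigl[\sum_k \frac{\tr(\Pi_k\mathcal{A})}{\tr\Pi_k}\sum_{P\in\mathcal{B}_k}(P\otimes P)(\I\otimes X)\Bigr] = \sum_k \frac{\tr(\Pi_k\mathcal{A})}{\tr\Pi_k}\sum_{P\in\mathcal{B}_k}P\,\tr(PX),
\]
and recognizing $\sum_{P\in\mathcal{B}_k}P\,\tr(PX) = \Pi_k(X)$ gives precisely $\mathcal{T}_{1,\Phi}(\mathcal{A}) = \sum_k\bigl(\tr(\Pi_k\mathcal{A})/\tr\Pi_k\bigr)\Pi_k$, the required form; Proposition~\ref{prop:twirl_irreps} then delivers $\mathcal{L}(\mathcal{H}) = \bigoplus_{k=0}^n V_k$ as its irreducible decomposition.

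I do not anticipate a genuine obstacle, since the representation-theoretic heart is already contained in Lemma~\ref{lem:2-twirl_pauli} (via the orbit computation of Lemma~\ref{lem:sspauli_orbit}). The only delicate point is the bookkeeping of the $\mathcal{L}(\mathcal{L}(\mathcal{H}))\cong\mathcal{L}(\mathcal{H})^{\otimes2}$ identification and the normalization constants --- in particular checking that the $(\tr\Pi_k)^{-1/2}$ appearing in $\kket{\Sigma_k^{(2)}}$ combines through $\vop{\Sigma_k^{(2)}}{\Sigma_k^{(2)}}$ and the partial trace to leave exactly $\tr(\Pi_k\mathcal{A})/\tr\Pi_k$ rather than some other power, and keeping track of which tensor factor carries the dagger (which is where Hermiticity of the Paulis enters).
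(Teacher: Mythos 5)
Your argument is correct and follows the paper's own proof essentially step for step: feed the explicit two-twirl of Lemma~\ref{lem:2-twirl_pauli} into the partial-trace identity of Lemma~\ref{lem:1twirl_to_2twirl}, simplify to the canonical form $\mathcal{T}_{1,\Phi}(\mathcal{A}) = \sum_k \bigl(\tr(\Pi_k\mathcal{A})/\tr\Pi_k\bigr)\Pi_k$, and invoke Proposition~\ref{prop:twirl_irreps}. The only difference is your (correct) aside about the $\mathcal{U}_g$-versus-$\mathcal{U}_g^\dagger$ convention mismatch between Definition~\ref{def:twirl} and Lemma~\ref{lem:1twirl_to_2twirl}, reconciled by $g\mapsto g^{-1}$, which the paper leaves implicit.
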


\begin{proof}
    From Lemma~\ref{lem:2-twirl_pauli}, we have
    \begin{equation}
    \mathcal{T}_{2,\mathcal{U}} = \sum_{k=0}^n \vop{\Sigma_k^{(2)}}{\Sigma_k^{(2)}},
    \end{equation}
    where $\kket{\Sigma_k^{(2)}}$ is defined in Eq.~\eqref{eq:2-twirl-basis}. Using Lemma~\ref{lem:1twirl_to_2twirl}, we compute $\mathcal{T}_{1,\Phi}(\mathcal{A})$ by evaluating $\mathcal{T}_{2,\mathcal{U}}(\mathcal{A})$ for arbitrary superoperators $\mathcal{A}$. Let us express $\mathcal{A}$ in the Pauli basis:
    \begin{equation}
    \mathcal{A} = \sum_{k,\ell=0}^n \sum_{P \in \mathcal{B}_k} \sum_{Q \in \mathcal{B}_\ell} \mathcal{A}_{PQ} \vop{P}{Q}.
    \end{equation}
    Recall from Eq.~\eqref{eq:2-twirl_isomorphism} that in order to evaluate $\mathcal{T}_{2,\mathcal{U}}(\mathcal{A})$, we need $\mathcal{T}_{2,\mathcal{U}}(Q \otimes P)$ for every $P, Q$. But because $\mathcal{T}_{2,\mathcal{U}}$ projects onto symmetrized basis elements $P \otimes P$, we only have to consider the case where $P = Q$:
    \begin{align}
    \mathcal{T}_{2,\mathcal{U}}\kket{P \otimes P} &= \kket{\Sigma_k^{(2)}} \vip{\Sigma_k^{(2)}}{P \otimes P} \notag\\
    &= \kket{\Sigma_k^{(2)}} \frac{1}{\sqrt{|\mathcal{B}_k|}} \sum_{P' \in \mathcal{B}_k} \vip{P'}{P} \vip{P'}{P} \notag\\
    &= \frac{1}{\sqrt{|\mathcal{B}_k|}} \kket{\Sigma_k^{(2)}},
    \end{align}
    where $|\mathcal{B}_k| = 3^k \binom{n}{k}$.

    Inserting this result into Eq.~\eqref{eq:1-twirl_from_2-twirl} yields
    \begin{align}
    \mathcal{T}_{1,\Phi}(\mathcal{A})(X) &= \tr_1\l[ \mathcal{T}_{2,\mathcal{U}}(\mathcal{A}) (X \otimes \I) \r]\\
    &= \tr_1\l[ \sum_{k=0}^n \sum_{P \in \mathcal{B}_k} \mathcal{A}_{PP} \frac{1}{\sqrt{|\mathcal{B}_k|}} \Sigma_k^{(2)} (X \otimes \I) \r] \notag\\
    &= \tr_1\l[ \sum_{k=0}^n \sum_{P \in \mathcal{B}_k} \mathcal{A}_{PP} \frac{1}{\sqrt{|\mathcal{B}_k|}} \r. \notag\\
    &\quad \times \l. \frac{1}{\sqrt{|\mathcal{B}_k|}} \sum_{P' \in \mathcal{B}_k} (P' \otimes P') (X \otimes \I) \r] \notag\\
    &= \sum_{k=0}^n \frac{\sum_{P \in \mathcal{B}_k} \mathcal{A}_{PP}}{|\mathcal{B}_k|} \sum_{P' \in \mathcal{B}_k} \tr(P' X) P'. \notag
    \end{align}
    We make a number of observations here. First, note that $\sum_{P \in \mathcal{B}_k} \mathcal{A}_{PP} = \tr(\mathcal{A} \Pi_k)$, where $\Pi_k = \sum_{P \in \mathcal{B}_k} \vop{P}{P}$.  Also, $|\mathcal{B}_k| = \tr(\Pi_k)$. Finally, the sum over $P'$ can be represented as
    \begin{equation}
    \begin{split}
    \sum_{P' \in \mathcal{B}_k} \tr(P' X) P' &= \sum_{P' \in \mathcal{B}_k} \kket{P'} \vip{P'}{X}\\
    &= \Pi_k \kket{X}.
    \end{split}
    \end{equation}
    Because this holds for all $X \in \mathcal{L}(\mathcal{H})$, we can say that
    \begin{equation}
    \mathcal{T}_{1,\Phi}(\mathcal{A}) = \sum_{k=0}^n \frac{\tr(\mathcal{A} \Pi_k)}{\tr(\Pi_k)} \Pi_k.
    \end{equation}
    By Proposition~\ref{prop:twirl_irreps}, we know that the twirl has this expression if and only if the irreducible subspaces of $\mathcal{U}$ are $V_k = \spn(\mathcal{B}_k)$.
\end{proof}

Finally, we close this subsection with the deferred proof of the well-known result Proposition~\ref{prop:twirl_irreps}, for completeness.

\begin{proof}[Proof (of Proposition~\ref{prop:twirl_irreps})]
    For the forward direction, suppose $\phi = \bigoplus_{\lambda \in R_G} \phi^{(\lambda)}$ where each $\phi^{(\lambda)} : G \to \U(V_\lambda)$ is irreducible. (This is guaranteed by Maschke's theorem, and generalizes to the Peter--Weyl theorem for compact groups~\cite{fulton2004representation}.) Because $\mathcal{T}_{1,\Phi}(X)$ commutes with all $\phi_g$, they are simultaneously block diagonal, so $\mathcal{T}_{1,\Phi}(X) = \bigoplus_{\lambda \in R_G} \mathcal{T}_{1,\Phi^{(\lambda)}}(X)$ where $\Phi^{(\lambda)}(\cdot) = \phi^{(\lambda)}(\cdot)\phi^{(\lambda) \dagger}$. Because $\phi^{(\lambda)}$ is irreducible, by Schur's lemma $\mathcal{T}_{1,\Phi^{(\lambda)}}(X)$ must be a multiple of the identity on $V_\lambda$. Therefore
    \begin{equation}
    \mathcal{T}_{1,\Phi}(X) = \bigoplus_{\lambda \in R_G} c_\lambda(X) \I_{V_\lambda} = \sum_{\lambda \in R_G} c_\lambda(X) \Pi_\lambda.
    \end{equation}
    From the orthogonality of projectors $\Pi_\lambda \Pi_{\lambda'} = \delta_{\lambda \lambda'} \Pi_\lambda$, the scalar $c_\lambda(X)$ is determined by
    \begin{equation}
    \begin{split}
    c_\lambda(X) \tr(\Pi_\lambda) &= \tr(\Pi_\lambda \mathcal{T}_{1,\Phi}(X))\\
    &= \E_{g \sim G} \tr\l( \phi_g^\dagger \Pi_\lambda \phi_g X \r)\\
    &= \E_{g \sim G} \tr(\Pi_\lambda X) = \tr(\Pi_\lambda X).
    \end{split}
    \end{equation}

    For the reverse direction, suppose the twirl takes the form
    \begin{equation}
    \mathcal{T}_{1,\Phi}(X) = \bigoplus_{\lambda} \frac{\tr(X \Pi_\lambda)}{\tr(\Pi_\lambda)} \I_{V_\lambda},
    \end{equation}
    where we denote each block by $\mathcal{T}_{1,\Phi^{(\lambda)}}(X)$. Again because $\mathcal{T}_{1,\Phi}(X)$ and $\phi_g$ commute, the matrix $\phi_g$ is block diagonal in the subspaces $V_\lambda$ for all $g \in G$. We need to show that each block $\phi^{(\lambda)}$ is irreducible.

    Recall that $\phi^{(\lambda)}$ is irreducible if the only subspaces $W \subseteq V_\lambda$ for which $\phi^{(\lambda)}_G(W) \subseteq W$ are $W = \{0\}$ or $W = V_\lambda$. Indeed, let $W \subseteq V_\lambda$ be a subspace such that for any $\ket{v} \in W$ and $g \in G$, $\phi^{(\lambda)}_g\ket{v} \in W$. Suppose there exists a vector $\ket{x} \in V_\lambda$ that is orthogonal to $W$ and set $X = \op{x}{x}$. Then
    \begin{equation}\label{eq:project_W_1}
    \mathcal{T}_{1,\Phi^{(\lambda)}}(\op{x}{x}) \ket{v} = \E_{g \sim G} \phi^{(\lambda)}_g \ket{x} \ev{x}{(\phi^{(\lambda)}_g)^\dagger}{v} = 0
    \end{equation}
    because all $(\phi^{(\lambda)}_g)^\dagger\ket{v} \in W$. However, from $\mathcal{T}_{1,\Phi^{(\lambda)}}(X) = c_\lambda(X) \I_{V_\lambda}$ we see that also
    \begin{equation}\label{eq:project_W_2}
    \mathcal{T}_{1,\Phi}(\op{x}{x})_\lambda \ket{v} = \frac{\ip{x}{x}}{\dim V_\lambda} \ket{v}.
    \end{equation}
    Supposing $\ket{x} \neq 0$, we see that $\ket{v} = 0$ is the only possible element of $W$ to satisfy Eqs.~\eqref{eq:project_W_1} and \eqref{eq:project_W_2} simultaneously. Hence $W = \{0\}$. Otherwise, $\ket{x} = 0$ is the only element of $V_\lambda$ orthogonal to $W$, implying that there is in fact no nontrivial subspace orthogonal to $W$. Thus $W = V_\lambda$ in this case.
\end{proof}

\subsection{Variance of symmetry operators}\label{subsec:ss-pauli_variance}

In this section, we analyze the variance associated with the symmetry operators,
\begin{align}
    \frac{S_1}{s_1} &= \frac{1}{m} \sum_{i \in [n]} Z_i,\\
    \frac{S_2}{s_2} &= \frac{2}{m^2 - n} \sum_{i < j} Z_i Z_j.
\end{align}
Because our error analysis of symmetry-adjusted classical shadows bootstraps from the variance of unmitigated estimation, we only need to compute quantities related to the noiseless protocol. In this case, the subsystem-symmetrized local Clifford group yields the same channel and variances as the standard local Clifford group because the random permutations have no effect on the twirling on computational basis states.

Specifically, using the two- and three-fold twirls we can express
\begin{equation}
    \mathcal{M}(\rho) = \tr_1\l[ \sum_{b \in \{0,1\}^n} \mathcal{T}_{2,G}\l( \op{b}{b}^{\otimes 2} \r) (\rho \otimes \I) \r]
\end{equation}
and
\begin{align}
    \V_\rho[\hat{o}] &= \tr\l[ \sum_{b \in \{0,1\}^n} \mathcal{T}_{3,G}\l( \op{b}{b}^{\otimes 3} \r) \l( \rho \otimes \mathcal{M}^{-1}(O)^{\otimes 2} \r) \r] \notag\\
    &\quad - \tr(O \rho)^2,
\end{align}
where the $t$-fold twirl by $U : G \to \U(\mathcal{H})$ is defined as $\mathcal{T}_{t,G} \coloneqq \E_{g \sim G} \mathcal{U}_g^{\otimes t}$. These expressions are the same whether we take $G = \SymCl{n}$ or $\Cl(1)^{\otimes n}$, due to the following equivalence:
\begin{equation}
\begin{split}
    \sum_{b \in \{0,1\}^n} \mathcal{T}_{t,\SymCl{n}}\l( \op{b}{b}^{\otimes t} \r) &= \sum_{b \in \{0,1\}^n} \E_{(\pi, C) \sim \SymCl{n}} \l[ \l( C^\dagger S_\pi^\dagger \op{b}{b} S_\pi C \r{)^{\otimes t}} \r]\\
    &= \E_{C \sim \Cl(1)^{\otimes n}} \l[ (C^\dagger)^{\otimes t} \E_{\pi \sim \Sym(n)} \l[ \sum_{b \in \{0,1\}^n} \op{\pi(b)}{\pi(b)}^{\otimes t} \r] C^{\otimes t} \r]\\
    &= \E_{C \sim \Cl(1)^{\otimes n}} \l[ (C^\dagger)^{\otimes t} \sum_{b \in \{0,1\}^n} \op{b}{b}^{\otimes t} C^{\otimes t} \r]\\
    &= \sum_{b \in \{0,1\}^n} \mathcal{T}_{t,\Cl(1)^{\otimes n}}\l( \op{b}{b}^{\otimes t} \r).
\end{split}
\end{equation}
The third equality follows due to the fact that permutations are bijections, hence each $\sum_{b \in \{0,1\}^n} \op{\pi(b)}{\pi(b)}^{\otimes t}$ is just a reordering of the terms in $\sum_{b \in \{0,1\}^n} \op{b}{b}^{\otimes t}$.

As an immediate consequence, we see that the variance of observables under subsystem symmetrization are exactly the same as with standard Pauli shadows. For the rest of this section, we will explicitly compute the variance of $S_k$ using known Haar-averaging formulas over the Clifford group~\cite[Eqs.~(S35) and (S36)]{huang2020predicting}:
\begin{align}
    \E_{U \sim \Cl(1)} U^\dagger \op{x}{x} U \ev{x}{U A U^\dagger}{x} &= \frac{A + \tr(A) \I}{6}\\
    \E_{U \sim \Cl(1)} U^\dagger \op{x}{x} U \ev{x}{U B_0 U^\dagger}{x} \ev{x}{U C_0 U^\dagger}{x} &= \frac{\tr(B_0 C_0) \I + B_0 C_0 + C_0 B_0}{24},
\end{align}
for all unit vectors $\ket{x} \in \C^{2}$ and Hermitian matrices $A, B_0, C_0 \in \C^{2 \times 2}$, with $\tr B_0  = \tr C_0  = 0$. The extension to $\Cl(1)^{\otimes n}$ follows by linearity and statistical independence. We first apply these formulas to $S_1$ to compute its variance. Writing $C = \bigotimes_{i \in [n]} C_i$ and $\ket{b} = \bigotimes_{i \in [n]} \ket{b_i}$, we have
\begin{equation}\label{eq:Es1^2}
\begin{split}
    \E[\hat{s}_1^2] &= \tr\l( \rho \sum_{b \in \{0, 1\}^n} \E_{C \sim \Cl(1)^{\otimes n}} C^\dagger \op{b}{b} C \ev{b}{C \mathcal{M}^{-1}(S_1) C^\dagger}{b}^2 \r)\\
    &= \tr\l( \rho \sum_{b \in \{0, 1\}^n} \E_{C \sim \Cl(1)^{\otimes n}} C^\dagger \op{b}{b} C \times 3^2 \sum_{i, j \in [n]} \ev{b}{C Z_i C^\dagger}{b} \ev{b}{C Z_j C^\dagger}{b} \r)\\
    &= 9 \tr\l( \rho \sum_{i \in [n]} \sum_{b_i \in \{0, 1\}} \E_{C_i \sim \Cl(1)} C_i^\dagger \op{b_i}{b_i} C_i \ev{b_i}{C_i Z_i C_i^\dagger}{b_i}^2 \r)\\
    &\quad + 9 \times 2 \tr\l( \rho \sum_{i < j} \sum_{b_i, b_j \in \{0, 1\}} \E_{C_i, C_j \sim \Cl(1)} C_i^\dagger \op{b_i}{b_i} C_i \otimes C_j^\dagger \op{b_j}{b_j} C_j \ev{b_i}{C_i Z_i C_i^\dagger}{b_i} \ev{b_j}{C_j Z_j C_j^\dagger}{b_j} \r)\\
    &= 9 \tr\l( \rho \sum_{i \in [n]} \frac{\I}{3} \r) + 18 \tr\l( \rho \sum_{i < j} \frac{Z_i}{3} \frac{Z_j}{3} \r)\\
    &= 3n + \tr(S_2 \rho).
\end{split}
\end{equation}
Thus the variance is
\begin{equation}
\begin{split}
    \V_\rho\l[\frac{\hat{s}_1}{s_1}\r] &= \frac{1}{s_1^2} \l( \E[\hat{s}_1^2] - \E[\hat{s}_1]^2 \r)\\
    &= \frac{1}{m^2} \l( 3n + \tr(S_2 \rho) - \tr(S_1 \rho)^2 \r).
\end{split}
\end{equation}
If $\rho$ is the ideal state with symmetries $\tr(S_1 \rho) = m$ and $\tr(S_2 \rho) = m^2 - n$, then $\V_\rho[\hat{s}_1/s_1] = 2n/m^2$. On the other hand, if we make the noisy replacement $\rho \to \widetilde{\rho} = \mathcal{M}^{-1} \widetilde{\mathcal{M}}(\rho)$, then we can obtain a bound
\begin{equation}\label{eq:s1_noisy_var}
\begin{split}
    \V_{\widetilde{\rho}}\l[ \frac{\hat{s}_1}{s_1} \r] &= \frac{1}{m^2} \l( 3n + F_{Z,2}(m^2 - n) - (F_{Z,1} m)^2 \r)\\
    &\leq \frac{2n}{m^2} + 1.
\end{split}
\end{equation}

Next we compute the variance of estimating $S_2$. Analogous to the calculation presented in Eq.~\eqref{eq:Es1^2}, we expand $\ev{b}{C \mathcal{M}^{-1}(S_2) C^\dagger}{b}^2$ and group terms based on the overlapping of indices:
\begin{equation}\label{eq:S_2_squared}
\begin{split}
    \ev{b}{C \mathcal{M}^{-1}(S_2) C^\dagger}{b}^2 &= (3^2 \times 2)^2 \sum_{i < j} \sum_{k < l} \ev{b}{C Z_i Z_j C^\dagger}{b} \ev{b}{C Z_k Z_l C^\dagger}{b}\\
    &= (3^2 \times 2)^2 \l( \sum_{(i = k) < (j = l)} + \sum_{\substack{(i = k) < j, l \\ j \neq l}} + \sum_{k < (i = l) < j} \r.\\
    &\quad \l. + \sum_{i < (j = k) < l} + \sum_{\substack{i, k < (j = l) \\ i \neq k}} + \sum_{\substack{i < j; k < l \\ k \neq i \neq l; k \neq j \neq l}} \r) \ev{b}{C Z_i Z_j C^\dagger}{b} \ev{b}{C Z_k Z_l C^\dagger}{b}.
\end{split}
\end{equation}
We now go through each summation and evaluate the expectations:
\begin{equation}
\begin{split}
    &\quad \sum_{(i = k) < (j = l)} \sum_{b_i, b_j \in \{0, 1\}} \E_{C_i, C_j \sim \Cl(1)} C_i^\dagger \op{b_i}{b_i} C_i \otimes C_j^\dagger \op{b_j}{b_j} C_j \ev{b_i}{C_i Z_i C_i^\dagger}{b_i}^2 \ev{b_j}{C_j Z_j C_j^\dagger}{b_j}^2\\
    &= \sum_{i < j} \frac{\I}{3^2} = \frac{1}{3^2} \binom{n}{2} \I,
\end{split}
\end{equation}
\begin{equation}\label{eq:i=k<j,l}
\begin{split}
    &\quad \sum_{\substack{(i = k) < j, l \\ j \neq l}} \sum_{b_i, b_j, b_l \in \{0, 1\}} \E_{C_i, C_j, C_l \sim \Cl(1)} C_i^\dagger \op{b_i}{b_i} C_i \otimes C_j^\dagger \op{b_j}{b_j} C_j \otimes C_l^\dagger \op{b_l}{b_l} C_l\\
    &\qquad\qquad\qquad\qquad\qquad\qquad\qquad\quad \times \ev{b_i}{C_i Z_i C_i^\dagger}{b_i}^2 \ev{b_j}{C_j Z_j C_j^\dagger}{b_j} \ev{b_l}{C_l Z_l C_l^\dagger}{b_l}\\
    &= 2\sum_{i < j < l} \frac{\I}{3} \frac{Z_j}{3} \frac{Z_l}{3} = \frac{2}{3^3} \sum_{i <
    j < l} Z_j Z_l,
\end{split}
\end{equation}
\begin{equation}
\begin{split}
    &\quad \sum_{k < (i = l) < j} \sum_{b_i, b_j, b_k \in \{0, 1\}} \E_{C_i, C_j, C_k \sim \Cl(1)} C_i^\dagger \op{b_i}{b_i} C_i \otimes C_j^\dagger \op{b_j}{b_j} C_j \otimes C_k^\dagger \op{b_k}{b_k} C_k\\
    &\qquad\qquad\qquad\qquad\qquad\qquad\qquad\quad \times \ev{b_i}{C_i Z_i C_i^\dagger}{b_i}^2 \ev{b_j}{C_j Z_j C_j^\dagger}{b_j} \ev{b_k}{C_k Z_k C_k^\dagger}{b_k}\\
    &= \frac{1}{3^3} \sum_{k < i < j} Z_k Z_j,
\end{split}
\end{equation}
\begin{equation}
\begin{split}
    &\quad \sum_{i < (j = k) < l} \sum_{b_i, b_j, b_l \in \{0, 1\}} \E_{C_i, C_j, C_l \sim \Cl(1)} C_i^\dagger \op{b_i}{b_i} C_i \otimes C_j^\dagger \op{b_j}{b_j} C_j \otimes C_l^\dagger \op{b_l}{b_l} C_l\\
    &\qquad\qquad\qquad\qquad\qquad\qquad\qquad\quad \times \ev{b_i}{C_i Z_i C_i^\dagger}{b_i} \ev{b_j}{C_j Z_j C_j^\dagger}{b_j}^2 \ev{b_l}{C_l Z_l C_l^\dagger}{b_l}\\
    &= \frac{1}{3^3} \sum_{i < j < l} Z_i Z_l,
\end{split}
\end{equation}
\begin{equation}\label{eq:i,k<j=l}
\begin{split}
    &\quad \sum_{\substack{i, k < (j = l) \\ i \neq k}} \sum_{b_i, b_j, b_k \in \{0, 1\}} \E_{C_i, C_j, C_k \sim \Cl(1)} C_i^\dagger \op{b_i}{b_i} C_i \otimes C_j^\dagger \op{b_j}{b_j} C_j \otimes C_k^\dagger \op{b_k}{b_k} C_k\\
    &\qquad\qquad\qquad\qquad\qquad\qquad\qquad\quad \times \ev{b_i}{C_i Z_i C_i^\dagger}{b_i} \ev{b_j}{C_j Z_j C_j^\dagger}{b_j}^2 \ev{b_k}{C_k Z_k C_k^\dagger}{b_k}\\
    &= \frac{2}{3^3} \sum_{i < k < j} Z_i Z_k,
\end{split}
\end{equation}
\begin{equation}
\begin{split}
    \sum_{\substack{i < j; k < l \\ k \neq i \neq l; k \neq j \neq l}} \bigotimes_{q \in \{i, j, k, l\}} \sum_{b_q \in \{0, 1\}} \E_{C_q \sim \Cl(1)} C_q^\dagger \op{b_q}{b_q} C_q \ev{b_q}{C_q Z_q C_q^\dagger}{b_q} &= \frac{1}{3^4} \sum_{\substack{i < j; k < l \\ k \neq i \neq l; k \neq j \neq l}} Z_i Z_j Z_k Z_l\\
    &= \frac{6}{3^4} \sum_{i < j < k < l} Z_i Z_j Z_k Z_l.
\end{split}
\end{equation}
Eqs.~\eqref{eq:i=k<j,l} to \eqref{eq:i,k<j=l} can be combined by relabeling the indices and recognizing that the resulting three-index summation has $3\binom{n}{3} = (n - 2)\binom{n}{2}$ terms of the form $Z_p Z_q$ (symmetric across the index pairs $p, q \in \{i, j, k\}$ with $p \neq q$). Hence there are only $\binom{n}{2}$ unique terms, all of which are repeated $n - 2$ times:
\begin{equation}
\begin{split}
    \frac{2}{3^3} \sum_{i < j < k} ( Z_i Z_j + Z_i Z_k + Z_j Z_k ) &= \frac{2}{3^3} (n - 2) \sum_{i < j} Z_i Z_j.
\end{split}
\end{equation}
Combining these expressions, we obtain
\begin{equation}\label{eq:S2_2nd_moment}
    \sum_{b \in \{0, 1\}^n} \E_{C \sim \Cl(1)^{\otimes n}} C^\dagger \op{b}{b} C \ev{b}{C \mathcal{M}^{-1}(S_2) C^\dagger}{b}^2 = 36 \binom{n}{2} \I + 24 (n - 2) \sum_{i < j} Z_i Z_j + 24 \sum_{i < j < k < l} Z_i Z_j Z_k Z_l.
\end{equation}

Due to the presence of the four-body term, we will need the conserved quantity associated with $M^4$:
\begin{equation}\label{eq:M4}
\begin{split}
    M^4 &= (M^2)^2 = \l( n \I + 2 \sum_{i < j} Z_i Z_j \r{)^2}\\
    &= n^2 \I + 4n \sum_{i < j} Z_i Z_j + 4 \sum_{i < j} \sum_{k < l} Z_i Z_j Z_k Z_l.
\end{split}
\end{equation}
The four-index sum here can be grouped as we did in Eq.~\eqref{eq:S_2_squared}, and the conditions can be simplified as before. Along with the fact that $Z_i^2 = \I$, a straightforward calculation reveals
\begin{equation}
    \sum_{i < j} \sum_{k < l} Z_i Z_j Z_k Z_l = \binom{n}{2} \I + 2 (n - 2) \sum_{i < j} Z_i Z_j + 6 \sum_{i < j < k < l} Z_i Z_j Z_k Z_l.
\end{equation}
Plugging this into Eq.~\eqref{eq:M4}, combined with $\tr(M^4 \rho) = m^4$ and $\tr(S_2 \rho) = m^2 - n$, we arrive at
\begin{equation}
    \tr(S_4 \rho) = 24 \tr\l( \rho \sum_{i < j < k < l} Z_i Z_j Z_k Z_l \r) = m^4 - n^2 - 4\binom{n}{2} - (6n - 8)(m^2 - n),
\end{equation}
where $S_4 \coloneqq \Pi_4(M^4) = 4! \sum_{i < j < k < l} Z_i Z_j Z_k Z_l$. Finally, applying this result to Eqs.~\eqref{eq:S2_2nd_moment}, we can compute the variance with respect to an ideal state lying in the symmetry sector:
\begin{equation}
\begin{split}
    \V_\rho\l[\frac{\hat{s}_2}{s_2}\r] &= \frac{1}{(m^2 - n)^2}\l[ 36\binom{n}{2} + 12(n - 2) \tr(S_2 \rho) + \tr(S_4 \rho) - \tr(S_2 \rho)^2 \r]\\
    &= \frac{1}{(m^2 - n)^2} \l[ 32\binom{n}{2} + (6n - 16)(m^2 - n) + m^4 - n^2 \r] - 1.
\end{split}
\end{equation}
Again making the replacement $\rho \to \widetilde{\rho}$, we instead have the following bound:
\begin{equation}\label{eq:s2_noisy_var}
\begin{split}
    \V_{\widetilde{\rho}}\l[ \frac{\hat{s}_2}{s_2} \r] &= \frac{1}{(m^2 - n)^2}\l[ 36\binom{n}{2} + 12(n - 2) F_{Z,2} \tr(S_2 \rho) + F_{Z,4} \tr(S_4 \rho) \r] - F_{Z,2}^2\\
    &\leq \frac{1}{(m^2 - n)^2} \l[ 18n(n - 1) + 12(n - 2)m^2 + m^4 +6n^2 +8m^2 \r]\\
    &= \frac{1}{(m^2 - n)^2} \l[ 6n(4n - 3) + 12nm^2 + m^2(m^2 - 16) \r].
\end{split}
\end{equation}

If $m = \Theta(1)$, which is the case in our numerical experiments of antiferromagnetic spin systems, then $\V_{\widetilde{\rho}}[ \hat{s}_2/s_2 ] = \O(1)$. Recall from Eq.~\eqref{eq:s1_noisy_var} that $\V_{\widetilde{\rho}}[ \hat{s}_1/s_1 ] = \O(n/m^2)$. Therefore the variance overhead of estimating these symmetry operators is, asymptotically,
\begin{equation}
    \max\l\{ \V_{\widetilde{\rho}}\l[ \frac{\hat{s}_1}{s_1}\r], \V_{\widetilde{\rho}}\l[ \frac{\hat{s}_2}{s_2} \r] \r\} = \O(n)
\end{equation}
when $\tr(M \rho) = m = \Theta(1)$.

Systems wherein $m$ depends on $n$ will require a case-by-case analysis, which we leave to the reader. As a pathological example, consider two different functions which are both $\Theta(\sqrt{n})$:~if $m = \alpha\sqrt{n}$ for some constant $\alpha \neq 1$, then $\V_{\widetilde{\rho}}[ \hat{s}_2/s_2 ] = \O(1)$. However, if instead $m = \sqrt{n + c}$ for some constant $c \neq 0$, then $\V_{\widetilde{\rho}}[ \hat{s}_2/s_2 ] = \O(n^2)$. In both cases, $\V_{\widetilde{\rho}}[ \hat{s}_1/s_1 ] = \O(1)$. Thus the specific form of $m = f(n)$ can drastically affect the asymptotic bounds here.

\section{Spin-adapted fermionic shadows}\label{sec:spin_adaptation}

It is well known that number-conserving fermion basis rotations which preserve spin symmetries can be block diagonalized according to the spin sectors, leading to savings in both classical and quantum resources. Here we show how to leverage spin symmetries for the broader class of fermionic Gaussian transformations, and in particular we construct a spin-adapted matchgate shadows protocol. As such, this scheme will be informationally complete only over spin-conserving observables.

Let $n_\uparrow$ and $n_\downarrow$ be the number of spin-up and spin-down fermionic modes, respectively. The total number of modes is $n = n_\uparrow + n_\downarrow$, and we order the labels such that all spin-up modes come first. Gaussian transformations which do not mix between different spin types are block diagonal,
\begin{equation}\label{eq:spin_preserving_transformation}
    Q = \begin{pmatrix}
    Q_\uparrow & 0\\
    0 & Q_\downarrow
    \end{pmatrix} \in \Orth(2n),
\end{equation}
where $Q_\sigma \in \Orth(2n_\sigma) \eqqcolon G_\sigma$ for each $\sigma \in \{\uparrow, \downarrow\}$. Let $G_{\mathrm{spin}} \coloneqq G_\uparrow \times G_\downarrow$ be the spin-adapted group, i.e., the set of all elements of the form of Eq.~\eqref{eq:spin_preserving_transformation}. In order to calculate properties of this ensemble for classical shadows, we shall use the fact that $U_Q$ is nearly equivalent to the tensor product $U_{Q_\uparrow} \otimes U_{Q_\downarrow}$, up to a factor that depends on the determinant of $Q_\uparrow$. More precisely, we have the following.

\begin{lemma}\label{lem:spin_tensor_product}
    For all block-diagonal $Q$ of the form of Eq.~\eqref{eq:spin_preserving_transformation}, the unitary can be written as
    \begin{equation}\label{eq:spin_preserving_tensor_product}
    U_Q = U_{Q_\uparrow} \otimes (P_\downarrow^{s(Q_\uparrow)} U_{Q_\downarrow})
    \end{equation}
    where
    \begin{equation}
    s(Q_\uparrow) = \begin{cases}
    1 & \text{if } \det(Q_\uparrow) = -1,\\
    0 & \text{else},
    \end{cases}
    \end{equation}
    and $P_\sigma = (-i)^{n_\sigma} \prod_{\mu=0}^{2n_\sigma - 1} c_{\mu,\sigma}$ is the parity operator on the $\sigma$-spin sector.
\end{lemma}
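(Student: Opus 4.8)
The plan is to exploit the fact that the map $Q \mapsto U_Q$ is a homomorphism (up to the usual $\pm 1$ ambiguity of the spin representation) together with the known behavior of Gaussian unitaries under the Jordan--Wigner string that couples the two spin sectors. First I would reduce to generators: every block-diagonal $Q = Q_\uparrow \oplus Q_\downarrow$ factors as $(Q_\uparrow \oplus \I_{2n_\downarrow})(\I_{2n_\uparrow} \oplus Q_\downarrow)$, so it suffices to understand $U_{Q_\uparrow \oplus \I}$ and $U_{\I \oplus Q_\downarrow}$ separately and then compose. For the first factor, $Q_\uparrow \oplus \I$ acts trivially on all spin-down Majoranas $c_{\mu,\downarrow}$ and as $Q_\uparrow$ on the spin-up ones; under Jordan--Wigner the spin-up modes come first, so their Majorana operators involve no Pauli string reaching into the spin-down register. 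Hence $U_{Q_\uparrow \oplus \I}$ can be taken to act as $U_{Q_\uparrow} \otimes \I$ on $\mathcal{H}_\uparrow \otimes \mathcal{H}_\downarrow$. The subtlety is entirely in the second factor: the spin-down Majoranas $c_{\mu,\downarrow}$ carry a Jordan--Wigner string $\prod_{q} Z_q$ running over \emph{all} spin-up qubits, i.e.\ $c_{\mu,\downarrow} = P_\uparrow^{\,?}\otimes (\text{local on }\downarrow)$ type expressions, so a Gaussian rotation among the spin-down modes is not simply $\I \otimes U_{Q_\downarrow}$.

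The key step is to track how that global string interacts with the determinant of $Q_\uparrow$. Concretely, I would write $c_{\mu,\downarrow} = P_\uparrow \otimes \tilde c_{\mu}$ where $\tilde c_\mu$ are the ``bare'' Majoranas acting only on $\mathcal{H}_\downarrow$ and $P_\uparrow$ is the spin-up parity operator (as defined in the statement). Since $P_\uparrow$ commutes or anticommutes with $U_{Q_\uparrow}$ according to $\det Q_\uparrow$ — because $P_\uparrow = (-\i)^{n_\uparrow}\prod_\mu c_{\mu,\uparrow} \propto \Gamma_{[2n_\uparrow]}$ transforms under $U_{Q_\uparrow}$ by the degree-$2n_\uparrow$ irrep, which acts by the scalar $\det Q_\uparrow$ — and since conjugating the $\tilde c_\mu$ by $U_{Q_\downarrow}$ reproduces $Q_\downarrow$, one checks that the operator $U_{Q_\uparrow}\otimes(P_\downarrow^{s(Q_\uparrow)} U_{Q_\downarrow})$ implements exactly the adjoint action prescribed by $Q_\uparrow \oplus Q_\downarrow$ on every generator $\gamma_\mu$. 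The factor $P_\downarrow^{s(Q_\uparrow)}$ is precisely what is needed to absorb the sign picked up when $\det Q_\uparrow = -1$: $P_\downarrow$ anticommutes with every $c_{\mu,\downarrow}$, flipping the sign that $U_{Q_\uparrow}$ would otherwise impose through the string $P_\uparrow$, while commuting with the spin-up operators so it does not disturb the first sector.

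I would then finish by invoking the standard fact that a Gaussian unitary is uniquely determined (up to global phase) by its adjoint action on the $2n$ generators $\gamma_\mu$; having verified that both sides of Eq.~\eqref{eq:spin_preserving_tensor_product} induce the same orthogonal transformation $Q$, they must agree up to phase, and the phase can be fixed by a normalization convention (e.g.\ matching the $+1$ eigenspace of the vacuum, or simply agreeing to absorb it). The main obstacle is the careful bookkeeping of the Jordan--Wigner string in the second sector and the sign analysis: one must correctly show that $U_{Q_\uparrow}$ acting on the $\mathcal{H}_\uparrow$ factor of $c_{\mu,\downarrow}=P_\uparrow\otimes\tilde c_\mu$ contributes the scalar $\det Q_\uparrow$ (via the top-degree Majorana irrep), and that $P_\downarrow^{s(Q_\uparrow)}$ exactly cancels this when it is $-1$. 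Everything else — the reduction to generators, the homomorphism property, the uniqueness argument — is routine.
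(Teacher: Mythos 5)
Your core idea matches the paper's: verify the adjoint action of the proposed tensor-product unitary on all Majorana generators, using the two commutation relations $P_\sigma c_{\mu,\sigma} = -c_{\mu,\sigma} P_\sigma$ and $U_{Q_\sigma} P_\sigma U_{Q_\sigma}^\dagger = \det(Q_\sigma) P_\sigma$ (the latter being exactly your ``top-degree Majorana irrep'' observation). The paper does this by a direct two-case check on $\gamma_{\mu,\downarrow}$ rather than splitting $Q$ into a product of two block factors, but the arithmetic is the same.

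However, your reduction step contains a genuine error. You claim that $U_{Q_\uparrow \oplus \I}$ ``can be taken to act as $U_{Q_\uparrow} \otimes \I$'' because the spin-up Majoranas have no string into the spin-down register, and then you say ``the subtlety is entirely in the second factor.'' This is backwards. The string runs the other way: it is $\gamma_{\mu,\downarrow} = P_\uparrow \otimes c_{\mu,\downarrow}$ that carries a string reaching \emph{into the spin-up register}, so $U_{Q_\uparrow} \otimes \I$ does \emph{not} leave the spin-down Majoranas invariant when $\det Q_\uparrow = -1$; it multiplies them by $\det Q_\uparrow$. The correct identity is $U_{Q_\uparrow \oplus \I} = U_{Q_\uparrow} \otimes P_\downarrow^{s(Q_\uparrow)}$, and it is the second factor that is simply $U_{\I \oplus Q_\downarrow} = \I \otimes U_{Q_\downarrow}$ with no subtlety. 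Your subsequent sign analysis is correct, so you do eventually land on the right formula, but as written the decomposition misattributes where the $P_\downarrow^{s(Q_\uparrow)}$ factor comes from; a careful reader could get stuck trying to reconcile ``first factor is $U_{Q_\uparrow}\otimes\I$'' with the final answer. The paper avoids this confusion by not factoring $Q$ at all and simply conjugating each $\gamma_{\mu,\sigma}$ by the full $U_Q$ directly.
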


For technical reasons, we have introduced two new sets of Majorana operators $\{ c_{\mu,\sigma} : \mu \in [2n_\sigma] \}$ on each $n_\sigma$-mode Hilbert space, in order to talk about the different spin sectors in terms of the standard tensor product. Because the tensor product does not respect the antisymmetry of fermions, these new Majorana operators are related to the usual Majorana operators (acting on the full $n$-mode Hilbert space) via
\begin{align}
    \gamma_{\mu,\uparrow} &= c_{\mu,\uparrow} \otimes \I,\\
    \gamma_{\mu,\downarrow} &= P_\uparrow \otimes c_{\mu,\downarrow}.
\end{align}
Lemma~\ref{lem:spin_tensor_product} therefore addresses this technicality of maintaining the anticommutation relations when expressing $U_Q$ as a tensor product.

\begin{proof}[Proof (of Lemma~\ref{lem:spin_tensor_product})]
    It is clear that conjugation by $U_{Q_\sigma}$ transforms each $c_{\mu,\sigma}$ as desired. What we need to ensure is that the Majorana operators $\gamma_{\mu,\sigma}$ on the full Hilbert space transform properly. Indeed, Eq.~\eqref{eq:spin_preserving_tensor_product} performs the desired transformation;~for the spin-up sector, we simply have
    \begin{equation}
    U_Q \gamma_{\mu,\uparrow} U_Q^\dagger = U_{Q_\uparrow} c_{\mu,\uparrow} U_{Q_\uparrow}^\dagger \otimes \I.
    \end{equation}
    
    For the spin-down sector, we will make use of following commutation relations:
    \begin{align}
    P_\sigma U_{Q_\sigma} &= \det(Q_\sigma) U_{Q_\sigma} P_\sigma\\
    P_\sigma c_{\mu,\sigma} &= -c_{\mu,\sigma} P_\sigma.
    \end{align}
    Consider the case $\det(Q_\uparrow) = 1$. Then $s(Q_\uparrow) = 0$, and so $U_Q = U_{Q_\uparrow} \otimes U_{Q_\downarrow}$:
    \begin{equation}
    \begin{split}
    U_Q \gamma_{\mu,\downarrow} U_Q^\dagger &= U_{Q_\uparrow} P_{\uparrow} U_{Q_\uparrow}^\dagger \otimes U_{Q_\downarrow} c_{\mu,\downarrow} U_{Q_\downarrow}^\dagger\\
    &= P_\uparrow \otimes U_{Q_\downarrow} c_{\mu,\downarrow} U_{Q_\downarrow}^\dagger,
    \end{split}
    \end{equation}
    as desired. If instead $\det(Q_\uparrow) = -1$, then $U_Q = U_{Q_\uparrow} \otimes (P_\downarrow U_{Q_\downarrow})$ and so
    \begin{align}
    U_Q \gamma_{\mu,\downarrow} U_Q^\dagger &= U_{Q_\uparrow} P_{\uparrow} U_{Q_\uparrow}^\dagger \otimes P_\downarrow U_{Q_\downarrow} c_{\mu,\downarrow} U_{Q_\downarrow}^\dagger P_\downarrow^\dagger \notag\\
    &= (-P_\uparrow) \otimes (-\det(Q_\downarrow)^2 P_\downarrow P_\downarrow^\dagger U_{Q_\downarrow} c_{\mu,\downarrow} U_{Q_\downarrow}^\dagger)\notag\\
    &= P_\uparrow \otimes U_{Q_\downarrow} c_{\mu,\downarrow} U_{Q_\downarrow}^\dagger,
    \end{align}
    where we have used the fact that $P_\downarrow P_\downarrow^\dagger = P_\downarrow^2 = \I$.
\end{proof}

In the context of classical shadows, the appearance of $P_\downarrow$ is inconsequential because it merely acts as a phasing operator which appears directly before measurement. To see this, first observe that $\mathcal{M}_Z = \mathcal{M}_{Z,\uparrow} \otimes \mathcal{M}_{Z, \downarrow}$, where $\mathcal{M}_{Z, \sigma} = \sum_{b \in \{0, 1\}^{n_\sigma}} \vop{b}{b}$. Using the fact that $\mathcal{P}_\downarrow \kket{b} = P_\downarrow \op{b}{b} P_\downarrow^\dagger = \op{b}{b} = \kket{b}$, we have that that shadow channel of the spin-adapted ensemble is
\begin{align}
    \mathcal{M}_{G_\mathrm{spin}} &= \E_{Q_\uparrow, Q_\downarrow}\l[ \mathcal{U}_{Q_\uparrow}^\dagger \otimes (\mathcal{U}_{Q_\downarrow}^\dagger \mathcal{P}_\downarrow^{s(Q_\uparrow)}) (\mathcal{M}_{Z,\uparrow} \otimes \mathcal{M}_{Z, \downarrow}) \mathcal{U}_{Q_\uparrow} \otimes (\mathcal{P}_\downarrow^{s(Q_\uparrow)} \mathcal{U}_{Q_\downarrow}) \r]\\ 
    &= \E_{Q_\uparrow, Q_\downarrow}\l[ (\mathcal{U}_{Q_\uparrow}^\dagger \mathcal{M}_{Z,\uparrow} \mathcal{U}_{Q_\uparrow}) \otimes (\mathcal{U}_{Q_\downarrow}^\dagger \mathcal{M}_{Z,\downarrow} \mathcal{U}_{Q_\downarrow}) \r] \notag\\
    &= \mathcal{M}_\uparrow \otimes \mathcal{M}_\downarrow.
\end{align}
Therefore the spin-adapted matchgate shadows behaves as two independent instances on each spin sector. The estimators and variance bounds also follow straightforwardly;~first, the shadow channel is
\begin{equation}
\begin{split}
    \mathcal{M}_{G_\mathrm{spin}} &= \mathcal{M}_\uparrow \otimes \mathcal{M}_\downarrow\\
    &= \sum_{j=0}^{n_\uparrow} \sum_{\ell=0}^{n_\downarrow} f_{n_\uparrow,j} f_{n_\downarrow,\ell} \Pi_j \otimes \Pi_\ell,
\end{split}
\end{equation}
where for ease of notation in this section, we define
\begin{equation}
    f_{n_\sigma, j} \coloneqq \l. \binom{n_\sigma}{j} \middle/ \binom{2n_\sigma}{2j} \r..
\end{equation}

For the variance, we use the property that the shadow norm of a tensor-product distribution is the product of shadow norms on each subsystem. This can be seen from the fact that shadow norm of an operator $A$ is the spectral norm of a related operator $A^G \coloneqq \E_{U \sim G} \sum_{b \in \{0, 1\}^n} U^\dagger \op{b}{b} U \ev{b}{U \mathcal{M}_G^{-1}(A) U^\dagger}{b}^2$:
\begin{equation}
\begin{split}
    \sns{A}{G}^2 &= \max_{\text{states } \rho} \tr(\rho A^G)\\
    &= \| A^G \|_\infty,
\end{split}
\end{equation}
which holds because $A^G$ is positive semidefinite. For clarity, in this section we use the notation $\sns{\cdot}{G}$ for the shadow norm associated with the group $G$. Thus for any shadow channel formed as a tensor product $\mathcal{M}_{G_1 \oplus G_2} = \mathcal{M}_{G_1} \otimes \mathcal{M}_{G_2}$, we have
\begin{equation}
\begin{split}
    \sns{A_1 \otimes A_2}{G_1 \oplus G_2}^2 &= \max_{\text{states } \rho} \tr\l( \rho \l( A_1^{G_1} \otimes A_2^{G_2} \r) \r)\\
    &= \| A_1^{G_1} \otimes A_2^{G_2} \|_\infty\\
    &= \| A_1^{G_1} \|_\infty \| A_2^{G_2} \|_\infty\\
    &= \sns{A_1}{G_1}^2 \sns{A_2}{G_2}^2.
\end{split}
\end{equation}
(This argument generalizes to multiple tensor products.) Within the context of our spin-adapted ensemble, this implies that any spin-respecting Majorana operator
\begin{equation}
\begin{split}
    \Gamma_{\bm{p}, \bm{q}} &= (-i)^{j + \ell} \gamma_{p_1,\uparrow} \cdots \gamma_{p_{2j},\uparrow} \gamma_{q_1,\downarrow} \cdots \gamma_{q_{2\ell},\downarrow}\\
    &= (-i)^j c_{p_1,\uparrow} \cdots c_{p_{2j},\uparrow} \otimes (-i)^{\ell} c_{q_1,\downarrow} \cdots c_{q_{2\ell},\downarrow},
\end{split}
\end{equation}
has a squared shadow norm of
\begin{equation}
    \sns{\Gamma_{\bm{p}, \bm{q}}}{G_\mathrm{spin}}^2 = f_{n_\uparrow,j}^{-1} f_{n_\downarrow,\ell}^{-1}.
\end{equation}
Thus, for Majorana operators of constant degree $2(j + \ell) \leq 2k$, the variance scales as $\O(n_\uparrow^j n_\downarrow^\ell) = \O(n^k)$, just as in the unadapted setting. Note that spin-respecting here means that the operator factorizes into an even-degree Majorana operator on each spin sector.

The advantage of this ensemble is that the required circuit depth and gate count are roughly halved, since we only need to implement two independent matchgate circuits on $n_\sigma$ qubits each. Furthermore, one can also check that the shadow norm constant factors in the spin-adapted setting are also slightly smaller (for example, for $k = 2$ and $n_\sigma = n/2$, the ratio of spin-adapted to unadapted shadow norms is asymptotically $\lim_{n\to\infty} f_{n/2,1}^{-2}/f_{n,2}^{-1} = 3/4$).

\section{Improved compilation of fermionic Gaussian unitaries}\label{sec:FGU_circuit_design}

In this section we describe a new scheme for compiling the matchgate circuit $U_Q$ for arbitrary $Q \in \Orth(2n)$, under the Jordan--Wigner mapping. This approach improves upon the circuit depth of prior art~\cite{jiang2018quantum} by optimizing the parallelization of nearest-neighbor single- and two-qubit gates. We accomplish this by modifying previously established ideas to better respect the mapping of Majorana modes to qubits. Our improved design is implemented in code at our open-source repository ({\small\url{https://github.com/zhao-andrew/symmetry-adjusted-classical-shadows}}).

\subsection{A previous circuit design}\label{subsec:prior_circuit_design}

First we will review a prior circuit design to encode the action of $U_Q$ into a sequence of single- and two-qubit gates, from which it will become clear where there is room for improved parallelization. While the precise scheme that we describe here has not previously appeared in the literature, the high-level ideas follow from a combination of already developed results~\cite{reck1994experimental,wecker2015solving,kivlichan2018quantum,jiang2018quantum,clements2016optimal,oszmaniec2022fermion}.

Recall that our convention for the Jordan--Wigner mapping is
\begin{align}
    \gamma_{2p} &= Z_0 \cdots Z_{p-1} X_p,\\
    \gamma_{2p + 1} &= Z_0 \cdots Z_{p-1} Y_p
\end{align}
for $p \in [n]$, and our convention for the Gaussian transformation is
\begin{equation}\label{eq:gaussian_transformation_def}
    U_Q \gamma_\mu U_Q^\dagger = \sum_{\nu \in [2n]} Q_{\nu\mu} \gamma_\nu.
\end{equation}
It is straightforward to check that $\mathcal{U} : \Orth(2n) \to \U(\mathcal{L}(\mathcal{H}))$ is a group homomorphism:~$\mathcal{U}_Q \mathcal{U}_{Q'} = \mathcal{U}_{QQ'}$ for any $Q, Q' \in \Orth(2n)$. From this property, a circuit for arbitrary $U_Q$ can be constructed by a QR decomposition of $Q$. Such a decomposition yields a sequence of nearest-neighbor Givens rotations, which we then map to single- and adjacent two-qubit gates.

One possible QR decomposition is
\begin{equation}
    Q = G_1 \cdots G_L D,
\end{equation}
where each $G_j$ is a Givens rotation among adjacent rows and columns, and $D$ is the upper-right triangular matrix from the QR decomposition. Because $Q$ is an orthogonal matrix, $D$ is guaranteed to be a diagonal matrix with $\pm 1$ entries along the diagonal. This is equivalent to the Reck \emph{et al}.~\cite{reck1994experimental} design, and the number of Givens rotations is $L = \O(n^2)$ in depth $\O(n)$. By the homomorphism property of $\mathcal{U}$, this matrix decomposition yields a sequence of circuit elements that implements the desired unitary:
\begin{equation}
    \mathcal{U}_Q = \mathcal{U}_{G_1 \cdots G_L D} = \mathcal{U}_{G_1} \cdots \mathcal{U}_{G_L} \mathcal{U}_D.
\end{equation}
Alternatively, the Clements \emph{et al}.~\cite{clements2016optimal} design computes a decomposition of the form\footnote{The use of the Clements \emph{et al}.~\cite{clements2016optimal} design was first pointed out in Ref.~\cite{huggins2021efficient} by Dominic Berry, in the context of number-preserving matchgate circuits.}
\begin{equation}
    Q = G_{R + 1} \cdots G_{R + L} D G_{R} \cdots G_{1}.
\end{equation}
The total number of Givens rotations here is the same, $R + L = \O(n^2)$. However, by utilizing rotations that act from both left and right, it optimizes parallelization to reduce the depth by a constant factor (roughly $1/2$).

Refs.~\cite{wecker2015solving,kivlichan2018quantum,jiang2018quantum} showed how to convert these Givens rotations into number-preserving quantum gates;~here we seek to generalize to fermionic Gaussian unitaries which do not necessarily conserve particle number. While Ref.~\cite{jiang2018quantum} also considered this scenario, they maintained the representation of Givens rotations as number-preserving gates. Their circuit design breaks particle-number symmetry by interspersing particle--hole transformations throughout the decomposition.

Instead, we will use a representation that inherently features non-number-preserving rotations. Suppose that the Givens rotation $G_j$ acts nontrivially on the axes $(\mu, \mu + 1)$ as
\begin{equation}
    G_j = \begin{pmatrix}
    1 & \cdots & 0 & 0 & \cdots & 0\\
    \vdots & \ddots & \vdots & \vdots &  & \vdots\\
    0 & \cdots & \cos\theta_j & -\sin\theta_j & \cdots & 0\\
    0 & \cdots & \sin\theta_j & \cos\theta_j & \cdots & 0\\
    \vdots &  & \vdots & \vdots & \ddots & \vdots\\
    0 & \cdots & 0 & 0 & \cdots & 1
    \end{pmatrix} \in \SO(2n).
\end{equation}
The quantum gate which achieves this transformation is a single- or two-qubit Pauli rotation, given by
\begin{equation}\label{eq:givens_rotation_gates}
\begin{split}
    U_{G_j} &= e^{-(\theta_j/2) \gamma_{\mu} \gamma_{\mu + 1}}\\
    &= \begin{cases}
    e^{-\i(\theta_j/2) Z_p} & \text{if $\mu = 2p$ is even},\\
    e^{-\i(\theta_j/2) X_p X_{p + 1}} & \text{if $\mu = 2p + 1$ is odd}.
    \end{cases}
\end{split}
\end{equation}
Indeed, one may check that $\mathcal{U}_{G_j}(\gamma_p) = \sum_{q\in[2n]} [G_j]_{qp} \gamma_q$, as desired.

To implement the diagonal matrix $D$ of signs, we require a different scheme. In particular, we can construct $U_D$ as a single layer of Pauli gates. Consider the $2 \times 2$ block along the diagonal
\begin{equation}
	D^{(p)} = \begin{pmatrix}
		D_{2p} & 0 \\ 0 & D_{2p+1}
	\end{pmatrix},
\end{equation}
which describes the transformation
\begin{align}
    \gamma_{2p} &\mapsto D_{2p} \gamma_{2p},\\
    \gamma_{2p+1} &\mapsto D_{2p+1} \gamma_{2p+1}.
\end{align}
If $ D_{2p} = D_{2p+1} = 1 $, then clearly no operations are required. If instead $ D_{2p} = D_{2p+1} = -1 $, then conjugation by $ Z_p $ applies the desired signs on $\gamma_{2p}$ and $\gamma_{2p+1}$ while leaving all other Majorana operators invariant.

The remaining cases, $ D_{2p} = -D_{2p+1} $, can be handled as follows. First, suppose $ D_{2p} = 1 $. We wish to find the gates which perform the transformation
\begin{align}
	\gamma_{2p} &\mapsto \gamma_{2p},\\
	\gamma_{2p+1} &\mapsto -\gamma_{2p+1},
\end{align}
while leaving all other Majorana operators invariant. We can almost accomplish this with $ X_p $, since it will map $ X_p $ to itself and $ Y_p $ to $ -Y_p $. It also commutes with all Majorana operators $ \gamma_{2q}, \gamma_{2q+1} $ for $ q < p $. However, for $ q > p $ this will accrue unwanted signs:
\begin{align}
    X_p \gamma_{2q} X_p &= X_p (Z_0 \cdots Z_p \cdots Z_{q-1} X_q) X_p \notag\\
    &= -Z_0 \cdots Z_p \cdots Z_{q-1} X_q \notag\\
    &= -\gamma_{2q},\\
	X_p \gamma_{2q+1} X_p &= X_p (Z_0 \cdots Z_p \cdots Z_{q-1} Y_q) X_p \notag\\
    &= -Z_0 \cdots Z_p \cdots Z_{q-1} Y_q \notag\\
    &= -\gamma_{2q+1}.
\end{align}
To correct these signs, we introduce a Pauli-$ Z $ string running in the opposite direction of the Jordan--Wigner convention. That is, define
\begin{equation}
	P_{p}^{(X)} \coloneqq X_p Z_{p+1} \cdots Z_{n-1}.
\end{equation}
This unitary has the correct action on $ \gamma_{2p},\gamma_{2p+1} $ and continues to commute with the Majorana operators $ \gamma_{2q}, \gamma_{2q+1} $ with $q < p$. For $q > p$, however, we now have
\begin{align}
	P_{p}^{(X)} \gamma_{2q} P_{p}^{(X)} &= (X_p Z_{p+1} \cdots Z_{n-1}) (Z_0 \cdots Z_p \cdots Z_{q-1} X_q) (X_p Z_{p+1} \cdots Z_{n-1}) \notag\\
	&= Z_0 \cdots (-Z_p) \cdots Z_{q-1} (-X_q) \notag\\
    &= \gamma_{2q},\\
    P_{p}^{(X)} \gamma_{2q+1} P_{p}^{(X)} &= (X_p Z_{p+1} \cdots Z_{n-1}) (Z_0 \cdots Z_p \cdots Z_{q-1} Y_q) (X_p Z_{p+1} \cdots Z_{n-1}) \notag\\
    &= Z_0 \cdots (-Z_p) \cdots Z_{q-1} (-Y_q) \notag\\
    &= \gamma_{2q+1}.
\end{align}
Thus $ P_p^{(X)} $ implements the desired transformation by $ D^{(p)} = \mathrm{diag}(1,-1) $. For $ D^{(p)} = \mathrm{diag}(-1,1) $, we simply replace $ P_p^{(X)} $ by an analogously defined $ P_p^{(Y)} $. This causes the sign of $ \gamma_{2p} $, rather than $ \gamma_{2p+1} $, to flip, while retaining all other properties.

Altogether, we determine these transformations for all $ 2 \times 2 $ diagonal blocks of $D$, resulting in $n$ Pauli strings of the form
\begin{equation}
    W_p = \begin{cases}
    \I & \text{if } D^{(p)} = \mathrm{diag}(1, 1),\\
     Z_p & \text{if } D^{(p)} = \mathrm{diag}(-1, -1),\\
     P_p^{(X)} & \text{if } D^{(p)} = \mathrm{diag}(1, -1),\\
     P_p^{(Y)} & \text{if } D^{(p)} = \mathrm{diag}(-1, 1).
    \end{cases}
\end{equation}
The overall transformation is then simply the product of these Pauli strings, which can be concatenated into a single layer of Pauli gates:
\begin{equation}\label{eq:pauli_gates_D}
    U_D = 
    \prod_{p \in [n]} W_p.
\end{equation}
Note that the order of this product does not matter, since Pauli gates commute up to an unobservable global phase.

\subsection{Discussion on suboptimality}

Now we observe that, depending on the parity of $\mu \in \{0, \ldots, 2n-2\}$, $U_{G_j}$ is either a single- or two-qubit gate. However, the decomposition of $Q$ described above is implicitly optimized under the assumption that only two-qubit gates are present:~each Givens rotation acts on two axes at a time, and it is assumed that this corresponds to physically acting on two wires at a time. This results in underutilized space in the quantum circuit whenever a single-qubit $Z$ rotation occurs, as it leaves a qubit wire needlessly idle. This is true for both the Reck \emph{et al.}~\cite{reck1994experimental} and Clements \emph{et al.}~\cite{clements2016optimal} designs. Ultimately, this suboptimality is due to the fact that $Q$ is a $2n \times 2n$ matrix, so there is a two-to-one correspondence between axes and qubits:~the rows/columns labeled by $(2p, 2p + 1)$ correspond to two Majorana operators, both of which are in turn associated with a single qubit $p$. Note that this discrepancy is not present in circuit designs for the class of number-conserving rotations~\cite{wecker2015solving,kivlichan2018quantum,jiang2018quantum}, which are instead more compactly represented by an $n \times n$ unitary matrix already.

\subsection{Circuit design with improved parallelization}

Now we introduce a circuit design which explicitly accounts for this two-to-one correspondence. The basic idea is to generalize the notion of Givens rotations, which act on a two-dimensional subspace to zero out a single matrix element, to a four-dimensional orthogonal transformation which zeroes out blocks of $2 \times 2$ at a time. Each $4 \times 4$ orthogonal transformation acts on the axes $(2p, 2p + 1, 2p + 2, 2p + 3)$, which corresponds to qubits $p$ and $p + 1$. By performing this process according to the scheme of Clements \emph{et al.}~\cite{clements2016optimal} (but now treating each $2 \times 2$ block of $Q$ as a ``single'' element), we obtain a decomposition wherein the optimal parallelization of the scheme is fully preserved in terms of interactions between nearest-neighbor qubits. Finally, each $4 \times 4$ orthogonal transformation is ultimately decomposed into six rotations of the form of Eq.~\eqref{eq:givens_rotation_gates} and a layer of Pauli gates, achieved by the standard decomposition that we described in Section~\ref{subsec:prior_circuit_design} (i.e., by bootstrapping off the prior scheme within blocks of $2n = 4$). Note that in principle one may instead implement the $4 \times 4$ orthogonal transformations using any gate set of one's choice, rather than $XX$ and $Z$ rotations.

We now describe the algorithm in detail. First we compute a decomposition analogous to the Clements \emph{et al.}~\cite{clements2016optimal} design, 
\begin{equation}\label{eq:Q_block_decomposition}
    Q = B_{R + 1} \cdots B_{R + L} D G B_{R} \cdots B_{1},
\end{equation}
but instead of Givens rotations, each $B_k$ acts nontrivially on a $4 \times 4$ block. (Note that there is a single $2 \times 2$ Givens rotation $G$ as well, which serves to zero out a final matrix element that we will elaborate on later.) We accomplish this by treating $Q$ as an $n \times n$ matrix of $2 \times 2$ blocks,
\begin{equation}\label{eq:left_mult_zero}
    Q_{\bm{p},\bm{q}} = \begin{pmatrix}
    Q_{2p,2q} & Q_{2p,2q+1}\\
    Q_{2p+1,2q} & Q_{2p+1,2q+1}
    \end{pmatrix},
\end{equation}
for each $p, q \in [n]$. Just as Givens rotations are chosen to zero a specific matrix element, each $B_k$ acts to zero out a particular $2 \times 2$ block $Q_{\bm{p},\bm{q}}$.

Suppose we want to find a $B_{R+i}$ which acts from the left ($i = 1, \ldots, L$) to zero out the block $Q_{\bm{p},\bm{q}}$. Then we perform a QR decomposition on the $4 \times 2$ submatrix which includes the target block and the block directly above it:
\begin{equation}
    \l(\begin{array}{c}
    Q_{\bm{p}-1,\bm{q}}\\
    \hline
    Q_{\bm{p},\bm{q}}
    \end{array}\r) = B_{R+i}' \l(\begin{array}{c c}
    * & *\\
    0 & *\\
    \hline
    0 & 0\\
    0 & 0
    \end{array}\r),
\end{equation}
hence zeroing out the lower block as desired. Here, $B_{R+i}' \in \Orth(4)$ is computed from the QR decomposition, and so the orthogonal matrix $B_{R+i} \in \Orth(2n)$ appearing in Eq.~\eqref{eq:Q_block_decomposition} is defined as $B_{R+i}'$ along the axes $(2p-2, 2p-1, 2p, 2p + 1)$ and the identity elsewhere.

Similarly, if we want a $B_{j}$ which acts from the right ($j = 1, \ldots, R$), then we consider instead a $2 \times 4$ submatrix with the target block on the left:
\begin{equation}\label{eq:2x4_block}
    \l(\begin{array}{c | c}
    Q_{\bm{p},\bm{q}} & Q_{\bm{p},\bm{q}+1}
    \end{array}\r).
\end{equation}
This can be zeroed out by performing an LQ decomposition (which is essentially just the transpose of the QR decomposition). For notation in this section, let tildes denote the flipping of rows in a matrix, for example
\begin{equation}
    M = \begin{pmatrix}
    M_{11} & M_{12}\\
    M_{21} & M_{22}\\
    M_{31} & M_{32}\\
    M_{41} & M_{42}
    \end{pmatrix} \mapsto \tilde{M} = \begin{pmatrix}
    M_{41} & M_{42}\\
    M_{31} & M_{32}\\
    M_{21} & M_{22}\\
    M_{11} & M_{12}
    \end{pmatrix}.
\end{equation}
Then performing an LQ decomposition on the row-flipped version of Eq.~\eqref{eq:2x4_block}, we have
\begin{equation}\label{eq:2x4_block_transpose_qr}
\begin{split}
    \l(\begin{array}{c | c}
    \tilde{Q}_{\bm{p},\bm{q}} & \tilde{Q}_{\bm{p},\bm{q}+1}
    \end{array}\r) &= \l(\begin{array}{c c | c c}
    * & 0 & 0 & 0\\
    * & * & 0 & 0
    \end{array}\r) B_{j}'\\
    &= \l(\begin{array}{c c | c c}
    0 & 0 & 0 & *\\
    0 & 0 & * & *
    \end{array}\r) \tilde{B}_{j}'.
\end{split}
\end{equation}
Flipping the rows back to normal on the lefthand side, we get
\begin{equation}
    \l(\begin{array}{c | c}
    Q_{\bm{p},\bm{q}} & Q_{\bm{p},\bm{q}+1}
    \end{array}\r) = \l(\begin{array}{c c | c c}
    0 & 0 & * & *\\
    0 & 0 & 0 & *
    \end{array}\r) \tilde{B}_{j}'
\end{equation}
as desired. Then we define $B_{j} \in \Orth(2n)$ acting as $\tilde{B}_{j}'$ on the axes $(2q, 2q+1, 2q+2, 2q+3)$ and trivially elsewhere.

Now we address the need for the sole Givens rotation $G$ appearing in Eq.~\eqref{eq:Q_block_decomposition}. As the zeroing-out procedure described above progresses, the nonzero blocks get ``pushed'' towards the diagonal until the final matrix is ($2 \times 2$)-block diagonal. These nonzero blocks must be triangular because they are produced by QR/LQ decompositions;~but since $Q$ is orthogonal, this implies that the final triangular blocks along the diagonal must be diagonal themselves. The exception to this is either the leftmost or rightmost block, depending on whether $n$ is even or odd. This is because the decomposition procedure inevitably leaves one of those blocks untouched, so it was never made triangular/diagonal.

This can be visualized as follows:~if $n$ is odd, then we have
\begin{equation}
    Q \to \l(\begin{array}{c c | c c | c c}
    * & * & * & * & * & *\\
    * & * & * & * & * & *\\
    \hline
    * & * & * & * & * & *\\
    * & * & * & * & * & *\\
    \hline
    \bm{0} & \bm{0} & * & * & * & *\\
    \bm{0} & \bm{0} & \bm{0} & * & * & *
    \end{array}\r)
    \to
    \l(\begin{array}{c c | c c | c c}
    * & * & * & * & * & *\\
    \bm{0} & * & * & * & * & *\\
    \hline
    \bm{0} & \bm{0} & * & * & * & *\\
    \bm{0} & \bm{0} & * & * & * & *\\
    \hline
    0 & 0 & * & * & * & *\\
    0 & 0 & 0 & * & * & *
    \end{array}\r)
    \to
    \l(\begin{array}{c c | c c | c c}
    * & * & * & * & * & *\\
    0 & * & * & * & * & *\\
    \hline
    0 & 0 & * & * & * & *\\
    0 & 0 & \bm{0} & * & * & *\\
    \hline
    0 & 0 & \bm{0} & \bm{0} & * & *\\
    0 & 0 & 0 & \bm{0} & * & *
    \end{array}\r)
    =
    \l(\begin{array}{c c | c c | c c}
    \pm 1 & 0 & 0 & 0 & 0 & 0\\
    0 & \pm 1 & 0 & 0 & 0 & 0\\
    \hline
    0 & 0 & \pm 1 & 0 & 0 & 0\\
    0 & 0 & 0 & \pm 1 & 0 & 0\\
    \hline
    0 & 0 & 0 & 0 & * & *\\
    0 & 0 & 0 & 0 & * & *
    \end{array}\r).
\end{equation}
We use boldface to clarify which matrix elements are newly zeroed at each step. The condition that $Q$ is an orthogonal matrix implies the final equality. It also enforces the remaining $2 \times 2$ block to be orthogonal, so that we can diagonalize it by computing the appropriate Givens rotation acting on axes $(2n-2, 2n-1)$. This elucidates the appearance of $G$ in Eq.~\eqref{eq:Q_block_decomposition}. On the other hand, if $n$ is even, then the top-left block remains instead:
\begin{align}
    Q &\to \l(\begin{array}{c c | c c | c c | c c}
    * & * & * & * & * & * & * & *\\
    * & * & * & * & * & * & * & *\\
    \hline
    * & * & * & * & * & * & * & *\\
    * & * & * & * & * & * & * & *\\
    \hline
    * & * & * & * & * & * & * & *\\
    * & * & * & * & * & * & * & *\\
    \hline
    \bm{0} & \bm{0} & * & * & * & * & * & *\\
    \bm{0} & \bm{0} & \bm{0} & * & * & * & * & *
    \end{array}\r)
    \to
    \l(\begin{array}{c c | c c | c c | c c}
    * & * & * & * & * & * & * & *\\
    * & * & * & * & * & * & * & *\\
    \hline
    * & * & * & * & * & * & * & *\\
    \bm{0} & * & * & * & * & * & * & *\\
    \hline
    \bm{0} & \bm{0} & * & * & * & * & * & *\\
    \bm{0} & \bm{0} & * & * & * & * & * & *\\
    \hline
    0 & 0 & * & * & * & * & * & *\\
    0 & 0 & 0 & * & * & * & * & *
    \end{array}\r)
    \to
    \l(\begin{array}{c c | c c | c c | c c}
    * & * & * & * & * & * & * & *\\
    * & * & * & * & * & * & * & *\\
    \hline
    * & * & * & * & * & * & * & *\\
    0 & * & * & * & * & * & * & *\\
    \hline
    0 & 0 & * & * & * & * & * & *\\
    0 & 0 & \bm{0} & * & * & * & * & *\\
    \hline
    0 & 0 & \bm{0} & \bm{0} & * & * & * & *\\
    0 & 0 & 0 & \bm{0} & * & * & * & *
    \end{array}\r)
    \to
    \l(\begin{array}{c c | c c | c c | c c}
    * & * & * & * & * & * & * & *\\
    * & * & * & * & * & * & * & *\\
    \hline
    * & * & * & * & * & * & * & *\\
    0 & * & * & * & * & * & * & *\\
    \hline
    0 & 0 & * & * & * & * & * & *\\
    0 & 0 & 0 & * & * & * & * & *\\
    \hline
    0 & 0 & 0 & 0 & \bm{0} & \bm{0} & * & *\\
    0 & 0 & 0 & 0 & \bm{0} & \bm{0} & \bm{0} & *
    \end{array}\r) \notag\\
    &\to
    \l(\begin{array}{c c | c c | c c | c c}
    * & * & * & * & * & * & * & *\\
    * & * & * & * & * & * & * & *\\
    \hline
    * & * & * & * & * & * & * & *\\
    0 & * & * & * & * & * & * & *\\
    \hline
    0 & 0 & \bm{0} & \bm{0} & * & * & * & *\\
    0 & 0 & 0 & \bm{0} & \bm{0} & * & * & *\\
    \hline
    0 & 0 & 0 & 0 & 0 & 0 & * & *\\
    0 & 0 & 0 & 0 & 0 & 0 & 0 & *
    \end{array}\r)
    \to
    \l(\begin{array}{c c | c c | c c | c c}
    * & * & * & * & * & * & * & *\\
    * & * & * & * & * & * & * & *\\
    \hline
    \bm{0} & \bm{0} & * & * & * & * & * & *\\
    0 & \bm{0} & \bm{0} & * & * & * & * & *\\
    \hline
    0 & 0 & 0 & 0 & * & * & * & *\\
    0 & 0 & 0 & 0 & 0 & * & * & *\\
    \hline
    0 & 0 & 0 & 0 & 0 & 0 & * & *\\
    0 & 0 & 0 & 0 & 0 & 0 & 0 & *
    \end{array}\r)
    = \l(\begin{array}{c c | c c | c c | c c}
    * & * & 0 & 0 & 0 & 0 & 0 & 0\\
    * & * & 0 & 0 & 0 & 0 & 0 & 0\\
    \hline
    0 & 0 & \pm 1 & 0 & 0 & 0 & 0 & 0\\
    0 & 0 & 0 & \pm 1 & 0 & 0 & 0 & 0\\
    \hline
    0 & 0 & 0 & 0 & \pm 1 & 0 & 0 & 0\\
    0 & 0 & 0 & 0 & 0 & \pm 1 & 0 & 0\\
    \hline
    0 & 0 & 0 & 0 & 0 & 0 & \pm 1 & 0\\
    0 & 0 & 0 & 0 & 0 & 0 & 0 & \pm 1
    \end{array}\r).
\end{align}
In this case, $G$ needs to act on axes $(0, 1)$.

Thus we have obtained the decomposition of Eq.~\eqref{eq:Q_block_decomposition} as desired. The implementation of each component then follows from bootstrapping the prior techniques:~the diagonal matrix $D$ becomes a layer of Pauli gates, described by Eq.~\eqref{eq:pauli_gates_D};~and the four-dimensional orthogonal transformations $B_j$ are further decomposed into Givens rotations, described in Section~\ref{subsec:prior_circuit_design} (wherein $n = 2$).

\section{Classical shadows postprocessing details}

In this section we provide details for the classical postprocesisng of local observable estimators from classical shadows. We include this for a self-contained and explicit presentation, and also to address the modified shadows protocols (subsystem symmetrization and spin adaptation) introduced in this paper. These algorithms are implemented at our open-source repository ({\small\url{https://github.com/zhao-andrew/symmetry-adjusted-classical-shadows}}).

\subsection{Matchgate shadows}\label{subsec:matchgate_computation}

For any orthogonal matrix $Q \in \Orth(2n)$, Ref.~\cite{wan2023matchgate} derived formulas involving the multiplication of $2n \times 2n$ matrices and the computation of Pfaffians of $2k \times 2k$ submatrices for estimating $k$-body Majorana observables. However when restricting $Q \in \B(2n)$, there exists a significantly cheaper method that does not involve such numerical linear algebra routines. This algorithm was implicitly described in Ref.~\cite{zhao2021fermionic}, but not explicitly outlined. We do so here;~for $k = \O(1)$, it runs in time $\O(n^k T)$ to return estimates for all $2j$-degree Majorana operators, $1 \leq j \leq k$, from $T$ samples. Note that the number of operators is $\O(n^{2k})$, so our approach has significant savings over a naive iteration. Furthermore, it largely involves integer storage and manipulations rather than floating-point operations.

Any $k$-body fermionic observable can be decomposed into a linear combination of polynomially many $(\leq k)$-body Majorana operators. Thus it suffices to consider $\Gamma_{\bm{\mu}}$, for all $\bm{\mu} \in \bigcup_{j \leq k} \comb{2n}{2j}$. Each matchgate-shadow sample $\hat{\rho}_{Q,b}$ is classically stored as $(Q, b)$, where $b \in \{0, 1\}^n$ and $Q$ is represented as an array $\pi$ of the permuted elements of $[2n]$ along with signs $s \in \{-1, +1\}^{2n}$. Specifically, the matrix elements of $Q \in \B(2n)$ are related to $(s, \pi)$ by $Q_{\mu\nu} = s_\mu \delta_{\pi(\mu), \nu}$.

The estimator for $\tr(\Gamma_{\bm{\mu}} \rho)$ can be written as $\tr(\Gamma_{\bm{\mu}} \hat{\rho}_{Q,b}) = f_{2j}^{-1} \ev{b}{U_Q \Gamma_{\bm{\mu}} U_Q^\dagger}{b}$, where $U_Q \Gamma_{\bm{\mu}} U_Q^\dagger$ can be expanded in terms of subdeterminants of $Q$ according to Ref.~\cite[Appendix~A]{chapman2018classical}. However, a simplified derivation is possible here by using the fact that $Q$ implements a signed permutation:
\begin{equation}
\begin{split}
    U_Q \gamma_\mu U_Q^\dagger &= \sum_{\nu \in [2n]} Q_{\nu\mu} \gamma_\nu\\
    &= \sum_{\nu \in [2n]} s_\nu \delta_{\pi(\nu),\mu} \gamma_\nu\\
    &= s_{\pi^{-1}(\mu)} \gamma_{\pi^{-1}(\mu)}.
\end{split}
\end{equation}
Hence for operators of degree $2j$,
\begin{equation}
\begin{split}
    U_Q \Gamma_{\bm{\mu}} U_Q^\dagger &= (-\i)^j U_Q \gamma_{\mu_1} \cdots \gamma_{\mu_{2j}} U_Q^\dagger\\
    &= (-\i)^j s_{\pi^{-1}(\mu_1)} \cdots s_{\pi^{-1}(\mu_{2j})} \gamma_{\pi^{-1}(\mu_1)} \cdots \gamma_{\pi^{-1}(\mu_{2j})}.
\end{split}
\end{equation}
We would like to retain the ordering of indices when working with the multidegree Majorana operators;~therefore we introduce a further a permutation as $\tilde{\pi}^{-1}(\mu_i)$, which is defined to satisfy $\tilde{\pi}^{-1}(\mu_1) < \cdots < \tilde{\pi}^{-1}(\mu_{2j})$. This incurs another sign factor $(-1)^p$ where $p \in \{0, 1\}$ is the parity of the permutation which sends $\pi^{-1}(\bm{\mu}) \mapsto \tilde{\pi}^{-1}(\bm{\mu})$. Collecting all signs as $\sgn_Q(\bm{\mu}) = (-1)^p s_{\pi^{-1}(\mu_1)} \cdots s_{\pi^{-1}(\mu_{2j})}$, we arrive at
\begin{equation}\label{eq:matchgate_shadow_estimator_permutation}
    \tr(\Gamma_{\bm{\mu}} \hat{\rho}_{Q,b}) = f_{2j}^{-1} \sgn_Q(\bm{\mu}) \ev{b}{\Gamma_{\tilde{\pi}^{-1}(\bm{\mu})}}{b}.
\end{equation}
The matrix element $\ev{b}{\Gamma_{\tilde{\pi}^{-1}(\bm{\mu})}}{b}$ is nonzero if and only if $\tilde{\pi}^{-1}(\bm{\mu}) \in \diags{2n}{2j}$, from which its value of $\pm 1$ is straightforward to determine (e.g., by mapping to Pauli-$Z$ operators). In total, evaluating Eq.~\eqref{eq:matchgate_shadow_estimator_permutation} takes time $\O(n^2 + j^2 + j)$, corresponding respectively to the inversion of $\pi \in \Sym(2n)$, the calculation of $\tilde{\pi}^{-1}(\bm{\mu})$ and its parity on $2j$ indices, and evaluating the product of $2j + 1$ signs and $\ev{b}{\Gamma_{\tilde{\pi}^{-1}(\bm{\mu})}}{b}$, the latter requiring only checking $2j$ indices and $j$ bits of $b$. Assuming $j \leq k = \O(1)$, this implies a computational complexity of $\O(n^2)$ per operator per sample.

To compute this estimator for all $\bm{\mu} \in \comb{2n}{2j}$, a naive approach iterates through each $\bm{\mu}$, of which there are $\binom{2n}{2j} = \O(n^{2j})$ many. Repeating this for each of the $T$ samples would therefore cost $\O(T (n^{2j} + n^2)) = \O(T n^{2j})$ time. Noting that $T = \Ot(n^{j} \epsilon^{-2})$ suffices for $\epsilon$-accurate estimation,\footnote{The notation $\Ot(\cdot)$ suppresses polylogarithmic factors in the complexity.} the total complexity of $\Ot(n^{3j} \epsilon^{-2}) \leq \Ot(n^{3k} \epsilon^{-2})$ would be unacceptably large.

We can speed up the computation over all operators per sample to $\O(n^j)$ by using the fact that many $\ev{b}{\Gamma_{\tilde{\pi}^{-1}(\bm{\mu})}}{b}$ vanish. That is, rather than compute $\tilde{\pi}^{-1}(\bm{\mu})$ for all $\bm{\mu} \in \comb{2n}{2j}$ and checking whether each is an element of $\diags{2n}{2j}$, we work backwards by looping over all target elements $\bm{\tau} \in \diags{2n}{2j}$ and computing $\pi(\bm{\tau})$ to find its preimage. As before, let $\tilde{\pi}(\bm{\tau}) \in \comb{2n}{2j}$ be the reordering of $\pi(\bm{\tau})$ with associated sign $(-1)^p$. Then for each $\bm{\tau} \in \diags{2n}{2j}$, we compute the estimator for $\Gamma_{\tilde{\pi}(\bm{\tau})}$,
\begin{equation}\label{eq:reversed_majorana_estimate}
    \tr(\Gamma_{\tilde{\pi}(\bm{\tau})} \hat{\rho}_{Q,b}) = f_{2j}^{-1} \sgn_Q(\pi(\bm{\tau})) \ev{b}{\Gamma_{\bm{\tau}}}{b},
\end{equation}
where the cumulative sign is $\sgn_Q(\pi(\bm{\tau})) = (-1)^p s_{\tau_1} \cdots s_{\tau_{2j}}$. All other Majorana operators not in the preimage are implicitly assigned an estimate of $0$. Hence we only iterate over the $\binom{n}{j} = \O(n^j)$ elements of $\diags{2n}{2j}$, with each evaluation of Eq.~\eqref{eq:reversed_majorana_estimate} taking $\O(j^2) = \O(1)$ time. Note that this approach also avoids the need to find the inverse permutation $\pi^{-1}$.

Performing this procedure over all $T$ samples results in a time complexity of $\O(n^j T) \leq \O(n^k T)$, running over all $j \in \{1, \ldots, k\}$. We can also include an additive $\O(n^{2k})$ cost to preallocate storage for $\bigcup_{j \leq k} \comb{2n}{2j}$. While not strictly necessary, this is convenient in practice, and besides when $T = \Ot(n^k \epsilon^{-2})$ the total complexity is $\Ot(n^{2k} \epsilon^{-2})$ whether or not we preallocate memory.

For the spin-adapted shadows, because the protocol factorizes across the spin sectors, we perform this algorithm on each sector independently. The estimator for operators of the form $\Gamma_{\bm{\mu}} \otimes \Gamma_{\bm{\nu}}$ is then the product of the independent estimates. Note that if either $|\bm{\mu}|$ or $|\bm{\nu}|$ are odd, then the estimator always vanishes;~this reflects the fact that the spin-adapted ensemble is not informationally complete over such operators.

\subsection{Pauli shadows}\label{subsec:pauli_computation}

Because single-qubit measurements factorize, we consider each qubit $i \in [n]$ independently. Given the random Clifford $C_i \in \Cl(1)$ and measurement outcome $b_i \in \{0, 1\}$, the estimator for $\sigma^{(i)} \in \{\I, X, Y, Z\}$ is~\cite{huang2020predicting}
\begin{equation}\label{eq:shadow_qubit_est}
    \tr(\sigma^{(i)} \hat{\rho}_{C_i, b_i}) = 3 \ev{b_i}{C_i \sigma^{(i)} C_i^\dagger}{b_i} - \tr(\sigma^{(i)}).
\end{equation}
Each Pauli-shadow sample is stored as $(W_i, b_i)$, where $W_i = C_i^\dagger Z C_i \in \pm\{X, Y, Z\}$. Evaluating Eq.~\eqref{eq:shadow_qubit_est} reduces to checking the conditions
\begin{equation}\label{eq:shadow_qubit_cases}
    \tr(\sigma^{(i)} \hat{\rho}_{C_i, b_i}) = \begin{cases}
    \pm 3 \ev{b_i}{Z}{b_i} & \text{if } \sigma^{(i)} = \pm W_i,\\
    1 & \text{if } \sigma^{(i)} = \I,\\
    0 & \text{else}.
    \end{cases}
\end{equation}
The product over $i \in [n]$ then estimates $P = \bigotimes_{i \in [n]} \sigma^{(i)}$ for the full $n$-qubit system. This suffices to estimate any $k$-local observable, which can be decomposed into a linear combination of polynomially many $(\leq k)$-local Pauli operators.

The total time complexity of estimating all $k$-local Pauli operators with $T$ snapshots is $\O(n^k T)$. The algorithm is as follows. For each $W = (W_0, \ldots, W_{n-1})$, we take, for each $j \leq k$, all $\binom{n}{j}$ combinations $W_{i_1}, \ldots, W_{i_j}$ and compute Eq.~\eqref{eq:shadow_qubit_cases} for each $\sigma^{(i)} = \pm W_i$. We assign the result as an estimate for the $j$-local operator $P = W_{i_1} \otimes \cdots \otimes W_{i_j} \otimes \I^{\otimes (n-j)}$, and implicitly assign $0$ to all other Pauli operators. Note that there are a total of $\sum_{j \leq k} 3^j \binom{n}{j} = \O(n^k)$ local Pauli operators, so preallocating storage here is asymptotically negligible.

For the subsystem-symmetrized protocol, the $n$-qubit estimator now takes the form
\begin{equation}
    \tr(P \hat{\rho}_{(\pi, C), b}) = 3^{|P|} \ev{b}{S_\pi C P C^\dagger S_\pi^\dagger}{b}.
\end{equation}
Using the fact that $S_\pi^\dagger \ket{b} = \bigotimes_{i \in [n]} \ket{\pi(b_i)}$, we can simply apply the standard scheme described above, but with the replacement $b_i \to \pi(b_i)$. For each sample this is only an additive $\O(n)$ cost.

\section{Additional details on numerical experiments}\label{sec:additional_numerics}

Here we supply further information regarding the numerical simulations, to both provide additional insight into our results and facilitate easier replication of our results by the motivated reader.

\subsection{Readout noise models}\label{subsec:readout_errors}

In the main text, we demonstrated our mitigation strategy under single-qubit readout errors. The noise channels occur immediately before measurement and are implemented probabilistically:~independently and identically (i.i.d.) on each qubit per circuit repetition. We consider depolarizing, amplitude-damping, and bit-flip errors occurring with probability $p$, which are respectively
\begin{align}
    \mathcal{E}_{\mathrm{dep}}(\rho) &= (1 - p) \rho + p \frac{\I}{2},\\
    \mathcal{E}_{\mathrm{AD}}(\rho) &= E_0 \rho E_0^\dagger + E_1 \rho E_1^\dagger,\\
    E_0 =& \begin{pmatrix}
    1 & 0\\
    0 & \sqrt{1 - p}
    \end{pmatrix}, E_1 = \begin{pmatrix}
    0 & \sqrt{p}\\
    0 & 0
    \end{pmatrix} \notag\\
    \mathcal{E}_{\mathrm{BF}}(\rho) &= (1 - p) \rho + p X \rho X.
\end{align}
These models obey Assumptions~1, although we comment that more complicated noise channels can also satisfy the assumptions, such as non-i.i.d.~errors, correlated multiqubit errors, and even coherent gate errors~\cite{chen2021robust}.

\subsection{QVM gate set and noise model}\label{subsec:noise_model_details}

The noise model we implement on the Cirq Quantum Virtual Machine is based on the Google Sycamore processor ``Rainbow,'' a 2D grid of 23 superconducting qubits. We use the calibration data obtained from November 16, 2021, which can be found in the Cirq open-source repository~\cite{cirq}. The native gate set that we compile our circuits to include single-qubit rotations in the form of phased XZ gates,
\begin{equation}
\begin{split}
    &\PhXZ(x, z, a)\\
    &= \begin{pmatrix}
    e^{\i \frac{\pi x}{2}} \cos\l(\frac{\pi x}{2}\r) & -\i e^{\i \pi(\frac{x}{2} - a)} \sin\l(\frac{\pi x}{2}\r)\\
    -\i e^{\i \pi(\frac{x}{2} + a + z)} \sin\l(\frac{\pi x}{2}\r) & e^{\i \pi(\frac{x}{2} + z)} \cos\l(\frac{\pi x}{2}\r)
    \end{pmatrix}\\
    &= Z^z Z^a X^x Z^{-a}.
\end{split}
\end{equation}
This describes a rotation by $\pi x$ about an axis determined by the parameter $a$ within the $xy$ plane, followed by a phasing of $\pi z$. The native two-qubit gates that we use are
\begin{equation}
\begin{split}
    \sqrt{\iSWAP} &= \begin{pmatrix}
    1 & 0 & 0 & 0\\
    0 & \frac{1}{\sqrt{2}} & \frac{\i}{\sqrt{2}} & 0\\
    0 & \frac{\i}{\sqrt{2}} & \frac{1}{\sqrt{2}} & 0\\
    0 & 0 & 0 & 1
    \end{pmatrix}\\
    &= e^{\i \frac{\pi}{4} (X \otimes X + Y \otimes Y) / 2},
\end{split}
\end{equation}
constrained to the nearest-neighbor connectivity of the chip.

\begin{figure*}
\centering
\includegraphics[scale=0.6]{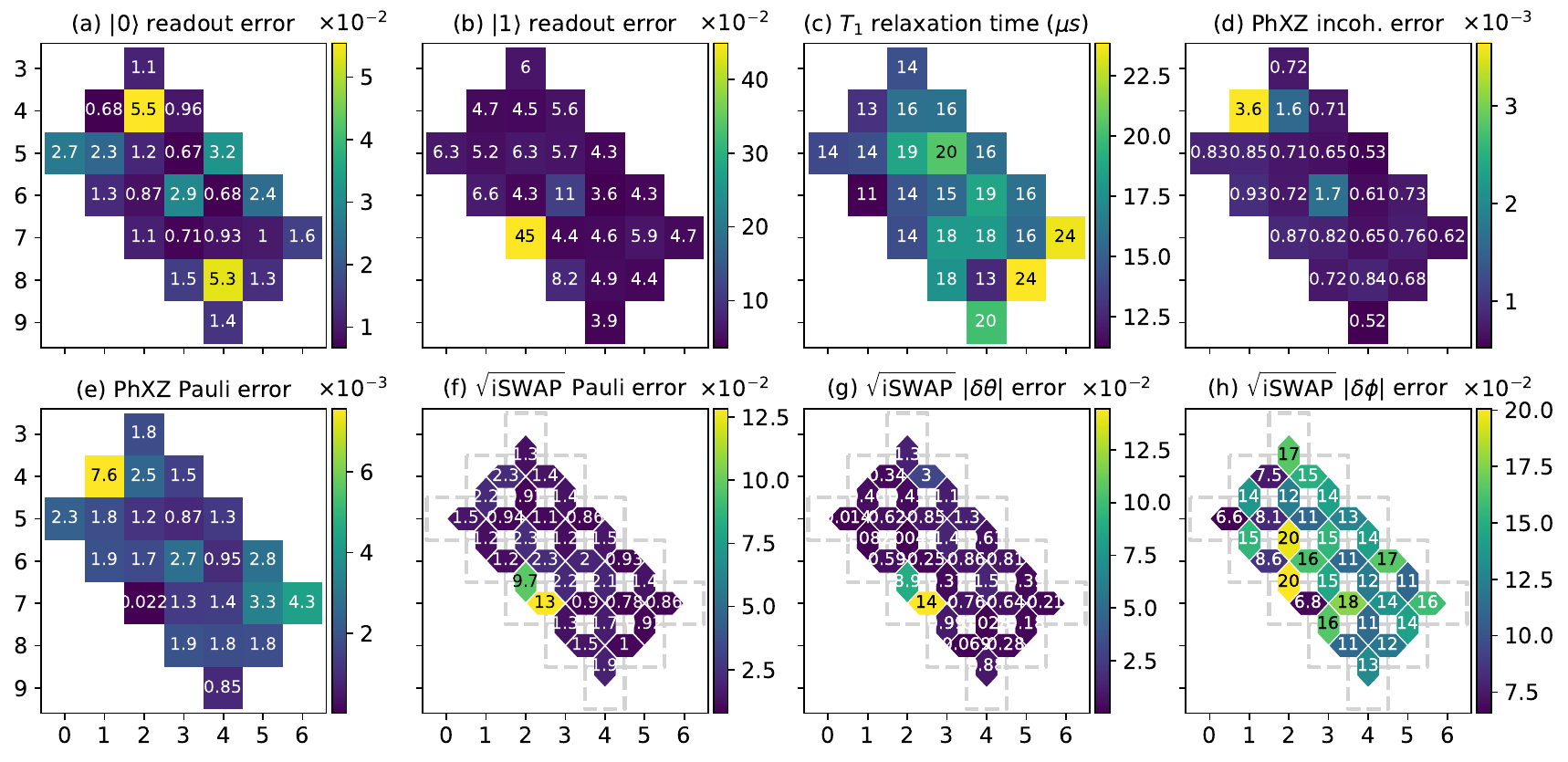}
\caption{\textbf{Chip layout, connectivity, and error rates of the Google Sycamore Rainbow processor, as simulated by the QVM.} Our eight-qubit numerical experiments use the $2 \times 4$ grid spanning from qubit $(5, 1)$ to qubit $(6, 4)$. \textbf{(a, b)} Readout error probabilities, measured in parallel. \textbf{(c)} Characteristic $T_1$ qubit decay times, measured in isolation. \textbf{(d)} Incoherent component of single-qubit gate errors, measured in isolation from RB and purity benchmarking data. Used in conjunction with $T_1$ to infer the $T_2$ dephasing time. \textbf{(e)} Total Pauli error of single-qubit gates, measured in isolation via RB. \textbf{(f)} Total Pauli error of two-qubit $\sqrt{\iSWAP}$ gates, measured in parallel via XEB. \textbf{(g, h)} Coherent errors in two-qubit gates $\sqrt{\iSWAP} = \fSim(\theta = -\frac{\pi}{4}, \phi = 0)$, measured in parallel from XEB data. We display the magnitudes $|\delta\theta|, |\delta\phi|$ for visualization purposes;~full calibration data (including signs) is available at the Cirq open-source repository~\cite{cirq}.}
\label{fig:calibration_data}
\end{figure*}

The QVM noise model that we simulate is not fully comprehensive of all types of errors occurring in an actual device, however it captures the most dominant error sources in the superconducting platform~\cite{isakov2021simulations}. It consists of four categories:
\begin{enumerate}
    \item Readout errors are modeled as asymmetric bit-flip channels on each qubit. The asymmetry reflects the fact that the probability of a $\ket{1}$ outcome being erroneously measured as $\ket{0}$ is generally higher than misreading a $\ket{0}$ outcome. Although the errors are modeled as single-qubit channels, the calibration data is taken from parallel experiments, to potentially account for effects such as readout crosstalk and other unintended interactions between qubits.

    \item Decay ($T_1$) and dephasing ($T_2$) errors occur whenever a qubit idles during a moment (layer) of a circuit. Both $T_1$ and $T_2$ relaxations are incorporated into a single channel,
    \begin{equation}
    \mathcal{E}_{\mathrm{idle}}(\rho) = \begin{pmatrix}
    1 - \rho_{11} e^{-t/T_1} & \rho_{01} e^{-t/T_2}\\
    \rho_{10} e^{-t/T_2} & \rho_{11} e^{-t/T_1}
    \end{pmatrix}.
    \end{equation}
    The decay time $T_1$ is characterized by a simple experiment that prepares $\ket{1}$ and measures the survival probability as a function of $t$. This experiment is performed in isolation, i.e., one qubit at a time while all other qubits on the chip idle.
    
    The $T_2$ time is determined from the equation
    \begin{equation}
    \frac{1}{T_2} = \frac{1}{2T_1} + \frac{1}{T_\phi},
    \end{equation}
    where $1/T_\phi$ is the pure dephasing rate that can in principle be measured by Ramsey interferometry. For simplicity, however, this noise model instead approximates $T_\phi$ from the total single-qubit incoherent error $\epsilon_{\mathrm{inc}}$, which is determined by purity benchmarking~\cite{wallman2015estimating,feng2016estimating} performed in isolation. To leading order, $T_\phi$ is approximated using the relation
    \begin{equation}
    \epsilon_{\mathrm{inc}} = \frac{t}{3T_1} + \frac{t}{3T_\phi} + \O(t^2).
    \end{equation}
    The time $t$ which appears in the model channel $\mathcal{E}_{\mathrm{idle}}$ is the longest gate duration occurring within that moment:~$\PhXZ$ gates have a duration of $25$~ns, while $\sqrt{\iSWAP}$ gates take $32$~ns.
    
    \item Single-qubit gate errors are modeled as depolarizing channels occurring after each gate. The depolarizing rate is set to match the total single-qubit Pauli error, which is measured from the device via randomized benchmarking (RB)~\cite{magesan2011scalable,magesan2012characterizing} in isolation.
    
    \item Two-qubit gate errors are modeled with both coherent and incoherent components. The coherent contribution uses the fact that $\sqrt{\iSWAP}$ is an instance of the general fermionic simulation ($\fSim$) gate,
    \begin{equation}
    \fSim(\theta, \phi) = \begin{pmatrix}
    1 & 0 & 0 & 0\\
    0 & \cos\theta & -\i\sin\theta & 0\\
    0 & -\i\sin\theta & \cos\theta & 0\\
    0 & 0 & 0 & e^{-\i\phi}
    \end{pmatrix},
    \end{equation}
    which is a native, tunable interaction on the superconducting platform. The $\sqrt{\iSWAP}$ gate is the instance $(\theta, \phi) = (-\frac{\pi}{4}, 0)$. Coherent errors are thus modeled as an overrotation by $(\delta\theta, \delta\phi)$, which are determined for each pair of connected qubits by fitting to cross-entropy benchmarking (XEB) data using random cycles of gates across the chip~\cite{boixo2018characterizing,neill2018blueprint,arute2019quantum}.
    
    After the coherent overrotation, an incoherent error follows, modeled as a two-qubit depolarizing channel. The depolarizing rate $r_{\mathrm{dep}}^{(i, j)}$ for each pair $(i, j)$ of connected qubits is inferred as follows:~from the total XEB Pauli error $r_{\mathrm{XEB}}^{(i, j)}$, we subtract off the single-qubit incoherent error rates $r_{\mathrm{inc}}^{(i)}, r_{\mathrm{inc}}^{(j)}$ (determined from RB), as well as the average entangling error rate $r_{\mathrm{ent}}^{(i, j)}$, which are calculated using the coherent errors $\delta\theta, \delta\phi$. The model's two-qubit depolarizing rate is then set to account for the remaining amount of error:
    \begin{equation}
    r_{\mathrm{dep}}^{(i, j)} = r_{\mathrm{XEB}}^{(i, j)} - r_{\mathrm{inc}}^{(i)} - r_{\mathrm{inc}}^{(j)} - r_{\mathrm{ent}}^{(i, j)}.
    \end{equation}
    Due to the nature of XEB, both two-qubit error sources are characterized by parallel experimental data.
\end{enumerate}
Further details of the noise model, its numerical implementation, and the calibration-data acquisition are described in Ref.~\cite{isakov2021simulations}, as well as in the Cirq repository~\cite{cirq}. For completeness, in Supplementary Figure~\ref{fig:calibration_data} we display a series of plots which show the chip connectivity and numerical values of the calibration data used for the various errors described above.

\subsection{Compiling circuits to the native gate set}

Single-qubit rotations are compiled into $\PhXZ$ gates according to an Euler-angle decomposition. Two-qubit unitaries are compiled into at most three $\sqrt{\iSWAP}$ gates (interleaved with single-qubit rotations) by a KAK decomposition, although most two-qubit unitaries (79\% with respect to the Haar measure) can be implemented with just two $\sqrt{\iSWAP}$ gates~\cite{huang2023quantum}. After compiling the entire circuit into this gate set, single-qubit rotations are concatenated into a single $\PhXZ$ gate whenever possible. All operations besides readout are pushed as early into the circuit as possible.

One exception we make is in the random permutation circuits $S_\pi$ appearing in the group $\SymCl{n}$ (for subsystem-symmetrized Pauli shadows). First, we decompose $\pi$ into an parallelized network of adjacent transpositions using an odd--even sorting algorithm~\cite{habermann1972parallel}. Each transposition $i \leftrightarrow j$ corresponds to a $\SWAP$ gate between qubits $i$ and $j$. However, rather than compile $\mathrm{SWAP}$ to the gate set directly (which would require three $\sqrt{\iSWAP}$ gates and four layers of $\PhXZ^{\otimes 2}$ gates), we instead implement the unitary
\begin{equation}
    \iSWAP = \sqrt{\iSWAP} \times \sqrt{\iSWAP},
\end{equation}
which uses only two $\sqrt{\iSWAP}$ gates and no single-qubit gates. The $\iSWAP$ gate differs from $\SWAP$ only by a phasing of $\i$ on the basis states $\ket{01}$ and $\ket{10}$. Such a replacement is valid because $S_\pi$ occurs only at the end of the circuit, immediately before readout. Thus while this phasing is technically unwanted, it has no observable effect on the measurement outcomes.

Finally, we note that the Trotter circuits for our Fermi--Hubbard simulations are optimized for the Sycamore architecture according to Ref.~\cite{arute2020observation}, which we follow closely. In particular, open-source code for their implementation can be found in Ref.~\cite{recirq}.

\subsection{Qubit assignment averaging}\label{subsec:qaa}

Our eight-qubit numerical experiments on the QVM utilize the $2 \times 4$ grid spanning from qubits $(5, 1)$ to $(6, 4)$ (see Supplementary Figure~\ref{fig:calibration_data}). To map these qubits to the simulated degrees of freedom (fermion modes or spin-$1/2$ particles), we employ qubit assignment averaging (QAA), which was introduced in Ref.~\cite{arute2020observation} in order to handle the issue of inhomogeneous error rates across a noisy quantum device. QAA works by identifying $N$ different assignments of the $n$ qubits and allocating $T/N$ of the experimental repetitions to each realization. Properties are estimated by averaging over all $T$ samples as usual. In principle, one can use a combination of shifting, rotating, and flipping the qubits throughout the chip;~for our simulations, we vary qubit assignments within the same fixed $2 \times 4$ grid.

For the Fermi--Hubbard model, we assign a spin sector to each of the parallel $1 \times 4$ qubit chains. We average over $N = 4$ different qubit assignments, defined by setting either the top or bottom chain as the spin-up chain, and ordering the four site labels starting either from the left or the right.

For the XXZ Heisenberg model, the eight-spin chain is embedded into the $2 \times 4$ grid of qubits. Each qubit assignment ($N = 12$) is defined by setting one of six qubits $\in \{(5, 1), (5, 2), (5, 3), (5, 4), (6, 4), (6, 1)\}$ as either the left end (ordered clockwise) or right end (ordered counterclockwise) of the spin chain.

While QAA aims to reduce device inhomogeneities, it  cannot lower the total amount of circuit noise. Thus QAA does not necessarily improve prediction accuracy with the unmitigated (standard shadow) estimators. Instead, homogenizing the noise appears to massage it into an effective form which approximately satisfies Assumptions~1 better than a single fixed configuration. We substantiate this claim with Supplementary Figure~\ref{fig:qaa_comparison}, using spin--spin correlations of the XXZ model ($R = 4$ Trotter steps) as a demonstrative example. We see that the unmitigated errors are virtually identical whether or not we perform QAA. On the other hand, the symmetry-adjusted estimates with QAA exhibits a more uniform error profile and overall improved noise suppression. Further investigation into this behavior is left as an open problem.

\begin{figure}
\centering
\includegraphics[scale=0.5]{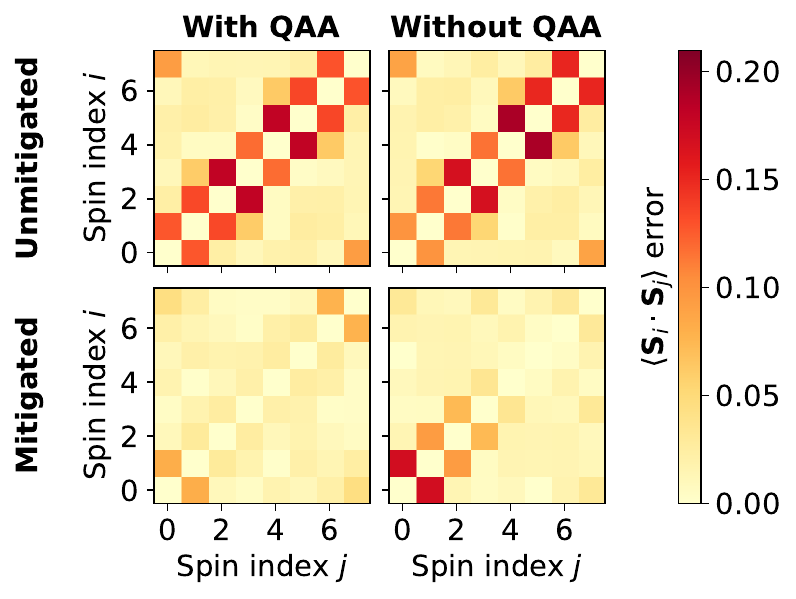}
\caption{\textbf{Behavior of qubit assignment averaging (QAA), demonstrated with the XXZ spin--spin correlations.} We illustrate using the experiment with $R = 4$ Trotter steps and $T = 4.8 \times 10^5$ subsystem-symmetrized Pauli shadows. Experiments with QAA average over the twelve configurations described in Section~\ref{subsec:qaa}, while experiments without QAA fix the qubit ordering $0 \mapsto (5, 1), \ldots, 7 \mapsto (6, 1)$.}
\label{fig:qaa_comparison}
\end{figure}

\subsection{Bootstrapping uncertainty bars}\label{subsec:bootstrap_error_bars}

To estimate uncertainty bars, we employ empirical bootstrapping~\cite{efron1992bootstrap}, modified by batching together samples. First we summarize the original method:~given $T$ classical-shadow snapshots, one resamples that data $T$ times with replacement. Then, averages $\hat{o}_j(T)$ (being either the unmitigated or mitigated estimators) are computed from that resampled data, yielding one bootstrap sample. Repeating this $B$ times and computing the standard deviation among those $B$ bootstrap samples yields the uncertainty bar.

Due to the size $T \sim 10^6$--$10^7$ from our simulations and limitations on classical compute resources, we perform bootstrapping on batches of snapshots. Split the $T$ samples into $K$ batches (each containing $T/K$ samples) and compute $\hat{o}_j^{(k)}(T/K)$ for each batch $k = 1, \ldots, K$. Because these estimates obey $\hat{o}_j(T) = (1/K) \sum_{k=1}^K \hat{o}_j^{(k)}(T/K)$, we resample the $K$ batches (rather than all $T$ shots) to bootstrap uncertainty bars for $\hat{o}_j(T)$. Depending on $T$, we set $K \sim 10^2$--$10^3$, and for all cases we take $B = 200$.

\subsection{Estimating the gate dependence of the QVM noise model}\label{subsec:gate_dep}

Here we provide an estimate of how much the QVM noise model violates Assumptions~1. We quantify this by computing a lower bound on the minimal observable error achievable by symmetry-adjusted classical shadows.

Let $U_{\mathrm{prep}}$ be the state-preparation circuit and $U_g$ a random measurement circuit. For Schur's lemma to hold (Assumptions~1), we require that the entire noisy circuit take the form $\mathcal{E} \mathcal{U}_g \mathcal{U}_{\mathrm{prep}}$, where $\mathcal{E}$ is the both time- and $g$-independent. While the noise model that we simulate is indeed time stationary and Markovian, the effective error channel $\mathcal{E} = \mathcal{E}_g$ depends on $g$. (This can be seen, for example, by commuting all the individual gate-level errors throughout $\widetilde{\mathcal{U}}_g$ and $\widetilde{\mathcal{U}}_{\mathrm{prep}}$ to the end of the circuit.)

In order to study this dependence on $g$, consider the decomposition
\begin{equation}
    \mathcal{E}_g = \mathcal{E}_0 + \Delta_g,
\end{equation}
where $\mathcal{E}_0$ is defined to be independent of $g \in G$. Although somewhat of an artificial decomposition, this is always mathematically possible with both $\mathcal{E}_0$ and $\Delta_g$ completely positive;~indeed, a trivial choice is $\mathcal{E}_0 = 0$. Our goal is to find the ``largest'' (in some sense) valid solution for $\mathcal{E}_0$. The remaining contribution $\Delta_g$ will then represent the minimal amount of assumption-violating noise in the model that our rigorous theory currently has no guarantees for.

From the decomposition above, the noisy measurement channel can be written as
\begin{equation}
\begin{split}
    \widetilde{\mathcal{M}} &= \E_{g \sim G} \mathcal{U}_g^\dagger \mathcal{M}_Z \mathcal{E}_g \mathcal{U}_g\\
    &= \widetilde{\mathcal{M}}_0 + \overline{\Delta},
\end{split}
\end{equation}
where
\begin{equation}
\begin{split}
    \widetilde{\mathcal{M}}_0 &= \E_{g \sim G} \mathcal{U}_g^\dagger \mathcal{M}_Z \mathcal{E}_0 \mathcal{U}_g\\
    &= \sum_{\lambda \in R_G} \widetilde{f}_\lambda(\mathcal{E}_0) \Pi_\lambda
\end{split} 
\end{equation}
is diagonal in the irreps of $G$, while the form of $\overline{\Delta} \coloneqq \E_{g \sim G} \mathcal{U}_g^\dagger \mathcal{M}_Z \Delta_g \mathcal{U}_g$ is unknown.

Applying $\mathcal{M}^{-1}$ and taking expectation values for the observables $\{O_j\}_{j=1}^L$ yields (assuming each $O_j \in V_\lambda$)
\begin{align}
    \vev{O_j}{\mathcal{M}^{-1} \widetilde{\mathcal{M}}}{\rho} &= \vev{O_j}{\mathcal{M}^{-1} \widetilde{\mathcal{M}}_0}{\rho} + \vev{O_j}{\mathcal{M}^{-1} \overline{\Delta}}{\rho} \notag\\
    &= \frac{\widetilde{f}_\lambda(\mathcal{E}_0)}{f_\lambda} \vip{O_j}{\rho} + \delta_j.
\end{align}
The terms $\delta_j \coloneqq \vev{O_j}{\mathcal{M}^{-1} \overline{\Delta}}{\rho}$ describe the deviation of observable estimates due to violations of the noise assumptions, which is precisely what we wish to quantify. For notation, denote the noisy expectations by $y_j \coloneqq \vev{O_j}{\mathcal{M}^{-1} \widetilde{\mathcal{M}}}{\rho}$ and noiseless expectations by $x_j \coloneqq \vip{O_j}{\rho}$. We collect these quantities into vectors of length $L$ and define the diagonal matrix $A \in \R^{L \times L}$ with eigenvalues $\widetilde{f}_\lambda(\mathcal{E}_0)/f_\lambda$ (in the appropriate positions corresponding to the irreps). This yields in the linear relationship
\begin{equation}
    \bm{\delta} = \bm{y} - A\bm{x}.
\end{equation}
This equation is underconstrained, so we opt for an estimate of $\bm{\delta}$ by bounding its norm from below. Namely, let $\hat{A}$ be a diagonal matrix of free parameters $0 \leq \xi_\lambda \leq 1$, which we optimize by nonnegative least-squares (NNLS) minimization:
\begin{equation}\label{eq:lower_bound_min}
    \| \bm{\delta} \|_2^2 \geq \min_{0 \leq \{ \xi_\lambda \}_{\lambda \in R'} \leq 1} \| \bm{y} - \hat{A}\bm{x} \|_2^2.
\end{equation}
Define $\hat{\bm{\delta}} \coloneqq \bm{y} - \hat{A}\bm{x}$ as the solution to this problem. In this sense, $\hat{\bm{\delta}}$ represents an error floor beyond which our theory for symmetry adjustment cannot mitigate due to inherent violations of Assumptions~1.

\begin{figure*}
\centering
\includegraphics[scale=0.5]{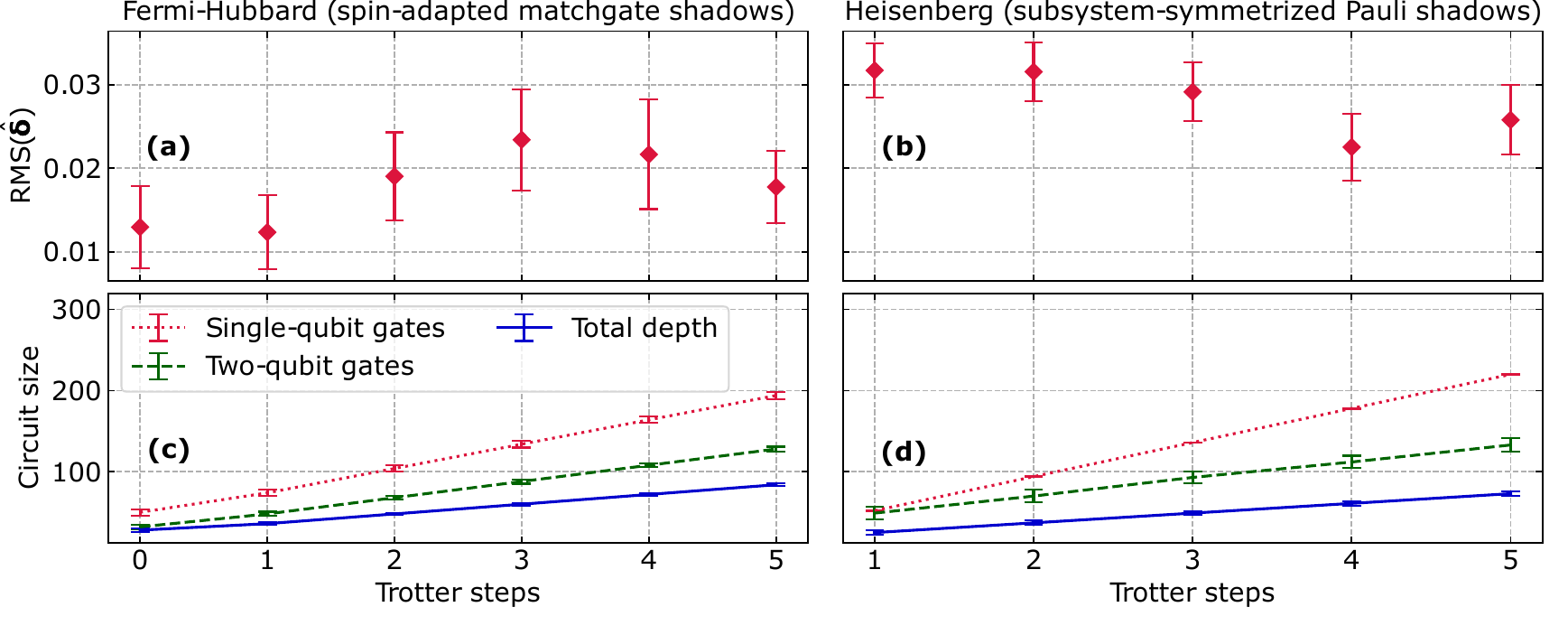}
\caption{\textbf{Estimated deviation of the QVM noise from a gate-independent model.} \textbf{(a, b)} Lower bound on the root-mean-square deviation of one- and two-body Majorana/Pauli expectation values as the number of Trotter steps (noisy circuit depth) grows. The estimated vector of deviations $\hat{\bm{\delta}}$ is described in Eq.~\eqref{eq:lower_bound_min}. Uncertainty bars are calculated by empirical bootstrapping. \textbf{(c, d)} Metrics for the size of the full circuit (state preparation via Trotterization and the random measurement unitary). We report the number of single-qubit ($\PhXZ$) and two-qubit ($\sqrt{\iSWAP}$) gates used, as well as the overall compiled circuit depth. Error bars denote one standard error of the mean.}
\label{fig:delta_study}
\end{figure*}

In Supplementary Figure~\ref{fig:delta_study} we plot the root mean square of $\hat{\bm{\delta}}$,
\begin{equation}
    \mathrm{RMS}(\hat{\bm{\delta}}) \coloneqq \frac{1}{\sqrt{L}} \| \hat{\bm{\delta}} \|_2,
\end{equation}
which quantifies the average additive error of the estimates. The observables we choose constitute local operators depending on the type of system simulated. For fermions, we consider one- and two-body Majorana operators that respect the spin adaptation. For qubits, we take strictly two-body Pauli operators. Uncertainty bars are bootstrapped as described in Section~\ref{subsec:bootstrap_error_bars}, where each bootstrap sample is obtained from the NNLS solution of the resampled data. We also show data for the Trotter circuit size:~the number of single- and two-qubit gates after compiling to the native gate set, as well as the circuit depth. Uncertainty bars here are given by one standard deviation in the size fluctuations due to the random unitaries $U_g$.

Overall, we assess that there is an error floor on the order of $10^{-2}$ per observable (recall that the observables have unit spectral norm). Interestingly, this lower bound appears roughly independent of circuit size (within uncertainty bars), perhaps indicating a saturation of the $g$-dependent contributions after a certain circuit size. In practice however, we have observed that symmetry-adjusted classical shadows only achieve mitigated errors on the order of $10^{-1}$ at the deepest circuits. We leave a closer analysis of this behavior, and whether this lower bound can actually be achieved, to future work.

\section{Review of prior symmetry-based QEM techniques}\label{sec:symmetry_qem_discussion}

In this section we review prior work on techniques broadly known as symmetry verification (SV), introduced by Bonet-Monroig \emph{et al.}~\cite{bonet2018low} and McArdle \emph{et al.}~\cite{mcardle2019error}, and generalized by Cai~\cite{cai2021quantum}. Like symmetry-adjusted classical shadows, these approaches take advantage of inherent symmetries of the quantum system, and certain formulations also accomplish this in an offline manner. However, symmetry adjustment is a fundamentally different idea, and we will discuss this distinction here. Note that, in principle many QEM strategies (and indeed quantum error correction itself) can be understood as symmetry-based techniques wherein one artificially builds large amounts of symmetry into the system;~we focus on techniques based on inherently possessed, physical symmetries here.

\subsection{Symmetry verification}

First we describe postprocessing symmetry verification (ppSV)~\cite{bonet2018low}. Let $M_S$ be the projector onto a symmetry subspace and $\rho$ an ideal state such that $M_S \rho M_S = \rho$. Given the preparation of a noisy state $\widetilde{\rho}$, ppSV aims to calculate the properties of the state
\begin{equation}
    \rho^{\mathrm{SV}} \coloneqq \frac{M_S \widetilde{\rho} M_S}{\tr(M_S \widetilde{\rho})}
\end{equation}
in an offline manner. That is, consider an observable $O$ which commutes with the symmetry. Then the error-mitigated estimate for $\langle O \rangle$ is
\begin{equation}
    \tr(O \rho^{\mathrm{SV}}) = \frac{\sum_{i=1}^m \tr(O S_i \widetilde{\rho})}{\sum_{i=1}^m \tr(S_i \widetilde{\rho})},
\end{equation}
where we have written $M_S = \frac{1}{m} \sum_{i=1}^m S_i$ in terms of the symmetry stabilizers $S_i$. This quantity requires measuring the (noisy) expectation values of all $S_i$ and $OS_i$ for $i = 1, \ldots, m$. Furthermore, the sampling cost to maintain the estimation accuracy $\epsilon$ is amplified by a factor of $\O(\tr(M_S \widetilde{\rho})^{-2})$. Bonet-Monroig \emph{et al.}~\cite{bonet2018low} show that this approach is equivalent to an instance of quantum subspace expansion~\cite{mcclean2017hybrid}, but with the ability to mitigate both coherent and incoherent errors outside the symmetry sector. Unfortunately, errors that commute with the symmetry cannot be projected out by this method. In general, SV projects $\widetilde{\rho}$ to the closest symmetry-respecting state, with no guarantee of closeness to the ideal state $\rho$.

This idea can be generalized to symmetry expansion (SE)~\cite{cai2021quantum}, which essentially replaces $M_S$ with the symmetry expansion operator
\begin{equation}
    E_S \coloneqq \sum_{i=1}^m w_i S_i,
\end{equation}
where $w_i$ are nonnegative weights that sum to unity. The symmetry-expanded pseudostate $\rho^{\mathrm{SE}} \coloneqq E_S \widetilde{\rho} / \tr(E_S \widetilde{\rho})$ is no longer guaranteed to lie in the desired symmetry sector, nor even to be a positive semidefinite operator. Hence SE exhibits estimation bias. However, by searching for weights that minimize this bias, one can heuristically achieve mitigated estimates with biased errors below that of the unmitigated noise level, while being simpler to implement than ppSV (for instance, by enforcing some of the $w_i = 0$).

Finally, direct symmetry verification (dSV)~\cite{mcardle2019error,bonet2018low} introduces additional quantum circuitry in order to check the symmetry value of the state throughout the course of the quantum computation. For example, in order to check the parity operator $S = Z_1 \cdots Z_n$, one can perform $n$ CNOT gates, controlled on each system qubit and targeting an ancilla, followed by reading off the ancilla. This encodes the parity information into the ancilla, and if at any point the ancilla returns the incorrect parity value then that circuit run is therefore discarded. Other symmetries may require more complicated circuitry;~for example, verifying the particle number $\eta$ requires $\O(n \log \eta)$ CPhase gates in total to read off the binary representation of $\eta$ from the ancilla. Although dSV requires significant additional coherent quantum control, the associated sampling overhead is $\O(\tr(M_S \widetilde{\rho})^{-1})$, a quadratic improvement over ppSV.

\subsection{Distinction from symmetry-adjusted classical shadows}

We now discuss the distinction between our method and these SV techniques. We will focus on ppSV, as it is the most comparable to symmetry-adjusted classical shadows. That is, rather dSV which requires coherent detection of symmetry violations, both ppSV and symmetry-adjusted classical shadows use classical postprocessing as their primary mechanism for error mitigation.

The most significant conceptual difference is that symmetry adjustment does not necessarily project the noisy state into the desired symmetry subspace. Instead, the symmetry is used as a reference point to calibrate the effects of the noise on the shadow measurement channel $\mathcal{M}$. Recall that we aim to estimate the eigenvalues of the channel $\widetilde{\mathcal{M}} = \sum_{\lambda} \widetilde{f}_\lambda \Pi_\lambda$ via the relation
\begin{equation}\label{eq:sacs_explanation}
    \widetilde{f}_\lambda = f_\lambda \frac{\tr(S_\lambda \widetilde{\rho})}{s_\lambda}.
\end{equation}
Thus, symmetry adjustment ultimately yields an estimate for the noisy channel, whose inverse is formally applied to produce the mitigated shadow
\begin{equation}
    \hat{\rho}^{\mathrm{EM}} = \sum_{\lambda} \frac{s_\lambda \Pi_\lambda(\hat{\rho})}{\tr(S_\lambda \rho)}.
\end{equation}
Here, $\hat{\rho}$ is the noisy shadow which converges to the noisy state $\widetilde{\rho} = \E[\hat{\rho}]$, but in fact any state could in principle be used as long as the noise channel is the same. For instance, suppose we prepare a different state $\sigma$ that experiences the same error channels that $\rho$ does. (This may hold under models where both states are prepared by circuits of the same structure but different gate angles, such as in variational quantum algorithms.) Then Eq.~\eqref{eq:sacs_explanation} tells us that the $\widetilde{f}_\lambda$ learned by the measurements of $\rho$ is equally applicable in the mitigation of properties of $\sigma$. Furthermore, the twirling nature of the randomized measurements allows the technique to account for errors that commute with the symmetry.

Beyond this conceptual difference, symmetry-adjusted classical shadows can easily incorporate symmetries such as particle number, whereas ppSV cannot. This is because the measurements of $\langle OS_i \rangle$ required of ppSV restrict the practical implementation to simple symmetries such as $\Z_2$ Pauli symmetries. More generally, it requires the decomposition of $M_S$ into a tractable number of measurable terms. However, the projector into the $\eta$-particle sector is
\begin{equation}
    M_S = \sum_{|x|=\eta} \op{x}{x},
\end{equation}
where each $\op{x}{x} = \prod_{j \in [n]} \frac{1}{2} (\I + (-1)^{x_j} Z_j)$ expands into exponentially many Pauli-$Z$ terms. While all such terms mutually commute and can therefore be measured in the same basis, the upfront classical postprocessing cost of exactly computing this ppSV estimator is nonetheless exponential.

Additionally, as a classical-shadows protocol, our method is tailored to tackle the problem of multiple observable estimation. In contrast, all forms of SV can complicate the task, namely due to the inclusion of the additional $OS_i$ operators. Recent work~\cite{jnane2023quantum} has attempted to address this by straightforwardly combining ppSV with classical shadows. However, they showed limited success with this unification. Indeed, suppose we wish to project into a parity sector. The parity operator is $n$-body, while the target observables $O_j$ are $k$-local for some small constant $k$. This means that each $O_j S$ has locality $n - k = \O(n)$, which leads to an exponentially large shadow norm (variance).

\end{document}